  \providecommand\BibTeX{{%
    \normalfont B\kern-0.5em{\scshape i\kern-0.25em b}\kern-0.8em\TeX}}}
\newcommand{\LTL}{\mathrm{LTL}}
\newcommand{\op}{\mathbf{\mathrm{op}}}
\newcommand{\always}{\Box}
\newcommand{\eventu}{\Diamond}
\newcommand{\Next}{\bigcirc}
\renewcommand{\implies}{\Rightarrow}
\newcommand{\until}{\ensuremath{\mathrel{\mathcal{U}}}\xspace}
\newcommand{\release}{\ensuremath{\mathrel{\mathcal{R}}}\xspace}
\newcommand{\rLTL}{\mathrm{rLTL}}
\newcommand{\ralways}{\boxdot}
\newcommand{\rimplies}{\Rrightarrow}
\newcommand{\reventu}{\Diamonddot}
\newcommand{\untildot}{\ensuremath{\mathrel{\topinset{$\cdot$}{$\until$}{2pt}{1pt}}}\xspace}
\newcommand{\releasedot}{\ensuremath{\mathrel{\topinset{$\cdot$}{$\release$}{1.2pt}{0.6pt}}}\xspace}
\newcommand{\nextdot}{\odot}
\newcommand{\ltl}{\mathrm{ltl}}
\newcommand{\thefragment}{\rLTL_{\backslash\{\rimplies\}}(\mathcal{P})}
\newcommand{\thelargerfragment}{\overline{\rLTL}_{\backslash\{\rimplies\}}(\mathcal{P})}
\newcommand{\suf}[1]{{#1..}}
\newcommand{\true}{\ensuremath{\mathit{true}}\xspace}
\newcommand{\false}{\ensuremath{\mathit{false}}\xspace}
\newcommand{\A}{\mathcal{A}}
\newcommand{\G}{\mathcal{G}}
\renewcommand{\P}{\mathcal{P}}
\newcommand{\K}{\mathcal{K}}
\newcommand{\T}{\mathcal{T}}
\newcommand{\B}{{\mathbb B}}
\newcommand{\bbit}{\B_5}
\newcommand{\N}{{\mathbb{N}}}
\newcommand{\cl}{\mathrm{cl}}
\newcommand{\bigO}{\mathcal{O}}
\newtheorem{theorem}{Theorem}[section]
\newtheorem{definition}{Definition}
\newtheorem{remark}{Remark}
\newtheorem{lemma}[theorem]{Lemma}
\newtheorem{proposition}[theorem]{Proposition}
\newtheorem{corollary}[theorem]{Corollary}
\newtheorem{example}{Example}
\newtheorem{problem}{Problem}
\newcommand{\specialcell}[2][c]{%
  \begin{tabular}[#1]{@{}c@{}}#2\end{tabular}}
\renewcommand{\reventu}{\diamonddot}
\begin{document}

\title[Being correct is not enough: efficient verification using robust linear temporal logic]{Being correct is not enough: efficient verification using\\robust linear temporal logic}

\author{Tzanis Anevlavis}
\address{Department of Electrical Engineering\\
University of California at Los Angeles,
Los Angeles, CA 90095}
\email{t.anevlavis@ucla.edu}
\email{tabuada@ucla.edu}

\author{Matthew Philippe}
\address{Universit\`e catholique de Louvain\\
Louvain, Belgium}
\email{matthew.philippe@uclouvain.be}

\author{Daniel Neider}
\address{Max Planck Institute for Software Systems\\
Kaiserslauten, Germany}
\email{neider@mpi-sws.org}

\author{Paulo Tabuada}

\thanks{This work was partially supported by the NSF grant 1645824, the CONIX Research Center, one of six centers in JUMP, a Semiconductor Research Corporation (SRC) program sponsored by DARPA, and the Deutsche Forschungsgemeinschaft (DFG, German Research Foundation) grant number 434592664.}

\renewcommand{\shortauthors}{Anevlavis, et al.}

\begin{abstract}
While most approaches in formal methods address system correctness, ensuring robustness has remained a challenge. In this paper we present and study the logic rLTL which provides a means to formally reason about both correctness and robustness in system design. Furthermore, we identify a large fragment of rLTL for which the verification problem can be efficiently solved, i.e., verification can be done by using an automaton, recognizing the behaviors described by the rLTL formula $\varphi$, of size at most $\bigO \left( 3^{ |\varphi|} \right)$, where $|\varphi|$ is the length of $\varphi$. This result improves upon the previously known bound of $\bigO \left(5^{|\varphi|} \right)$ for rLTL verification and is closer to the LTL bound of $\bigO \left( 2^{|\varphi|} \right)$.
The usefulness of this fragment is demonstrated by a number of case studies showing its practical significance in terms of expressiveness, the ability to describe robustness, and the fine-grained information that rLTL brings to the process of system verification. Moreover, these advantages come at a low computational overhead with respect to LTL verification. 
\end{abstract}

\maketitle

\section{Introduction}
\label{sec:intro}
As Cyber-Physical Systems (CPS) inevitably become increasingly complex, the ability to completely guarantee correctness of their design and implementation via exhaustive testing fades. Moreover, almost every aspect of contemporary life is becoming intertwined with CPSs, such smart grids, smart cities, mobility on demand and autonomous vehicles, and even medical devices. Consequently, in an attempt to reduce design errors, formal methods have been investigated to support modeling and verification of CPS and, in particular, of its reactive components. 

Most work in formal methods has focused on system correctness, i.e., in ensuring that systems are guaranteed to meet their design specifications. We argue that correctness is necessary, but not sufficient for a good design when a reactive system interacts with an ever-changing uncontrolled environment. To illustrate this point, just consider the correctness specifications for open reactive systems, which are typically written in the form of an implication:
\begin{equation}
\label{Eq:Specification}
	\varphi\Rightarrow \psi,
\end{equation}
where $\varphi$ is an environment assumption and $\psi$ is a system guarantee. In Linear Temporal Logic (LTL) the implication in \eqref{Eq:Specification} is equivalent to $\neg\varphi\lor \psi$, and, ergo, whenever the assumption $\varphi$ is violated the above specification yields no information on the guarantee. In other words, the system can behave arbitrarily. 
Thus, in addition to correctness, systems should also be designed to be robust, i.e., small deviations from the assumptions made at design time should lead to, at most, small violations of the design specifications. While it is hard to dispute that design time assumptions may not hold in the environments where systems will actually be deployed, since these may not be completely known at design time or may be evolving over time, how to formally describe robustness is a question that has not received enough attention despite the recent efforts described in Section~\ref{subsec:litrev}. In this paper, we address this problem by studying a recently developed logic, termed robust Linear-time Temporal Logic and abbreviated as rLTL, that allows to specify robustness. Its syntax closely mirrors that of LTL to lower the barriers to its adoption. Its semantics, however, is different in many regards. In particular, it is a many-valued logic so that one can reason about the different ways in which assumptions and guarantees can be violated. 

To shed more light into the mechanics of rLTL, consider the LTL formula $\always p$ with $p$ an atomic proposition. There is \emph{only one way} in which this formula is satisfied, namely when $p$ holds at every time step. In contrast, there are \emph{several ways} in which this formula can be violated over an infinite trace: (1) the worst possible violation occurs when $p$ fails to hold at every time step; (2) a slightly better scenario is where $p$ holds for at most finitely many time instants; (3) better yet would be that $p$ holds at infinitely many instants, while still failing to hold at infinitely many instants; (4) finally, among all the possible ways in which $\always p$ can be violated, the most preferable case would be the one where $p$ fails to hold for at most finitely many time instants. The semantics of rLTL is exactly designed to distinguish between these different ways. 

In preliminary work~\cite{tabuada2016rLTL}, we introduced a fragment of rLTL that only contained the always and eventually operators. We  showed, in that restricted context, that we can decide if a system satisfies an rLTL formula $\varphi$ by using an automaton with size $\bigO \left(5^{|\varphi|} \right)$, where $| \varphi |$ denotes the length of $\varphi$. The corresponding automaton for LTL has size $\bigO \left( 2^{|\varphi|} \right)$ and the change in the base of the exponential follows from the fact that LTL is a $2$-valued logic, whereas rLTL is $5$-valued. In this paper we offer a fragment of rLTL that also includes the next, until, and release operators, while placing a syntactic restriction on the antecedent of nested implications. For the proposed fragment of rLTL the verification problem can be solved more efficiently, i.e., by using an automaton of size $\bigO \left( 2^{ |\varphi| - \kappa(\varphi)} 3^{\kappa(\varphi)} \right)$, where $\kappa(\varphi)$ measures the number of unique subformulae of $\varphi$ that contain always and release operators. Construction of smaller automata is achieved by using temporal testers and exploiting properties of the proposed fragment. In particular, this provides the upper bound $\bigO \left( 3^{ |\varphi|} \right)$ on the size of this automaton which is closer to the LTL bound. In prior work~\cite{anevlavis2018rLTL} we achieve the same complexity bound, but for a smaller fragment of rLTL. The fragment studied in~\cite{anevlavis2018rLTL} does not contain the release operator and allows at most one implication operator at the outermost level. Consequently, the fragment proposed here is substantially larger as evidenced by the second of our case studies, detailed next.

To illustrate the usefulness of rLTL and the proposed fragment, we offer several case studies that demonstrate: (1) how rLTL can identify a non-robust system, whereas LTL cannot, as it does not provide information about the guarantee of an implication when the environment deviates from the modeling assumptions; (2) how the proposed fragment contains the most important reactivity patterns~\cite{dwyer1999patterns}; and (3) how the five truth values provide insightful information when a specification is violated, which can be useful for a designer seeking to improve the designed system and/or the specification. Moreover, the computational overhead associated with rLTL model-checking is relatively low with respect to LTL model-checking, and also rLTL model-checking, within the proposed fragment, scales similarly to LTL model-checking with respect to the size of the model-checked formula, as shown by our experiments.

\subsection{In a labyrinth of robustness}
\label{subsec:litrev}
A number of efforts has been made in order to express the ``correct'' notion of robustness with regards to cyber-physical systems in formal methods. In this section, we present an extensive, but not exhaustive, review of various formalizations of robustness. We begin by a series of approaches, which require the designer to provide information in addition to the desired specification. 

In~\cite{bloem2014synthrobsys}, two quantitative robustness concepts are combined in a common framework for robust synthesis. 
The first one, robustness for safety, looks at how often the assumptions and the guarantees are violated, and asks for their ratio to be bounded by $k\in\N$ (k-robustness). Counting is achieved through error functions provided by the designer. The second concept, robustness for liveness, considers specifications of the form $\land_{i\in I}\,\eventu\always p_i \implies \land_{j\in J}\,\eventu\always q_j$, where $p_i$, $q_j$ are atomic propositions, and then compares the number of violated assumptions to the number of violated guarantees. The rLTL semantics, even though being able to distinguish between the different ways in which a specification can be violated, does not distinguish between the violation of one assumption from the violation of multiple assumptions. 
Hence, the second approach cannot be compared to the one proposed here. Furthermore, we make no distinctions between safety and liveness properties. 

Moreover, in~\cite{bloem2019synthreacsys}, a different framework for robust synthesis is proposed, that does not encompass the one above. Different notions of robustness are considered, e.g., a robust system satisfies a guarantee, even though a finite number, or even all, of the inputs are hidden/misread, or even though the assumption is violated finitely/infinitely often. Many of the considered notions are incorporated in rLTL, and in fact our definition of robustness allows systems to satisfy weaker guarantees, whenever the assumptions are also weakened, which is more general.  Nonetheless, we cannot compare with the notions of robustness in~\cite{bloem2019synthreacsys} that count the number of violations, since the rLTL semantics distinguishes only between zero, finite, and infinite violations of a specification.

In the intriguing work of~\cite{rodionova2016logicfiltering}, a link between both MTL/LTL, and Linear Time-Invariant (LTI) filtering is established. Specifically, it is shown that LTI filtering corresponds to MTL if addition and multiplication are interpreted as max and min, and if true and false are interpreted as one and zero. By using different filtering kernels, one expresses weaker or stronger interpretations of the same formula. However, this burdens the designer both with the choice of kernels and the use multiple semantics to reason about how weakening the assumptions leads to weakening the guarantees. 

Contrary to all the approaches described so far, which require robustness metrics to be provided by the designer, when working with rLTL the designer only needs to provide the desired specification and no other information. Hence, we ease the designer's effort since it is not always clear which quantitative metric leads to the desired qualitative behavior.

Another interpretation of robustness is provided in [48] as the difference between the number of steps violating the guarantee and the number of steps violating the assumption of a reactive specification. A more robust reactive system produces smaller such differences, and robustness is evaluated quantitatively as a mean-payoff objective, averaged over all executions of the system. Alternatively, the discounted-sum can be used as a quantitative objective, which offers convergence properties over infinite runs. Moreover, this objective has been shown to pair well with qualitative LTL constraints, both in the reinforcement learning domain~\cite{wen2015rlwithltl}
and to obtain sound and efficient automata-based algorithms for quantitative reasoning ~\cite{bansal2021satisficing}.
Again, such approaches are not comparable to ours since rLTL distinguishes only between zero, finite, and infinite violations of a specification. 

In the domain of software systems,~\cite{zhang2020behavrobsoftsys} defines robustness as the largest set of deviating environmental behaviors under which the system still guarantees a desired property. Therefore, robustness is defined, with respect to a property, as the set of all deviations under which a system continues to satisfy that property. Although this work focuses on computing robustness, rather than characterizing it, it is possible that certain temporal deviations could be expressed in rLTL. Additional noteworthy works, although incomparable with the methods described here, are~\cite{chaudhuri2010contanalysisprograms} and~\cite{majumdar2009symbrobanalysis}, which consider continuity properties of software expressed by the requirement that a deviation in a program's input causes a proportional deviation in its output. Although natural, these notions of robustness only apply to the Turing model of computation and not to the reactive model of computation employed in this paper. 

A plethora of works exists regarding robustness of specifications when reasoning over real-valued, continuous-time signals, with the most prominent being~\cite{fainekos2009robtemplogcontsignals},~\cite{donze2010robtemplogrealval}. In these works, no discussion of the specific choices made when crafting the many-valued semantics is provided. Interestingly, though, the notion of ``time robustness'' in~\cite{donze2010robtemplogrealval} is close to the one of rLTL in the sense that it measures the time needed for the truth value of a formula to change. Nevertheless, in this line of work robustness is derived from the real-valued nature of the signals, whereas in rLTL, we reason over the more classical setting of discrete-time and Boolean valued signals, with robustness derived from the temporal evolution of these signals. Consequently, the works of~\cite{fainekos2009robtemplogcontsignals},~\cite{donze2010robtemplogrealval}, and their extensions can be considered of orthogonal and complementary nature to ours. 

Another relevant approach of multi-valued extensions of LTL is found in~\cite{almagor2016quality}. This work introduces two quantitative extensions of LTL, one by propositional quality operators termed $\LTL[\mathcal{F}]$, parameterized by a set $\mathcal{F}$ of functions over $[0,1]$,
and one by discounting operators termed $\LTL^{disc}[\mathcal{D}]$, parameterized by a set $\mathcal{D}$ of discounting functions. 
Both logics employ a many-valued variant of LTL to reason about quality, and the satisfaction value of a specification is a number in $[0, 1]$, which describes the quality of the satisfaction. The use of a many-valued semantics in the context of quality is as natural as in the context of robustness. In fact, it was shown in~\cite{tabuada2016rLTL} that by dualizing the semantics of rLTL in a specific sense we obtain a logic that is adequate to reason about quality. 
Nevertheless, there are strong conceptual differences between the approach taken in this paper and the approach in~\cite{almagor2016quality}. First, our notion of robustness or quality is intrinsic to the logic, while the approach in~\cite{almagor2016quality} requires the designer to provide their own interpretation in the form of the sets $\mathcal{F}$ or $\mathcal{D}$ of functions that parameterize the logic. Second, there are several choices to define the logical connectives on the interval $[0,1]$. As an illustration for the latter, note that there are three commonly used conjunctions: \L{}ukasiewicz's conjunction $a\land b=\max\{0,a+b-1\}$, G\"odel's conjunction $a\land b=\min\{a,b\}$, and the product of real numbers $a\land b=a \cdot b$ also known as Goguen's conjunction. Moreover, each such choice leads to a different notion of implication via residuation. Whether G\"odel's conjunction, used in~\cite{almagor2016quality}, is the most adequate to formalize quality is a question not addressed in~\cite{almagor2016quality}. In contrast, we carefully discuss and motivate all the choices made when defining the semantics of rLTL with robustness considerations. 

The last body of work related to the contents of this paper is~\cite{kupferman2007latticeautomata, almagor2014latticedsynthesis} on lattice automata and lattice LTL. The syntax of lattice LTL is similar to the syntax of LTL except that atomic propositions assume values on a finite lattice (which has to satisfy further restrictions such as being distributive). Although both lattice LTL as well as rLTL are many-valued logics, lattice LTL derives its many-valued character from the atomic propositions. In contrast, atomic propositions in rLTL are interpreted classically (i.e., they only assume two truth values). Therefore, the many-valued character of rLTL arises from the temporal evolution of the atomic propositions and not from the nature of the atomic propositions or their interpretation. In fact, if we only allow two truth values for the atomic propositions in lattice LTL, as is the case for rLTL, lattice LTL degenerates into LTL. Hence, these two logics capture orthogonal considerations, and results on lattice LTL and lattice automata do not shed light on how to address similar problems for rLTL.

\subsection{Beyond rLTL verification}
\label{subsec:morerltl}
This paper studies the verification problem for rLTL. As we have already mentioned, the rLTL semantics allows for specifying robustness and is able to distinguish between the different ways in which a specification is violated in a qualitative manner, that is, between zero, finitely, and infinitely many violations. Nonetheless, rLTL does not distinguish between different numbers of assumptions being violated and, hence, cannot count over bounded time segments where a specification is defined as a conjunction of assumptions, each corresponding to a time step. 

In addition to the verification problem, other problems for rLTL have been studied in the literature. The work in~\cite{tabuada2016rLTL} studies the synthesis problem for a restricted fragment that contains only the always and eventually operators, while~\cite{tabuada2015rLTLarxiv} extends the same results to full rLTL. Both works consider the environment to be antagonistic, which is not very realistic and leads to suboptimal controllers. This issue is addressed in~\cite{nayak2021rltlgames} by introducing adaptive strategies, which are not more complex than the classical ones, and take advantage of the environment making bad moves. Furthermore, the runtime monitoring problem, i.e., checking properties of infinite words based on a given finite prefix, is studied in~\cite{mascle202rltlmonitoring} in the context of rLTL. Finally, different shortcomings of LTL, other than the lack of robustness, such as the limited expressiveness and the lack of quantitative features, have been addressed by other extensions like Linear Dynamic Logic~\cite{vardi2011riseandfall} and Prompt-LTL~\cite{kupferman2009promptness} respectively. While the above logics and rLTL address each shortcoming separately, the work in~\cite{neider2019picktwo} shows how to combine any two of the aforementioned extensions and at the same time do not incur any additional complexity overhead from merging the logics.

\subsection{Outline of the paper}
The outline of the paper is as follows: in Section~\ref{sec:prelims}, the basic definitions of LTL and results on the LTL model-checking problem are reviewed. Following up in Section~\ref{sec:rLTLsyntaxsem}, the syntax of rLTL is presented, its semantics is thoroughly derived, and the necessity of the $5$ truth values is justified. Subsequently, Section~\ref{subsec:rltl2ltl} discusses the relationship between rLTL and LTL, and provides translations between the two. Section~\ref{sec:rLTLmodelchecking} defines the rLTL model-checking problem and settles its decidability. After that, the fragment for efficient rLTL model-checking is identified in Section~\ref{sec:augmentedfragment}, and refined complexity bounds for the rLTL model-checking are derived. A comprehensive case study on rLTL model-checking is conducted in Section~\ref{sec:casestudies}. Finally, concluding remarks are found in Section~\ref{sec:conclusion}.

\section{Preliminaries}
\label{sec:prelims}
\subsection{Notation}
Let $\N = \{0, 1, \ldots\}$ be the set of natural numbers and $\B = \{0, 1\}$ the set of Boolean values with $0$ interpreted as \false and $1$ interpreted as \true. For a set $A$, let $2^A$ be the \emph{powerset} of $A$, i.e., the set of all subsets of $A$, let $A^\omega$ be the set of all \emph{infinite sequences} of elements of $A$, and let $\mathrm{card}(A)$ denote the cardinality of $A$. An \emph{alphabet} $\Sigma$ is a finite, nonempty set whose elements are called \emph{symbols}. An \emph{infinite word} $\sigma$ is an infinite sequence $\sigma = a_0 a_1 \dots \in \Sigma^\omega$ of  symbols with $a_i \in \Sigma$, $i \in \N$. For an infinite word $\sigma = \sigma_0 \sigma_1\dots \in \Sigma^\omega$ and $i \in \N$, let $\sigma(i) = \sigma_i$ denote the $i$-th symbol of $\sigma$ and $\sigma_\suf{i}$ the (infinite) suffix of $\sigma$ starting at position $i$, i.e., $\sigma_\suf{i} = \sigma_{i} \sigma_{i+1} \ldots \in \Sigma^\omega$. Notice that $\sigma_\suf{0}=\sigma$.

\subsection{Review of Linear Temporal Logic (LTL) and LTL model-checking}
\label{sec:LTL}
In this section, we describe the syntax and semantics of \emph{Linear Temporal Logic} (LTL) and recall the model-checking problem. This will form the backdrop against which rLTL will be introduced in the next section. The syntax of LTL is defined as follows.

\begin{definition}[LTL syntax]
\label{def:LTLsyntax}
	Let $\P$ be a nonempty, finite set of atomic propositions. The set of all \emph{LTL formulae} on $\P$, written $ \LTL(\P) $, is the smallest set satisfying:
	\begin{itemize}
		\item each $p \in \P$ is an LTL formula; and 
		\item if $\varphi$ and $\psi$ are LTL formulae, then so are $\neg \varphi$, $\varphi \land \psi$,  $\varphi \lor \psi$, $\varphi \implies \psi$, $\Next \varphi$, $\eventu \varphi$, $\always \varphi$, $\varphi \until \psi$, and $\varphi \release \psi$. 
	\end{itemize}
The closure of $\varphi$, denoted by $\cl(\varphi)$, is the set of its distinct subformulae, defined as:
\begin{itemize}
	\item $\cl(p) = \{ p \}$, if $p \in \mathcal{P}$ (atomic propositions);
	\item $\cl( \op ( \varphi ) )=\{ \op ( \varphi ) \} \cup \cl( \varphi )$,  if $\op \in \{ \neg, \Next, \eventu, \always \}$ (unary operators); and
	\item $\cl( \op(\varphi,\psi) )=\{ \op(\varphi,\psi)\} \cup \cl( \varphi ) \cup \cl( \psi )$,  if $\op \in \{ \land, \lor, \implies, \until, \release \}$ (binary operators).
\end{itemize}
The length of a formula $\varphi \in \LTL(\P)$, defined as $|\varphi| = \mathrm{card}(\cl(\varphi))$, is the number of distinct subformulae it contains. 
\end{definition}

For notational convenience, we have added syntactic sugar in the above definition by including the operators $\land$, $\implies$, $\always$, $\eventu$, $\release$ with their usual meaning as part of the syntax, although they can be defined using the $\neg$, $\lor$, and $\until$ operators. 

Usually, the semantics of LTL are defined in terms of a satisfiability relation between an LTL formula over the set of atomic propositions $\P$ and an infinite word over $\Sigma = 2^\P$. Having in mind the rLTL version of these notions, we provide a mathematically equivalent definition of the semantics as a mapping $W$ assigning an infinite word $\sigma \in \Sigma^\omega$ and an LTL formula $\varphi$ to the element $W(\sigma,\varphi) \in \B$. 

\begin{definition}[LTL Semantics]
\label{def:LTLsemantics}
The \emph{LTL semantics} is a mapping $W : \left ( 2^\P\right )^{\omega} \times \LTL(\P) \rightarrow \{0,1\}$, inductively defined as follows for $p \in \P$ and \mbox{$\varphi, \psi \in \LTL(\P)$}:
\begin{itemize}
	\item For atomic propositions: $W(\sigma, p) = \begin{cases} 0, \quad \text{ if $p \notin \sigma(0)$,} \\ 1, \quad \text{ if $p \in \sigma(0)$.} \end{cases}$
	\item For logical connectives:
	 \begin{align*}
		&W(\sigma, \neg \varphi) = 1 - W(\sigma, \varphi),  &W(\sigma, \varphi \land \psi) = \min{\{ W(\sigma, \varphi), W(\sigma, \psi) \}},	\\
		&W(\sigma, \varphi \lor \psi) = \max{\{ W(\sigma, \varphi), W(\sigma, \psi) \}}, &W(\sigma,\varphi\implies\psi) = \max{\{ W(\sigma, \neg\varphi), W(\sigma, \psi) \}}.
	\end{align*}
	\item For temporal operators:
	\begin{align*} 
	&W(\sigma, \Next \varphi) = W(\sigma_\suf{1}, \varphi),  \qquad W(\sigma, \eventu \varphi) = \sup_{i \geq 0}{W(\sigma_\suf{i}, \varphi)}, \qquad W(\sigma, \always \varphi) = \inf_{i \geq 0}{W(\sigma_\suf{i}, \varphi)}, \\
	&W(\sigma, \varphi \until \psi)  = \sup_{j \geq 0} \min \left \{  W(\sigma_\suf{j}, \psi),  \inf_{0 \leq i < j}{W(\sigma_\suf{i}, \varphi)} \right \}, \\
	&W(\sigma, \varphi \release \psi)  = \inf_{j\ge 0}\max\left\{W(\sigma_\suf{j},\psi),\sup_{0\le i<j}W(\sigma_\suf{i},\varphi)\right\}.
	\end{align*}
\end{itemize}
\vspace{-2mm}
The evaluation of the mapping $W$ on the LTL formula $\varphi$ and the infinite word $\sigma$, $W(\sigma,\varphi)$, is the valuation of $\varphi$ over $\sigma$. 
\end{definition}

Having introduced the semantics of LTL, we now recall the problem of \emph{LTL model-checking}~\cite{clarke1999mcbook, clarke2018mchandbook, kesten1998algverltl, lichtenstein1985concprogltl, pnueli2008temporaltesters, schnoebelen2002mccomplexity, vardi1986automtheorver}, which is essential in formal in verification. Given a \emph{model} of a system, the question is to decide whether or not all possible executions of the model satisfy a specification. Traditionally, these models are described by \emph{Kripke structures}~\cite[Section 2.2]{clarke2018mchandbook}, and the specifications are described by LTL formulae. 
\begin{definition}[Kripke structures]
	A Kripke structure over a set $\P$ of atomic propositions is a quadruple \mbox{$\K = \left( Q, Q_0, R, \ell \right)$}, where $Q$ is a finite set of states, $Q_0 \subseteq Q$ is a set of initial states, $R \subseteq Q \times Q$ is a set of transitions, and $\ell : Q \rightarrow 2^\P$ is the labeling function  that associates each state with a set of atomic propositions.
	
A \emph{path} in $\K$ is an infinite sequence of states $q_0 q_1 \dots \in Q^\omega$ such that $q_0 \in Q_0$ and $(q_i, q_{i+1}) \in R$ for all $i \in \N$. Any path induces a corresponding \emph{computation} $\ell(q_0) \ell(q_1) \dots \in \left( 2^\P \right)^\omega$ .
\end{definition}

In addition to Kripke structures, B\"uchi automata are another ingredient in the solution of the LTL model-checking problem. LTL formulae can be translated to \emph{B\"uchi Automata}  (BA)~\cite[Section 4.3]{baier2008principlesmcbook},~\cite[Section 4.2]{clarke2018mchandbook} and the set of words they recognize. 

\begin{definition}[B\"uchi Automaton]
A $ \ $ \emph{(non-deterministic) $ \ $ B\"uchi $ \ $ Automaton} $ \ $ (BA) $ \ $ is $ \ $ a $ \ $ quintuple \mbox{$\A = (Q, \Sigma, Q_0, \Delta, F)$} consisting of a nonempty, finite set $Q$ of states, a (finite) input alphabet $\Sigma$, a set of initial states $Q_0 \subseteq Q$, a (nondeterministic) transition relation $\Delta \subseteq Q \times \Sigma \times Q$, and a set  $F \subseteq Q$ defining the acceptance conditions.

The \emph{run} of a BA on a word $\sigma \in \Sigma^\omega$ (also called \emph{input}) is an infinite sequence of states $q_0 q_1 \ldots \in Q^\omega$ satisfying $q_0 \in Q_0$ and $(q_i, \sigma(i), q_{i+1}) \in \Delta$ for all $i \in \mathbb N$. A run $\rho$ is called \emph{accepting} if at least one of its infinitely often occurring states is in $F$.
The \emph{language} of a BA $\A$, denoted by $L(\A)$, is the set of all infinite words $\sigma \in \Sigma^\omega$ for which an accepting run of $\A$ exists.
\end{definition}

The translation of an LTL formula $\varphi$ to a BA $\A_{\varphi}$ is done in two steps: 1) the LTL formula $\varphi$ is translated to a \emph{Generalized B\"uchi Automaton} (GBA); and 2) the GBA is translated to a BA. 

\begin{definition}[Generalized B\"uchi Automaton]
A \emph{(non-deterministic) Generalized B\"uchi Automaton} (GBA) is a quintuple \mbox{$\G = (Q, \Sigma, Q_0, \Delta, \mathcal F)$} consisting of a nonempty, finite set $Q$ of states, a (finite) input alphabet $\Sigma$, a set of initial states $Q_0 \subseteq Q$, a (nondeterministic) transition relation $\Delta \subseteq Q \times \Sigma \times Q$, and a set $\mathcal{F} \subseteq 2^Q$ denoting the acceptance conditions.

The \emph{run} of a GBA is defined analogously to the run of a BA. The difference is that a run is \emph{accepting} if its set of infinitely often occurring states contains at least one state from each accepting set in $\mathcal{F}$. Note that there may be no accepting sets, in which case any infinite run trivially satisfies this property. 
\end{definition}

Transforming a GBA $\G$ into an equivalent BA $\A$ requires creating $ \mathrm{card}(\mathcal{F})$ many copies of $\G$.
The acceptance set of copy $\G_i$ is connected to the states of copy $\G_{\mathrm{mod}(i+1,   \mathrm{card}(\mathcal{F})) }$, $i=1, \dots, \mathrm{card}(\mathcal{F})$. 
Then, the accepting condition for $\A$ asks that any accepting state of the first copy is visited infinitely often. This implies that the accepting sets of each copy are visited infinitely often too. For more details see~\cite[Section 4.3.4]{baier2008principlesmcbook}.

\begin{proposition}[Section 5.2,~\cite{baier2008principlesmcbook}]
\label{prop:LTL2BA}
	Any LTL formula $\varphi$ can be translated to a GBA $\G_{\varphi}$ with at most $2^{|\varphi|}$ states and at most $|\varphi|$ accepting conditions. Moreover, the GBA $\G_{\varphi}$ can be translated to a BA $\A_{\varphi}$ with at most $|\varphi| \cdot 2^{|\varphi|}$ states. 
\end{proposition}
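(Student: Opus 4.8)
The plan is to reconstruct, in a form tailored to Definitions~\ref{def:LTLsyntax} and~\ref{def:LTLsemantics}, the classical tableau (``local automaton'') construction underlying this statement; see~\cite{baier2008principlesmcbook}. First I would let $C$ be the set $\cl(\varphi)$ together with the negation of each of its elements, identifying $\neg\neg\psi$ with $\psi$, so that $\mathrm{card}(C) \le 2|\varphi|$. The states of $\G_\varphi$ are the \emph{elementary} subsets $B \subseteq C$: those that contain exactly one of $\psi$ and $\neg\psi$ for every $\psi \in \cl(\varphi)$; that respect the Boolean connectives in the obvious way (e.g. $\psi_1 \land \psi_2 \in B$ iff $\psi_1 \in B$ and $\psi_2 \in B$, with the analogous clauses for $\lor$ and $\implies$); and that satisfy the present-tense half of the fixpoint identities for the temporal operators, i.e. $\psi_2 \in B \Rightarrow \psi_1 \until \psi_2 \in B$, $\psi_1 \until \psi_2 \in B \wedge \psi_2 \notin B \Rightarrow \psi_1 \in B$, and the corresponding clauses for $\eventu$, $\always$, and $\release$. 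Since the first condition fixes the membership of exactly one representative of each complementary pair, there are at most $2^{|\varphi|}$ elementary sets, which already yields the bound on the number of states.

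Next I would put $(B, A, B') \in \Delta$ precisely when $A = B \cap \P$ and the ``next step'' unfolding identities hold: $\Next \psi \in B \Leftrightarrow \psi \in B'$; $\psi_1 \until \psi_2 \in B \Leftrightarrow \psi_2 \in B \vee (\psi_1 \in B \wedge \psi_1 \until \psi_2 \in B')$; $\eventu \psi \in B \Leftrightarrow \psi \in B \vee \eventu \psi \in B'$; and the dual identities for $\always$ and $\release$. The initial states are the elementary sets containing $\varphi$. The acceptance family $\mathcal{F}$ contains one set per ``eventuality'' subformula of $\varphi$: for $\psi_1 \until \psi_2 \in \cl(\varphi)$ the set $\{B : \psi_1 \until \psi_2 \notin B \text{ or } \psi_2 \in B\}$, and for $\eventu \psi \in \cl(\varphi)$ the set $\{B : \eventu \psi \notin B \text{ or } \psi \in B\}$; the greatest-fixpoint operators $\always$ and $\release$ contribute nothing here. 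There are at most $|\varphi|$ such eventuality subformulae, so $\mathrm{card}(\mathcal{F}) \le |\varphi|$, as required.

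The remaining work is to prove $L(\G_\varphi) = \{\sigma \in \Sigma^\omega : W(\sigma, \varphi) = 1\}$. I would show that an infinite word $\sigma$ is accepted iff it admits a run $B_0 B_1 \dots$ with $\psi \in B_i \Leftrightarrow W(\sigma_\suf{i}, \psi) = 1$ for all $i \in \N$ and all $\psi \in C$. One direction defines this canonical run from $\sigma$ and checks that it satisfies every local, transition, and acceptance constraint; the other direction is an induction on the structure of $\psi$ establishing the displayed equivalence, using that any accepting run is, at each position, an elementary set closed under the unfolding identities. The cases for $\Next$ and the Boolean connectives are immediate from Definition~\ref{def:LTLsemantics}. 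The step for the eventualities is the crux and the place I expect the main difficulty: the local and next-step identities alone only force $\psi_1 \until \psi_2$ to behave like a \emph{greatest} fixpoint, so some argument using the sets in $\mathcal{F}$ is needed to convert ``$\psi_1 \until \psi_2 \in B_j$ for all $j \ge i$'' into ``$\psi_2 \in B_j$ for some $j \ge i$'', matching the $\sup$/$\inf$ clauses of $W$; dualizing handles $\release$ and $\always$. The routine verifications of the Boolean and next-step clauses I would state tersely, spending the detail budget on this fairness step and on the $2^{|\varphi|}$ state count.

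Finally, for the translation to a BA I would invoke the GBA-to-BA conversion recalled immediately before the statement: it takes $\mathrm{card}(\mathcal{F})$ synchronized copies of $\G_\varphi$, so $\A_\varphi$ has at most $\mathrm{card}(\mathcal{F}) \cdot 2^{|\varphi|} \le |\varphi| \cdot 2^{|\varphi|}$ states, and an accepting run of $\A_\varphi$ projects onto a run of $\G_\varphi$ that meets every set of $\mathcal{F}$ infinitely often, whence $L(\A_\varphi) = L(\G_\varphi)$.
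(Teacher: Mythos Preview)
Your reconstruction of the tableau/elementary-sets construction is correct and is exactly the argument found in the cited reference~\cite{baier2008principlesmcbook}. Note, however, that the paper does not supply its own proof of this proposition: it is stated with an explicit citation to Section~5.2 of~\cite{baier2008principlesmcbook} and used as a black box, so there is no ``paper's proof'' to compare against beyond that reference. What you have written is essentially the textbook proof underlying the citation.
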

Therefore, the time complexity of translating an LTL formula $\varphi$ to a GBA $\A_{\varphi}$ is $\bigO \left( 2^{|\varphi|} \right)$, and that of translating an LTL formula $\varphi$ to a BA $\A_{\varphi}$ is $\bigO \left( |\varphi| \cdot 2^{|\varphi|} \right)$.

\begin{problem}[LTL model-checking]
\label{def:LTLmc}
Given a set of atomic propositions $\P$, a Kripke structure $\K$ and an LTL formula $\varphi$, do all the computations of $\mathcal{K}$ satisfy $\varphi$?
\end{problem}

The classical approach to solving Problem~\ref{def:LTLmc} has running time depending linearly on the size of the Kripke structure and exponentially on the length of the LTL formula.

\begin{corollary}[LTL model-checking]
\label{cor:LTLmc}
The standard procedure for model-checking an LTL formula $\varphi$ on a Kripke structure $\K$ is as follows~\cite[Section 5.2]{baier2008principlesmcbook},~\cite[Section 4]{clarke2018mchandbook}: \\
\hspace*{0.5em}1) Construct a BA $\A_{\K}$ such that $\A_{\K}$ accepts a computation $\pi \in \left( 2^\P \right)^\omega$if and only if $\pi$ is a computation of $\K$. The size of $\A_{\K}$ is linear in the size of $\K$, i.e., in its number of states denoted by $|\K|$. \\
\hspace*{0.5em}2) Construct a BA $\A_{\neg \varphi}$ recognizing the words satisfying the negation of $\varphi$, i.e., $\neg \varphi$. The size of $\A_{\neg \varphi}$ is exponential in $|\varphi|$, specifically $\bigO \left( |\varphi| \cdot 2^{|\varphi|} \right)$ by Proposition~\ref{prop:LTL2BA}. \\
\hspace*{0.5em}3) Compose $\A_{\K}$ with $\A_{\neg\varphi}$ to obtain $\A_{\K,\neg \varphi}$, which recognizes \emph{all the words of $L\left(\A_{\K}\right)$ that do not satisfy $\varphi$}, i.e., $L\left(\A_{\K,\neg\varphi}\right) = L\left(\A_{\K}\right) \cap L\left(\A_{\neg\varphi}\right)$. The size of $\A_{\K,\neg \varphi}$ is $\bigO \left( |\K| \cdot |\varphi| \cdot 2^{|\varphi|} \right)$.	\\
\hspace*{0.5em}4) Check the \emph{emptiness} of $L\left(\A_{\K,\neg\varphi}\right)$: if $\A_{\K,\neg \varphi}$ only recognizes the empty language, then $L(\A_{\K})$ satisfies $\varphi$. 

The time complexity of Step 4 in terms of the size of $\K$ and the length of the LTL formula $\varphi$ is:
\begin{equation}
\label{eq:LTLComplexity}
	\bigO \Big( |\K| \cdot |\varphi| \cdot 2^{|\varphi|} \Big).
\end{equation}
The above represents the classical, and tight, upper bound for the time complexity of LTL model-checking. 
\end{corollary}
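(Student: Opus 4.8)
The plan is to verify each of the four steps of the procedure in turn, establishing both the asserted correctness and the asserted size or time bound, and then to multiply the bounds through. None of the ingredients is new; the proof is an assembly of standard automata-theoretic constructions, so the work is in stitching them together and accounting for sizes.

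First, for Step~1 I would recall the standard encoding of a Kripke structure $\K = (Q, Q_0, R, \ell)$ as a BA $\A_\K$ over the alphabet $\Sigma = 2^\P$: keep (essentially) the same state set, make \emph{every} state accepting, and put $(q,a,q') \in \Delta$ iff $(q,q') \in R$ and $a = \ell(q')$, with a suitable convention handling the label of the initial state (e.g.\ via an auxiliary source state). A routine induction on prefixes shows that the runs of $\A_\K$ are in bijection with the paths of $\K$, and since $F = Q$ every infinite run is accepting; hence $L(\A_\K)$ is exactly the set of computations of $\K$. The state count is $|\K|$ up to an additive constant, as claimed. Step~2 is then a direct application of Proposition~\ref{prop:LTL2BA} to $\neg\varphi$: since $|\neg\varphi| \le |\varphi| + 1$, the automaton $\A_{\neg\varphi}$ has $\bigO\!\left(|\varphi| \cdot 2^{|\varphi|}\right)$ states and recognizes precisely the words $\sigma$ with $W(\sigma, \neg\varphi) = 1$, i.e.\ with $W(\sigma,\varphi)=0$.

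For Step~3 I would present the product construction computing the intersection of two B\"uchi automata. The only point needing care is that B\"uchi acceptance is not preserved by a naive product, so one augments the product state space with a two-valued flag (equivalently, builds a GBA with two accepting sets and converts it to a BA exactly as described before Proposition~\ref{prop:LTL2BA}) that alternates between ``waiting to revisit $\A_\K$'s accepting set'' and ``waiting to revisit $\A_{\neg\varphi}$'s accepting set.'' Because all states of $\A_\K$ are accepting, this flag degenerates and the product may be taken directly with acceptance inherited from $\A_{\neg\varphi}$; either way the state count is $\bigO\!\left(|\K| \cdot |\varphi| \cdot 2^{|\varphi|}\right)$, and a short argument on runs yields $L\!\left(\A_{\K,\neg\varphi}\right) = L(\A_\K) \cap L(\A_{\neg\varphi})$, which by Steps~1--2 is exactly the set of computations of $\K$ that violate $\varphi$.

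Finally, for Step~4 I would invoke the classical fact that nonemptiness of a BA is decidable in time linear in its number of states and transitions: a BA accepts some word iff some accepting state is both reachable from an initial state and lies on a cycle, which is detected by a single nested depth-first search (or by an SCC computation). Hence $L\!\left(\A_{\K,\neg\varphi}\right) = \emptyset$ iff no computation of $\K$ violates $\varphi$, i.e.\ iff every computation of $\K$ satisfies $\varphi$ --- this is the correctness of the reduction. Combining, the running time is linear in the size of $\A_{\K,\neg\varphi}$, namely $\bigO\!\left(|\K| \cdot |\varphi| \cdot 2^{|\varphi|}\right)$, which is \eqref{eq:LTLComplexity}; the matching lower bound making this tight I would cite from the literature rather than reprove. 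I do not expect a genuine obstacle: the step that most deserves attention is the B\"uchi product construction together with the observation that it simplifies here because $\A_\K$ has all states accepting, which keeps the size bound at $\bigO\!\left(|\K| \cdot |\varphi| \cdot 2^{|\varphi|}\right)$ rather than incurring an extra constant factor.
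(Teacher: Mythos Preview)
Your proposal is correct and follows exactly the approach the paper outlines; in fact, the paper does not give a separate proof of this corollary at all but merely states the four steps with citations to standard textbooks, so your write-up simply supplies the routine details (the Kripke-to-BA encoding, the product construction, and the linear-time emptiness test) that the paper leaves implicit. There is nothing to compare: the paper treats this as a textbook recall, and what you have written is the textbook argument.
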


In this work, we are concerned with solutions to the model-checking problem that employ a translation of formulae to GBAs as we described in the previous paragraph. For this reason, when discussing the complexity of the model-checking problem induced by different temporal logics, we focus on the size of the corresponding GBA. For example, for an LTL formula $\varphi$ the corresponding GBA has size $\bigO \left( 2^{|\varphi|} \right)$, and in Sections~\ref{sec:rLTLmodelchecking} and~\ref{sec:augmentedfragment} we will derive similar upper bounds for rLTL and an rLTL fragment respectively.

\section{The Syntax and Semantics of Robust Linear Temporal Logic}
\label{sec:rLTLsyntaxsem}

The main goal of \emph{robust Linear Temporal Logic} (rLTL) is to embed a notion of robustness into LTL. With this in mind, we crafted the syntax of rLTL so as to closely resemble that of LTL by using \emph{robust} versions of  LTL operators. In addition to defining the rLTL syntax, we also define the rLTL semantics in this section and justify the necessity of the many-valued semantics, i.e., the five rLTL truth values. We do so by first considering the $\rLTL_{\ralways,\reventu}(\P)$ fragment, i.e., the fragment of rLTL formulae that only allows the ``robust always'' $\ralways$ and ``robust eventually'' $\reventu$ temporal operators. This fragment was first introduced in~\cite{tabuada2016rLTL}, and in this section we extend those results from the fragment $\rLTL_{\ralways,\reventu}(\P)$ to full rLTL.

\subsection{rLTL Syntax}
\label{subsec:rLTLsyntax}
We begin by presenting the rLTL syntax.
\begin{definition}[$\mathrm{r}$LTL syntax]
\label{def:rLTLsyntax}
Let $\mathcal P$ be a nonempty, finite set of atomic propositions. The set of all \emph{rLTL formulae} on $\P$, written $ \rLTL(\P) $, is the smallest set satisfying:
\begin{itemize}
	\item each $p \in \P$ is an rLTL formula; and
	\item if $\varphi$ and $\psi$ are rLTL formulae, then so are $\neg \varphi$,  $\varphi \lor \psi$, $\varphi \land \psi$, $\varphi \rimplies \psi$, $\nextdot \varphi$, $\reventu \varphi$, $\ralways \varphi$, $\varphi \untildot \psi$, and $\varphi \releasedot \psi$.
\end{itemize}
The length of a formula $\varphi \in \rLTL(\P)$, denoted by $|\varphi|$, is the number of distinct subformulae it contains.
\end{definition}

Notice that in LTL, the conjunction and implication operators can be derived from negation and disjunction. This is no longer the case in rLTL since it has a many-valued semantics. On these grounds, we directly included conjunction and robust implication as part of the rLTL syntax in Definition~\ref{def:rLTLsyntax}. The same reason justifies the presence of the robust release operator $\releasedot$, which, in the case of LTL, can be derived from the until and negation operators as $\varphi\release \psi=\neg\left(\neg\varphi\until\psi\right)$.

\subsection{The $\rLTL_{\ralways,\reventu}(\P)$ fragment}
\label{subsec:rLTLboxdiam}
Consider, as a running example, the LTL formula $\always p$ with $p$ an atomic proposition. There is \emph{only one way} in which this formula can be satisfied, namely when $p$ holds at every time step. In contrast, there are \emph{several ways} in which this formula can be violated. Our goal is to find a semantics that distinguishes between these different ways. We aim for such distinction to be limited by what can be expressed in LTL so that we can easily leverage the wealth of existing results on LTL verification and synthesis. 

By intuitively investigating the different ways in which $\always p$ is violated over an infinite trace, we are able to distinguish the following four cases: (1) the worst possible violation occurs when $p$ fails to hold at every time step; (2) a slightly better scenario, which still violates $\always p$, is where $p$ holds for at most finitely many time instants; (3) better yet would be that $p$ holds at infinitely many instants, while still failing to hold at infinitely many instants; (4) finally, among all the possible ways in which $\always p$ can be violated, the most preferable case would be the one where $p$ fails to hold for at most finitely many time instants. Consequently, our robust semantics is designed to distinguish between satisfaction and these four possible different ways to violate $\always p$. However, as convincing as this argument might be, a question persists: in which sense can we regard these five alternatives as canonical?
\subsubsection{Why $5$ truth values?}
\label{subsubsec:why5boxdiam}
We answer this question by interpreting satisfaction of $\always p$ as a counting problem, and showing that there exists a partition of the infinite strings $\B^\omega = \{0,1\}^\omega$, which is induced by the $\LTL_{\always,\eventu}$ formulae. The semantics of the LTL $\always$ operator is given by:
\begin{align}
	W(\sigma,\always\varphi) = \inf_{i \in \N} W(\sigma_\suf{i},\varphi).
\end{align}
We make the following observation: the truth value of the LTL formula $\always \varphi$ on the infinite word $\sigma \in \Sigma^\omega$ is invariant under permutations of $W^{\omega}(\sigma,\varphi) = W(\sigma_\suf{0},\varphi) W(\sigma_\suf{1},\varphi) \cdots$. To make this observation precise, let $f : \N \rightarrow \N$ be a permutation of $\N$, i.e., a bijection. Then, we have:
\begin{align}
	\inf_{i \in \N} W(\sigma_\suf{i},\varphi) = \inf_{i \in \N} W(\sigma_\suf{f(i)},\varphi).
\end{align}
The above property shows that the $\always$ operator counts the number of zeros in the infinite string \mbox{$W^\omega(\sigma,\varphi) \in \{0,1\}^\omega$}. When this number is $0$, $W(\sigma, \always \varphi)$ is one (\true), and otherwise zero (\false). The position where zeros occur is not relevant, only their presence or their absence is. Towards characterizing how to count in $\LTL_{\always,\eventu}$, we first note that by successively applying permutations that swap position $i$ with position $i+1$ and leave all the remaining elements of $\N$ unaltered, we can transform any string $\rho \in \{0,1\}^\omega$ into one of the following forms $1^\omega, \ 0^k 1^\omega, \ (01)^\omega, \ 1^k 0^\omega, \ 0^\omega$, where $k \in \N$. Moreover, since $\LTL_{\always,\eventu}$ is stutter-free~\cite{peled1997stutterinvarTL}, it follows that the previous cases degenerate into the following five:
\begin{align}
\label{eq:5ways}
	1^\omega, \ 0 1^\omega, \ (01)^\omega, \ 1 0^\omega, \ \text{and } 0^\omega.
\end{align}
We interpret these as the ability to count if the number of zeros and ones is \emph{zero}, \emph{finite}, or \emph{infinite}. If we denote by $(z,o)$ the number of zeros and ones, with $z, o \in \{\text{zer}, \text{fin}, \text{inf}\}$, then we have:
\begin{itemize}
	\item $1^\omega$ corresponds to $(\text{zer},\text{inf})$.
	\item $01^\omega$ corresponds to $(\text{fin},\text{inf})$.
	\item $(01)^\omega$ corresponds to $(inf,\text{inf})$.
	\item $10^\omega$ corresponds to $(\text{inf},\text{fin})$.
	\item $0^\omega$ corresponds to $(\text{inf},\text{zer})$.
\end{itemize}
We can thus conclude the need for $5$ truth values to describe the $5$ different ways of counting zeros and ones. Concretizing this discussion, in particular \eqref{eq:5ways}, on the running example of the formula $\always p$, we obtain the following canonical forms that can be distinguished:
\begin{equation}
\label{Eq:CanonicalForms}
\{p\}^\omega, \quad 
\left(\{\neg p\} \{p\} \right)^+ \{p\}^\omega, \quad 
\left( \{ \neg p\} \{p\} \right)^\omega,\quad 
\left( \{ \neg p\} \{p\} \right)^+ \{ \neg p\}^\omega,\quad 
\text{and } \{ \neg p \}^\omega.
\end{equation}
It is no surprise that these are exactly the five cases discussed in the beginning of this subsection. 

The considerations in this section suggest the need for a semantics that is 5-valued rather than 2-valued so that we can distinguish between the aforementioned five cases. Therefore, we need to replace Boolean algebras by a different type of algebraic structure that can accommodate a 5-valued semantics. \emph{Da Costa algebras}, reviewed in the next section, are an example of such algebraic structures.
\subsubsection{da Costa Algebras}
\label{subsubsec:dacosta}
According to our running example $\always p$, the desired semantics should have one truth value corresponding to \true and four truth values corresponding to different shades of \false. It is instructive to think of truth values as the elements of $\B^4$, i.e., the four-fold Cartesian product of $\B$, that arise as the possible values of the $4$-tuple of LTL formulae:
\begin{equation}
\label{Eq:4TupleLTL}
(\always p,\eventu\always p, \always\eventu p, \eventu p).
\end{equation}

To ease notation, we denote such values interchangeably by $b = b_1 b_2 b_3 b_4$ and $b = (b_1, b_2, b_3, b_4)$ with $b_i\in \B$ for $i\in\{1,2,3,4\}$. The value 1111 then corresponds to \true since $\always p$ is satisfied. The most preferred violation of $\always p$ ($p$ fails to hold at only finitely many time instants) corresponds to $0111$, followed by $0011$ ($p$ holds at infinitely many instants and also fails to hold at infinitely many instants), $0001$ ($p$ holds at most at finitely many instants), and $0000$ ($p$ fails to hold at every time instant). Such preferences can be encoded in the linear order:
\begin{equation}
\label{eq:ordering}
	0000\prec 0001\prec 0011\prec 0111\prec 1111	,
\end{equation}
that renders the set:
\begin{align}
\label{eq:B5}
	\B_5 & = \left\{0000,0001,0011,0111,1111\right\}, 
\end{align}
a (bounded) distributive lattice with top element $\top=1111$ and bottom element $\bot=0000$. Formally, $\B_5$ is the subset of $\B^4$ consisting of the $4$-tuples $(b_1,b_2,b_3,b_4)\in \B^4$ satisfying the monotonicity property:
\begin{align}
\label{Monotonicity}
	i\leq_\N j\text{ implies }b_i\leq_\B b_j	,
\end{align}
where $i,j\in\{1,\ldots,4\}$, $\leq_\N$ is the natural order on the natural numbers and $\leq_\B$ is the natural order on the Boolean algebra $\B$. In $\B_5$, the meet $\sqcap$ can be interpreted as minimum and the join $\sqcup$ as maximum with respect to the order in~\eqref{eq:ordering}. We use $\sqcap$ and $\sqcup$ when discussing  lattices in general and use $\min$ and $\max$ for the specific lattice $\B_5$ or the Boolean algebra $\B$.

The first choice to be made in using the lattice $(\B_5,\min,\max)$ to define the semantics of $\rLTL_{\ralways,\reventu}(\P)$ is the choice of an operation on $\B_5$ modeling conjunction. It is well know that all the desirable properties of a many-valued conjunction are summarized by the notion of triangular-norm, see~\cite{hajeck1998fuzzylogic,novak1999fuzzylogic}. One can compare two triangular-norms $s$ and $t$ using the partial order defined by declaring $s\le t$ when $s(a,b)\le t(a,b)$ for all $a,b\in \B_5$. According to this order, the triangular-norm $\min$ is maximal among all triangular-norms (i.e., we have $t(a,b)\le \min\{a,b\}$ for every $a,b\in \B_5$ and every triangular-norm $t$). This shows that if we choose any triangular-norm $t$ different from $\min$, there exist elements $a,b\in \B_5$ for which we have $t(a,b)<\min\{a,b\}$. Hence, any choice different from $\min$ would result in situations where the value of a conjunction is \emph{smaller} than the value of the conjuncts, which is not reasonable when interpreting the value of the conjuncts as different shades of \false. To illustrate this point, consider the formula $\always p\land \always q$ and the word $\sigma=\emptyset \{p,q\}^\omega$. As introduced above, the value of $\always p$ on $\sigma$ corresponds to $0111$ and the value of $\always q$ on $\sigma$ corresponds to $0111$ since on both cases we have the most preferred violation of the formulae. Therefore, the value of $\always p\land \always q$ on $\sigma$ should also be $0111$ since the formula $\always p\land \always q$ is only violated a finite number of times. It thus seems natural\footnote{Note that there are situations where it is convenient to model conjunction differently. 
In the work of Bloem et~al.~\cite{bloem2010robustnessliveness}, the specific way in which robustness is modeled requires distinguishing between the number of conjuncts that are satisfied in the assumption $\land_{i\in I} \varphi_i$. This cannot be accomplished if conjunction is modeled by $\min$, and a different triangular-norm would have to be used for this purpose. Note that both \L{}ukasiewicz's conjunction as well as Goguen's conjunction 
have the property that their value decreases as the number of conjuncts that are true decreases.} to model conjunction in $\B_5$ by $\min$ and, for similar reasons, to model disjunction in $\B_5$ by $\max$. 

As in intuitionistic logic, our implication is defined as the residue of $\sqcap$\footnote{This is also done in context of residuated lattices that is more general than the Heyting algebras used in intuitionistic logic. Recall that a residuated lattice is a lattice $(A,\sqcap,\sqcup)$, satisfying the same additional conditions, and equipped with a commutative monoid $(A,\otimes,\textbf{1})$ satisfying additional compatibility conditions. Since we chose the lattice meet $\sqcap$ to represent conjunction, we have a residuated lattice where $\otimes=\sqcap$ and $\textbf{1}=\top$.}. In other words, we define the implication $a\rightarrow b$ by requiring that $c \preceq a \rightarrow b$ if and only if $c \sqcap a \preceq b$ for every $c\in \B_5$. This leads to:
\begin{align*}
a\rightarrow b= \begin{cases} 1111 & \text{if $a\preceq b$; and} \\ b & \text{otherwise.} \end{cases} 
\end{align*}

However, we now \emph{diverge} from intuitionistic logic (and most many-valued logics) where negation of $a$ is defined by $a\rightarrow 0000$. Such negation is not compatible with the interpretation that all the elements of $\B_5$, except for $1111$, represent (different shades of) \false and thus their negation should have the truth value $1111$. To make this point clear, we present in Table~\ref{Neg} the intuitionistic negation in $\B_5$ and the desired negation compatible with the  interpretation of the truth values in $\B_5$.  

\begin{table}[h]
	\centering
	\caption{Desired negation vs.\ intuitionistic negation in $\B_5$.}\label{Neg}
	\vspace{-2.5mm}
	\begin{tabular}{c@{\hskip 3em}cc}	
		\toprule
		Value & Desired negation & Intuitionistic negation\\
		\midrule
		1111 & 0000 & 0000\\
		0111 & 1111 & 0000\\
		0011 & 1111 & 0000\\
		0001 & 1111 & 0000\\
		0000 & 1111 & 1111\\
		\bottomrule
	\end{tabular}
\end{table}

What is then the algebraic structure on $\B_5$ that supports the desired negation, dual to the intuitionistic negation? This very same problem was investigated in~\cite{priest09dualintuitlogic}, and the answer is \emph{da~Costa} algebras. 

\begin{definition}[da~Costa algebra]
\label{Def:daCostaAlgebra}
A \emph{da Costa} algebra is a sextuple $(A,\sqcap,\sqcup,\preceq,\rightarrow,\overline{\,\cdot\,})$ where:
\begin{itemize} 
	\item $(A,\sqcap,\sqcup,\preceq)$ is a distributive lattice where $\preceq$ is the ordering relation derived from $\sqcap$ and $\sqcup$,
	\item $\rightarrow$ is the residual of $\sqcap$ (i.e., $a\preceq b\rightarrow c$ if and only if $a\sqcap b\preceq c$ for every $a,b,c\in A$),
	\item $a \preceq b\sqcup\overline{b}$ for every $a,b\in A$, and
	\item $\overline{a} \preceq b$ whenever $c\sqcup\overline{c}\preceq a\sqcup b$ for every $a,b,c\in A$.
\end{itemize}
\end{definition}

In a da~Costa algebra, one can define the top element $\top$ to be $\top=a\sqcup\overline{a}$ for an arbitrary $a\in A$; note that $\top$ is unique and independent of the choice of $a$. Hence, the third requirement in Definition~\ref{Def:daCostaAlgebra} amounts to the definition of top element, while the fourth requirement can be simplified to $\overline{a} \preceq b$ whenever $\top\preceq a\sqcup b$. We can easily verify that $\B_5$ is a da Costa algebra if we use the desired negation defined in Table~\ref{Neg}. 

It should be mentioned that working with a $5$-valued semantics has its price. The law of non-contradiction fails in $\B_5$ (i.e., $a \sqcap \overline{a}$ may not equal $\bot=0000$ as evidenced by taking $a=0111$). However, since $a \sqcap \overline{a}\prec 1111$, a weak form of non-contradiction still holds as $a \sqcap \overline{a}$ is to be interpreted as a shade of \false but not necessarily as the least preferred way of violating $a \sqcap \overline{a}$, which corresponds to $\bot$. Contrary to intuitionistic logic, the law of excluded middle is valid (i.e., $a \sqcup \overline{a}=\top=1111$). Finally, $a=0111$ shows that $\overline{\overline{a}}\ne a$, although it is still true that $\overline{\overline{a}}\rightarrow a$. Interestingly, we can think of the double negation:
\begin{align*}
	 \overline{\overline{a}} = \begin{cases} 1111, & \text{if $a=1111$,} \\ 0000, & \text{otherwise,} \end{cases}
\end{align*}
as quantization in the sense that \true is mapped to \true and all the shades of \false are mapped to \false. Hence, double negation quantizes the five different truth values into two truth values (\true and \false) in a manner that is compatible with our interpretation of truth values.

\subsubsection{Semantics of $\rLTL_{\ralways,\reventu}(\P)$ on da~Costa Algebras}
\label{subsubsec:semanticsboxdiam}
The semantics of $\rLTL_{\ralways,\reventu}(\P)$ is given by a mapping $V$, called valuation as in the case of LTL, that maps an infinite word $\sigma\in \Sigma^\omega$ and an $\rLTL_{\ralways,\reventu}(\P)$ formula $\varphi$ to an element of $\B_5$. In defining $V$, we judiciously use the algebraic operations of the da Costa algebra $\B_5$ to give meaning to the logical connectives in the syntax of $\rLTL_{\ralways,\reventu}(\P)$. In the following, let $\Sigma=2^\P$, where $\P$ is a finite set of atomic propositions. 

The semantics of rLTL is a mapping $V : \left(2^\P\right)^\omega \times \rLTL(\P) \rightarrow \B_5$. On atomic propositions $p\in \P$, $V$ is defined by:
\begin{align}
V(\sigma, p) = \begin{cases}
0000, & \text{if $p \notin \sigma(0)$; and} \\
1111, & \text{if $p \in \sigma(0)$.}
\end{cases}
\end{align}
Hence, atomic propositions are interpreted classically, i.e., only two truth values are used. Since we are using a $5$-valued semantics, we provide a separate definition for all the four logical connectives:
\begin{align}
V(\sigma,\neg\varphi) &= \overline{V(\sigma,\varphi)}, \\
V(\sigma,\varphi\land\psi) &= V(\sigma,\varphi)\sqcap V(\sigma,\psi), \\
V(\sigma,\varphi\lor\psi) &= V(\sigma,\varphi)\sqcup V(\sigma,\psi), \\
V(\sigma,\varphi\rimplies\psi) &= V(\sigma,\varphi) \rightarrow V(\sigma,\psi)	\label{eq:semanticsrimplies}.
\end{align}
Note how the semantics mirrors the algebraic structure of da Costa algebras. This is no accident since valuations are typically algebra homomorphisms.

Unfortunately, da Costa algebras are not equipped\footnote{One could consider developing a notion of da Costa algebras with operators in the spirit of Boolean algebras with operators~\cite{jonsson1951booleanalgebrasoperators}. We leave such investigation for future work.} with operations corresponding to $\ralways$ and $\reventu$, the robust versions of $\always$ and $\eventu$, respectively. Therefore, we resort to the counting interpretation in Section~\ref{subsubsec:why5boxdiam} to motivate the semantics of $\ralways$. Formally, the semantics of $\ralways$ is given by:
\begin{align}
\label{eq:ralwayssemantics}
V(\sigma,\ralways\varphi)=\left(\inf_{i\ge 0}V_1(\sigma_\suf{i},\varphi),~ \sup_{j\ge 0}\inf_{i\ge j}V_2(\sigma_\suf{i},\varphi),~ \inf_{j\ge 0}\sup_{i\ge j}V_3(\sigma_\suf{i},\varphi),~ \sup_{i\ge 0}V_4(\sigma_\suf{i},\varphi)\right),
\end{align}
where $V_k(\sigma,\varphi)=\pi_k\circ V(\sigma,\varphi)$ for $k\in \{1,2,3,4\}$ and $\pi_k:\B_5\to \B$ are the projections on the $k$-th element of a truth value defined by:
\begin{equation}
\label{Eq:Projection} 
\pi_k(a_1,a_2,a_3,a_4)=a_k.
\end{equation}

To illustrate the semantics of $\ralways$, let us consider the simple case where $\varphi$ is just an atomic proposition $p$. This means that one can express $V(\sigma,\ralways p)$ in terms of the LTL valuation $W$ by:
\begin{align}
\label{Decompose}
V(\sigma,\ralways p)=\left(W(\sigma,\always p),W(\sigma,\eventu\always p),W(\sigma,\always\eventu p),W(\sigma,\eventu p)\right).
\end{align}
In other words, $V_1(\sigma,\ralways p)$ corresponds to the LTL truth value of $\always p$, $V_2(\sigma, \ralways p)$ corresponds to the LTL truth value of $\eventu\always p$, $V_3(\sigma, \ralways p)$ corresponds to the LTL truth value of $\always\eventu p$, and $V_4(\sigma, \ralways p)$ corresponds to the LTL truth value of $\eventu p$. Equation~\eqref{Decompose} connects the semantics of $\ralways$ to the counting problems described in Section~\ref{subsubsec:why5boxdiam} and to the $4$-tuple of LTL formulae in~\eqref{Eq:4TupleLTL}. 

The last operator is $\reventu$, whose semantics is given by:
\begin{align}
V(\sigma,\reventu\varphi)=\left(\sup_{i\ge 0}V_1(\sigma_\suf{i},\varphi),~ \sup_{i\ge 0}V_2(\sigma_\suf{i},\varphi),~ \sup_{i\ge 0}V_3(\sigma_\suf{i},\varphi),~ \sup_{i\ge 0}V_4(\sigma_\suf{i},\varphi)\right).
\end{align}
According to the counting problems used in Section~\ref{subsubsec:why5boxdiam} to motivate the proposed semantics, there is only one way in which the LTL formula $\eventu p$, for an atomic proposition $p$, can be violated. Hence, $V(\sigma,\reventu p)$ is one of only two possible truth values: 1111 or 0000. We further note that $\reventu$ is not dual to $\ralways$, as expected in a many-valued logic where the law of double negation fails.

Having defined the semantics of $\rLTL_{\ralways,\reventu}(\P)$, let us now see if the formula $\ralways p\rimplies \ralways q$, where $\ralways p$ is an environment assumption and $\ralways q$ is a system guarantee with $p,q\in\P$, lives to the expectations set in the introduction and to the intuition provided in Section~\ref{subsubsec:why5boxdiam}. \\
\hspace*{0.5em}1) According to~\eqref{Decompose}, if $\always p$ holds, then $\ralways p$ evaluates to $1111$ and the implication $\ralways p\rimplies \ralways q$ is \true, i.e., the value of $\ralways p\rimplies \ralways q$ is 1111, if $\ralways q$ evaluates to $1111$, that is, if $\always q$ holds. Therefore, the desired behavior of $\always p\implies \always q$, when the environment assumptions hold, is retained.	\\
\hspace*{0.5em}2) Consider now the case where $\always p$ fails but the weaker assumption $\eventu\always p$ holds. In this case $\ralways p$ evaluates to $0111$ and the implication $\ralways p \rimplies \ralways q$ is \true if $\ralways p$ evaluates to $0111$ or higher. This means that $\eventu\always q$ needs to hold. \\
\hspace*{0.5em}3) A similar argument shows that we can also conclude the following consequences whenever $\ralways p\rimplies \ralways q$ evaluates to $1111$: $\always\eventu q$ follows whenever the environment satisfies $\always\eventu p$ and $\eventu q$ follows whenever the environment satisfies $\eventu p$.
We thus conclude that the semantics of $\ralways p\rimplies \ralways q$ captures the desired robustness property by which a weakening of the assumption $\ralways p$ leads to a weakening of the guarantee $\ralways q$. 
\subsection{Full $\rLTL(\P)$}
\label{subsec:fullrLTL}
In this section, we present the semantics of full $\rLTL(\P)$ by providing the semantics for three additional operators: the ``robust release'', denoted by $\releasedot$, the ``robust until'', denoted by $\untildot$, and the ``next'', denoted by $\nextdot$, operators. 

\subsubsection{Why $5$ truth values?}
\label{subsubsec:why5fullrLTL}
We revisit this question to better motivate the full rLTL semantics. According to the safety-progress classification of temporal properties, eloquently put forward in~\cite{chang1993safetyprogress, manna1987hierarchy}, $\always p$ defines a safety property. It can be expressed as $A(L)$ with $L$ being the regular language $\Sigma^* \{p\}$ and $A$ the operator generating all the infinite words in $\left(2^\P\right)^\omega$ with the property that all its finite prefixes belong to $L$. In addition to $A$, we can find in~\cite{chang1993safetyprogress, manna1987hierarchy} the operators $E$, $R$, and $P$ defining guarantee, response, and persistence properties, respectively. The language $E(L)$ consists of all the infinite words that contain at least one prefix in $L$, the language $R(L)$ consists of all the infinite words that contain infinitely many prefixes in $L$, and the language $P(L)$ consists of all the infinite words such that all but finitely many prefixes belong to $L$. Using these operators we can reformulate the semantics of $\ralways p$ from \eqref{eq:ralwayssemantics} as:
\begin{equation}
\label{SemanticsViaAPRE}
V(\sigma,\ralways p) = \begin{cases}
	1111 & \text{if $\sigma \in A(L)$;} \\
	0111 & \text{if $\sigma \in P(L)\setminus A(L)$;} \\
	0011 & \text{if $\sigma \in R(L) \setminus \left( A(L)\cup P(L) \right)$;} \\
	0001 & \text{if $\sigma \in E(L) \setminus \left( A(L)\cup P(L) \cup R(L) \right)$; and} \\
	0000 & \text{if $\sigma \notin E(L)$.}
\end{cases}
\end{equation}

We thus obtain a different justification for the five different truth values used in rLTL and why the five different cases in~\eqref{Eq:CanonicalForms} can be seen as canonical. Moreover, we can build on this perspective to define the semantics of the robust release and the robust until.
\subsubsection{Semantics of full $\rLTL(\P)$}
\label{subsubsec:semanticsfullrLTL}
We are now ready to define the semantics for the ``robust release'' $\releasedot$, and ``robust until'' $\untildot$ operators. Equality \eqref{SemanticsViaAPRE} suggests how we can define the $5$-valued semantics for the release operator. Recall that the LTL formula $p\,\mathcal{R}\,q$, for atomic propositions $p$ and $q$, defines a safety property, and that its semantics is given by:
\begin{equation}
\label{SemanticsRelease}
	W(\sigma, p \release q ) = \inf_{j\ge 0}\max\left\{V_1(\sigma_\suf{j},q),\sup_{0\le i<j}V_1(\sigma_\suf{i},p)\right\}.
\end{equation}
We can interpret $\max \left\{ V_1(\sigma_\suf{j},q),\sup_{0\le i<j}V_1(\sigma_\suf{i},p) \right\}$ above 
as the definition of the regular language $L=\Sigma^* \{q\}+\Sigma^*\{p\} \Sigma^+$ and $\inf_{j\ge 0}$ 
as the requirement that every prefix of a string satisfying $p\,\mathcal{R}\,q$ belongs to $L$, i.e., as the definition of the operator $A$. Therefore, the $5$-valued semantics can be obtained by successively enlarging the language $A(L)$ through the replacement of the operator $A$, formalized by $\inf$ in Equation~\eqref{SemanticsRelease}, by the operators $P$ formalized by $\sup \inf$, $R$ formalized by $\inf\sup$, and $E$ for formalized by $\sup$.  This observation leads to the semantics:
\begin{align}
	V(\sigma, \varphi \releasedot \psi ) = \left(V_1(\sigma,\varphi \releasedot \psi ),V_2(\sigma,\varphi \releasedot \psi ),V_3(\sigma, \varphi \releasedot \psi ),V_4(\sigma, \varphi \releasedot \psi )\right),
\end{align}
where:
\begin{align*}
	&V_1(\sigma,\varphi \releasedot \psi ) = \inf_{j\ge 0}\max \left\{V_1(\sigma_\suf{j},\psi),\sup_{0\le i<j} V_1(\sigma_\suf{i},\varphi)\right\}, \\	
	&V_2(\sigma,\varphi \releasedot \psi ) = \sup_{k\ge 0}\inf_{j\ge k}\max \left\{V_2(\sigma_\suf{j},\psi),\sup_{0\le i<j} V_2(\sigma_\suf{i},\varphi)\right\},	\\
	&V_3(\sigma, \varphi \releasedot \psi ) = \inf_{k\ge 0}\sup_{j\ge k}\max \left\{V_3(\sigma_\suf{j},\psi),\sup_{0\le i<j} V_3(\sigma_\suf{i},\varphi)\right\}, \\
	&V_4(\sigma,\varphi \releasedot \psi ) = \sup_{j\ge 0}\max \left\{V_4(\sigma_\suf{j},\psi),\sup_{0\le i<j} V_4(\sigma_\suf{i},\varphi)\right\}.
\end{align*}

Similarly to LTL, $\ralways\psi = \false \releasedot \psi$ holds, thereby showing that the semantics for the $\releasedot$ operator is compatible with the semantics of the $\ralways$ operator. We glean further intuition behind the definition of $\releasedot$ by considering the special case where $\varphi$ is given by $p$ and $\psi$ is given by $q$, for two atomic propositions $p, q \in \mathcal P$. Expressing $V(\sigma, p \releasedot q)$ in terms of an LTL valuation $W$, we obtain:
\begin{align*}
	V(\sigma, p \releasedot q)=\left(W(\sigma, p \release q),~ W(\sigma,\eventu\always q \lor \eventu p),~ W(\sigma,\always\eventu q\lor \eventu p),~ W(\sigma,\eventu q\lor \eventu p)\right).
\end{align*}

As long as $p$ occurs, the value of $p \releasedot q$ is at least $0111$. It could be argued that the semantics of $p \releasedot q$ should also count the number of occurrences of $q$ preceding the first occurrence of $p$. As we detail in Section~\ref{subsec:semanticsexamples}, this can be expressed in rLTL by making use of the proposed semantics.

In LTL, the $\until$ operator is dual to the $\release$ operator, but such relationship does not extend to rLTL in virtue of how negation was defined. Hence, the semantics of the $\untildot$ operator has to be introduced separately. Analogously to above, we interpret the LTL semantics of $p \untildot q$, given by:
\begin{equation}
\label{SemanticsUntil}
W(\sigma, p \until q ) = \sup_{j\ge 0}\min\left\{V_1(\sigma_\suf{j},q),\inf_{0\le i<j}V_1(\sigma_\suf{i},p)\right\},
\end{equation}
as defining the language $E\left(\{p\}^*\{q\}\right)$. In the hierarchy of the operators $E$, $R$, $P$, and $A$, defined by the inclusions $A(L)\subset P(L)\subset R(L)\subset E(L)$ for any regular language $L$, the language $E\left(\{p\}^*\{q\}\right)$ cannot be enlarged as it sits at the top of the hierarchy. Therefore, the semantics of the $\untildot$ operator is given by:
\begin{align}
\label{eq:untilsemantics}
	V(\sigma, \varphi \untildot \psi ) = \left(V_1(\sigma,\varphi \untildot \psi ),V_2(\sigma,\varphi \untildot \psi ),V_3(\sigma, \varphi \untildot \psi ),V_4(\sigma, \varphi \untildot \psi )\right),
\end{align}
where:
\begin{align*}
	V_k(\sigma,\varphi \untildot \psi ) = \sup_{j\ge 0}\min\left\{V_k(\sigma_\suf{j},\psi),\inf_{0\le i<j}V_k(\sigma_\suf{i},\varphi)\right\}, \text{ for each } k\in\{1,2,3,4\}.
\end{align*}

We obtain, by definition, that the semantics of the $\untildot$ operator is compatible with the semantics of the $\reventu$ operator in the sense that $\reventu \psi = \true \untildot \psi$.

Finally, we define the robust semantics of next as a direct generalization of the LTL semantics from $\B$ to $\B_5$:
\begin{align}
	V(\sigma,\nextdot\varphi)=V(\sigma_\suf{1},\varphi).
\end{align}
\subsection{Review of rLTL semantics}
\label{subsec:rLTLsemanticsreview}
To recap the previous discussion, we compactly present the formal rLTL semantics below.
\begin{definition}[$\mathrm{r}$LTL Semantics]
\label{def:rLTLsemantics}
The \emph{rLTL semantics} is a mapping $V : \left ( 2^\P\right )^{\omega} \times \rLTL(\P) \rightarrow \B_4$,  inductively defined as follows for $p \in \P$ and \mbox{$\varphi, \psi \in \rLTL(\P)$}:
\begin{itemize}
	\item For atomic propositions: $V(\sigma, p) = 	\begin{cases}
										0000, \quad \text{if $p \notin \sigma(0)$,} \\
										1111, \quad \text{if $p \in \sigma(0)$.}
										\end{cases}$
	\item For logical connectives: \quad
	$\begin{aligned}
		&V(\sigma,\neg\varphi) = \overline{V(\sigma,\varphi)},  &V(\sigma,\varphi\land\psi) = V(\sigma,\varphi)\sqcap V(\sigma,\psi),	\\
		&V(\sigma,\varphi\lor\psi) = V(\sigma,\varphi)\sqcup V(\sigma,\psi), &V(\sigma,\varphi\rimplies\psi) = V(\sigma,\varphi) \rightarrow V(\sigma,\psi).
	\end{aligned}$
	\item For temporal operators:
		\begin{align*}
		&V(\sigma,\nextdot\varphi) = V(\sigma_\suf{1},\varphi),	\\
		&V(\sigma,\reventu\varphi) = \left(\sup_{i\ge 0}V_1(\sigma_\suf{i},\varphi),~ \sup_{i\ge 0}V_2(\sigma_\suf{i},\varphi),~ \sup_{i\ge 0}V_3(\sigma_\suf{i},\varphi),~ \sup_{i\ge 0}V_4(\sigma_\suf{i},\varphi)\right),	\\
		&V(\sigma,\ralways\varphi) = \left(\inf_{i\ge 0}V_1(\sigma_\suf{i},\varphi),~ \sup_{j\ge 0}\inf_{i\ge j}V_2(\sigma_\suf{i},\varphi),~ \inf_{j\ge 0}\sup_{i\ge j}V_3(\sigma_\suf{i},\varphi),~ \sup_{i\ge 0}V_4(\sigma_\suf{i},\varphi)\right),\\
		 &V(\sigma, \varphi \untildot \psi ) = \left(V_1(\sigma,\varphi \untildot \psi ),V_2(\sigma,\varphi \untildot \psi ),V_3(\sigma, \varphi \untildot \psi ),V_4(\sigma, \varphi \untildot \psi )\right),	\\
		& \hspace{-5em} \text{where: } \ 
		V_k(\sigma,\varphi \untildot \psi ) = \sup_{j\ge 0}\min\left\{V_k(\sigma_\suf{j},\psi),\inf_{0\le i<j}V_k(\sigma_\suf{i},\varphi)\right\}, \text{ for each } k\in\{1,2,3,4\}.\\
		&V(\sigma, \varphi \releasedot \psi ) = \left(V_1(\sigma,\varphi \releasedot \psi ),V_2(\sigma,\varphi \releasedot \psi ),V_3(\sigma, \varphi \releasedot \psi ),V_4(\sigma, \varphi \releasedot \psi )\right), \\
		&\hspace{-5em} \text{where: }
		\end{align*}
		\vspace{-7.5mm}
		\begin{align*}
		&V_1(\sigma,\varphi \releasedot \psi ) = \inf_{j\ge 0}\max \left\{V_1(\sigma_\suf{j},\psi),\sup_{0\le i<j} V_1(\sigma_\suf{i},\varphi)\right\}, \\	
		&V_2(\sigma,\varphi \releasedot \psi ) = \sup_{k\ge 0}\inf_{j\ge k}\max \left\{V_2(\sigma_\suf{j},\psi),\sup_{0\le i<j} V_2(\sigma_\suf{i},\varphi)\right\},	\\
		&V_3(\sigma, \varphi \releasedot \psi ) = \inf_{k\ge 0}\sup_{j\ge k}\max \left\{V_3(\sigma_\suf{j},\psi),\sup_{0\le i<j} V_3(\sigma_\suf{i},\varphi)\right\}, \\
		&V_4(\sigma,\varphi \releasedot \psi ) = \sup_{j\ge 0}\max \left\{V_4(\sigma_\suf{j},\psi),\sup_{0\le i<j} V_4(\sigma_\suf{i},\varphi)\right\}.
		\end{align*}
\end{itemize}

\end{definition}

\subsection{Examples}
\label{subsec:semanticsexamples}
In this section, we showcase the applicability of the proposed semantics with a number of examples.
\subsubsection{The usefulness of implications that are \textbf{not true}}
We argued in the previous section that rLTL captures the intended robustness properties for the specification $\ralways p\rimplies \ralways q$ whenever this formula evaluates to $1111$. But does the formula $\ralways p\rimplies \ralways q$ still provide useful information when its value is lower than $1111$? It follows from the semantics of implication that $V(\sigma, \ralways p\rimplies \ralways q)=b$, for $b\prec 1111$, occurs when $V(\sigma,\ralways q)=b$, i.e., whenever a value of $b$ can be guaranteed despite $b$ being smaller than $V(\sigma,\ralways p)$. The value $V(\sigma, \ralways p\rimplies \ralways q)$ thus describes which weakened guarantee follows from the environment assumption whenever the intended system guarantee does not. This can be seen as another measure of robustness: despite $\ralways q$ not following from $\ralways p$, the behavior of the system is not arbitrary, a value of $b$ is still guaranteed. The usefulness that the different shades of false provide is further discussed in Section~\ref{sec:casestudies} in the context of a concrete case study.

\subsubsection{GR(1) in rLTL}
The GR(1) fragment of LTL is becoming increasingly popular for striking an interesting balance between its expressiveness and the complexity of the corresponding synthesis problem~\cite{bloem2012GR1}.  A GR(1) formula is an LTL$(\always,\eventu)$ formula of the form:
\begin{equation}
\label{GR(1)}
\bigwedge_{i\in I}\always\eventu p_i\implies \bigwedge_{j\in J}\always\eventu q_j ,
\end{equation}
where $p_i$ and $q_j$ are atomic propositions and $I, J$ are finite sets. We obtain the rLTL version of~\eqref{GR(1)} simply by dotting the boxes and the diamonds: 
\begin{equation}
\label{rGR(1)}
\bigwedge_{i\in I}\ralways\reventu p_i\rimplies \bigwedge_{j\in J}\ralways\reventu q_j.
\end{equation}
Any valuation $V$ for $\ralways\reventu p_i$ can be expressed in terms of a  valuation $W$ for LTL as:
\begin{align*}
V(\sigma, \ralways\reventu p_i) & =\left(W(\sigma, \always\eventu p_i),~ W(\sigma, \always\eventu p_i),~ W(\sigma, \always\eventu p_i),~ W(\sigma, \eventu p_i)\right).
\end{align*}
Therefore, $V(\sigma, \ralways\reventu p_i)$ can only assume three different values: $1111$ when $\always\eventu p_i$ holds, $0001$ when $\always\eventu p_i$ fails to hold but $\eventu p_i$ does hold, and $0000$ when $\eventu p_i$ fails to hold. Based on this observation, and assuming that~\eqref{rGR(1)} evaluates to $1111$, we conclude that $\bigwedge_{j\in J}\always\eventu q_j$ holds whenever $\bigwedge_{i\in I}\always\eventu p_i$ does, as required by~\eqref{GR(1)}. In contrast to \eqref{GR(1)}, however, the weakened system guarantee $\bigwedge_{j\in J}\eventu q_j$ holds whenever the weaker environment assumption $\bigwedge_{i\in I}\eventu p_i$ does.	

\subsubsection{Non-counting formulae}
All the preceding examples were counting formulae. We now consider the simple non-counting formula $\always(p\implies \eventu q)$, which requires each occurrence of $p$ to be followed by an occurrence of $q$. The word $\sigma_1 = \{p\} \{q\} \emptyset^\omega$ clearly satisfies this formula although its permutation $\sigma_2 = \{q\} \{p\} \emptyset^\omega$ does not. In addition to being a non-counting formula, $\always(p\implies \eventu q)$ is one of the most popular examples of an LTL formula used in the literature known as the ``request-response'' property, and, for this reason, constitutes a litmus test to rLTL. Actually, such formula is part of the reactive specification patterns identified in~\cite{dwyer1999patterns}, which are part of our case studies in Section~\ref{sec:casestudies}. The semantics of the dotted version of $\always (p\implies \eventu q)$ can be expressed using an LTL valuation $W$ as:
\begin{align*}
	V(\sigma,\ralways(p\rimplies \reventu q)) =& \\
	&\hspace*{-5em}
	\left(W(\sigma,\always(p\implies \eventu q)), W(\sigma,\always\eventu p\implies \always\eventu q), W(\sigma,\eventu\always p\implies \always\eventu q), W(\sigma,\always p\implies \eventu q)\right).
\end{align*}
It is interesting to observe how the semantics of $\varphi=\ralways(p\rimplies\reventu q)$ recovers: strong fairness, also known as compassion, when the value of $\varphi$ is 0111; weak fairness, also known as justice, when the value of $\varphi$ is 0011; and the even weaker notion of fairness represented by the LTL formula $\always p\implies \eventu q$, when the value of $\varphi$ is 0001. The fact that all these different and well known notions of fairness naturally appear in the proposed semantics is another strong indication of rLTL's naturalness and usefulness.	\\

\subsubsection{Counting with ``robust release''} As we discussed before, the semantics of $\varphi \releasedot \psi$ does not count how many times $\psi$ holds before the first occurrence of $\varphi$. This property, however, is captured by the rLTL formula:
\begin{equation}
\label{Eq:ReleasedotWithCounting}
	\left(\varphi\releasedot \psi\right)\land \left(\neg\varphi\untildot \psi\right).
\end{equation} 
To see why, we assume $\varphi=p$ and $\psi=q$ for atomic propositions $p$ and $q$. Then, express the semantics of the rLTL formula~\eqref{Eq:ReleasedotWithCounting} in terms of an LTL valuation $W$ as:
\begin{align*}
	V(\sigma,\left(p\releasedot q\right)\land \left(\neg p\untildot q\right)) & =\left(W(\sigma, p\release q),W(\neg p\until q),W(\neg p\until q),W(\neg p\until q)\right). 
\end{align*}
Note how we can now distinguish between three cases: (1) $p\release q$ holds, corresponding to value $1111$; (2) $q$ holds at least once before being released by $p$, corresponding to value $0111$; and (3) $q$ does not hold before being released by $p$, corresponding to value $0000$.	\\

\subsubsection{Non-decomposition of ``robust until''} The previous example showed how the LTL equivalence between $\varphi \release \psi$ and $\left( \varphi \release \psi\right)\land\left(\neg \varphi \until \psi\right)$ is not valid in rLTL. Another LTL equivalence that is not valid in rLTL is the decomposition of the until operator into its liveness and safety parts, that is the equivalence between $\varphi \until \psi$ and $\eventu\psi\land (\psi \release (\psi\lor\varphi))$.  The rLTL formula $\reventu\psi\land (\psi \releasedot (\psi\lor\varphi))$ expresses a weaker requirement than $\varphi \untildot \psi$ that is also useful to express robustness. When $\varphi$ and $\psi$ are the atomic propositions $p$ and $q$, respectively, the semantics of $\reventu\psi\land (\psi \releasedot (\psi\lor\varphi))$ can be expressed in terms of an LTL valuation W as:
\begin{align*}
	V(\sigma, \eventu q\land (q \release (q\lor p)) )=\left(W(\sigma, p \until q),~ W(\sigma, \eventu q),~ W(\sigma,\eventu q),~ W(\sigma,\eventu q)\right).
\end{align*}
Whereas $\varphi \untildot \psi$ only assumes two values, $\reventu\psi\land (\psi \releasedot (\psi\lor\varphi))$ assumes three possible values allowing to separate the words that violate $\varphi \until \psi$ into those that satisfy $\eventu q$ and those that do not.

\section{Relating LTL and $\mathrm{r}$LTL}
\label{subsec:rltl2ltl}
In this section we discuss, at the technical level, the relationships between $\rLTL(\P)$ and $\LTL(\P)$. Recall the mappings $\pi_j:\B_5\to\B$ introduced in~\eqref{Eq:Projection}, defined by \mbox{$\pi_j(a_1,a_2,a_3,a_4)=a_j$}, \mbox{$j\in\{1,2,3,4\}$}. Composing $\pi_j$ with the rLTL valuation $V$ we obtain the function $V_j=\pi_j\circ V$ that transforms an infinite word $\sigma\in \Sigma^\omega$ and an $\rLTL(\P)$ formula $\varphi$ into the element $V_j(\sigma,\varphi) \in \mathbb{B}$. Each truth value in $\B_5$ can be viewed as a sequence of 4 bits. Consequently, we show how to translate an $\rLTL(\P)$ formula $\varphi$ into four $\LTL(\P)$ formulae $\varphi_1, \ldots, \varphi_4$ such that:
\begin{align}
\label{eq:projVj}
	\pi_j(V(\sigma, \varphi)) = V_j(\sigma, \varphi) = W\left( \sigma, \varphi_j \right),
\end{align}
for all $\sigma \in \Sigma^\omega$ and $j \in \{1, \ldots, 4\}$. The key idea is to emulate the semantics of each operator occurring in $\varphi$ component-wise by means of dedicated LTL formulae. To make this clear and straightforward, we define the operator:
\begin{equation}
\label{eq:toLTLop}
\ltl : \{1, \ldots, 4\} \times \rLTL(\P) \rightarrow \LTL(\P),
\end{equation}
as in Table~\ref{table:toLTLop}. Then, each $\varphi_j$ formula is constructed as:
\begin{align}
\label{eq:eachbit}
	\varphi_j \coloneqq \ltl(j,\varphi) .
\end{align}

It is not hard to verify that the formulae $\varphi_j$ have indeed the desired meaning. Moreover, notice that due to the ordering of the rLTL truth values in $\B_5$, see \eqref{eq:ordering}, it follows that:
\begin{align}
\label{eq:toLTLordering}
	W(\sigma,\varphi_j) \geq W(\sigma,\varphi_i), ~\text{for $j \geq i$}.
\end{align}
Although the above construction only incurs a linear blowup in the number of subformulae of $\varphi_j$, the resulting LTL formula itself might be exponentially larger when explicitly constructed due to the recursive substitution.

\begin{table*}[t!]
\caption{The rLTL semantics via the $\ltl$ operator.}
\label{table:toLTLop}
\begin{center}
\bgroup
\def\arraystretch{}
\begin{tabular}{lcl}
\toprule
Operator & Symbol & Semantics for $p \in \P$, $\varphi, \psi \in \rLTL(\P)$.\\
\toprule
 Atomic Proposition  & & $\forall i \in \{1,2,3,4\} :  \ltl(i,p) = p$.
\\
\midrule
Negation & $\lnot$ & $\forall i \in \{1,2,3,4\} :  \ltl(i, \lnot \varphi) = \lnot \ltl(1, \varphi)$.
\\
\midrule
Conjunction & $\land$ & $\forall i \in \{1,2,3,4\} :  \ltl(i,  \varphi \land \psi) = \ltl(i, \varphi) \land \ltl(i, \psi)$.
\\
\midrule
Disjunction & $\lor$ & $\forall i \in \{1,2,3,4\} : \ltl(i,  \varphi \lor \psi) = \ltl(i, \varphi) \lor \ltl(i, \psi)$.
\\
\midrule
Robust Implication & $\rimplies$ & 
$
	\begin{aligned}
	 \forall i \in \{1,2,3\} : \, &\ltl(i, \varphi \rimplies \psi)  =  (\ltl(i,\varphi) \implies \ltl(i, \psi)) \land \ltl(i+1,\varphi \rimplies \psi), \\
	 				&\ltl(4, \varphi \rimplies \psi)  = (\ltl(4,\varphi)  \implies \ltl(4,\psi)). 
	\end{aligned}
$	
\\
\midrule
Next & $\nextdot$ & $\forall i \in \{1,2,3,4\} :  \ltl(i, \nextdot\varphi) = \Next \ltl(i,\varphi)$.
\\
\midrule
Robust Eventually & $\reventu$ & $\forall i \in \{1,2,3,4\} :  \ltl(i, \reventu \varphi) = \eventu \ltl(i,\varphi)$.
\\
\midrule
Robust Always & $\ralways$ &
$
	\begin{aligned}
	\ltl(1,\ralways \varphi)  & = \always \ltl(1, \varphi),  \\
	\ltl(2,\ralways \varphi)  & = \eventu\always \ltl(2, \varphi),\\
	\ltl(3,\ralways \varphi)  & = \always \eventu \ltl(3, \varphi),\\
	\ltl(4,\ralways \varphi)  & =  \eventu \ltl(4, \varphi).\end{aligned}
 $
\\ 
\midrule
Robust Until & $\untildot$ & $\forall i \in \{1,2,3,4\} :  \ltl(i, \varphi \untildot \psi) = \ltl(i, \varphi) \until \ltl(i,\psi)$.
\\
\midrule

Robust Release & $\releasedot$ & 
$
	\begin{aligned}
	\ltl(1, \varphi \releasedot \psi)  & = \ltl(1, \varphi) \release \ltl(1, \psi), \\
	\ltl(2, \varphi \releasedot \psi)  & = \eventu \ltl(2, \varphi) \lor \eventu \always \ltl(2, \psi), \\
	\ltl(3, \varphi \releasedot \psi) & = \eventu \ltl(3, \varphi) \lor \always \eventu  \ltl(3, \psi), \\
	\ltl(4, \varphi \releasedot \psi) &  = \eventu \ltl(4, \varphi) \lor \eventu \ltl(4, \psi) .
	\end{aligned}
$
\\
\bottomrule
\end{tabular}
\egroup
\end{center}
\end{table*}

Next, we show that the $\LTL(\P)$ semantics can be recovered from the first bit of the $\rLTL(\P)$ semantics. Specifically, for any $\phi \in \LTL(\P)$, there exists a $\varphi \in \rLTL(\P)$, such that \mbox{$V_1(\sigma, \varphi) = W(\sigma, \phi)$} for every $\sigma \in \Sigma^\omega$. The construction of a suitable $\varphi \in \rLTL(\P)$, given any $\phi \in \LTL(\P)$, is straightforward given Table~\ref{table:toLTLop}. First, every implication $\psi_1 \implies \psi_2$ in $\phi \in \LTL(\P)$ is replaced by the equivalent in LTL formula $\neg \psi_1 \lor \psi_2$. Second, $\phi$ is brought into negation normal form. Finally, by taking the robust version of the latter LTL formula, i.e., replacing the LTL temporal operators with their corresponding rLTL temporal operators, 
one obtains the desired $\varphi \in \rLTL(\P)$. 

\begin{remark}
	In the construction above, we replace every implication in the LTL formula $\phi$ by its LTL-equivalent negation form. 
	This is justified as, given the semantics of rLTL and LTL respectively, we have that for any infinite word $\sigma \in \Sigma^{\omega}$ the first bit of the rLTL valuation, $V_1(\sigma, \varphi \rimplies \psi)$, and the LTL valuation, $W(\sigma,\varphi_1 \implies \psi_1)$, are related as follows:
\begin{align*}
	V_1(\sigma, \varphi \rimplies \psi)  \leq V_1(\sigma,\neg \varphi \lor \psi) = W(\sigma,\neg \varphi_1 \lor \psi_1) = W(\sigma,\varphi_1 \implies \psi_1).
\end{align*}
Furthermore, we can construct simple examples for which the inequality above is strict, i.e., it fails to be an equality. For instance, consider the LTL formula $\always p \implies \always q$, its corresponding rLTL formula \mbox{$\ralways p \rimplies \ralways q$}, and the word \mbox{$\sigma =  \emptyset \{ p\}^\omega$}. Then, on one hand it holds that  \mbox{$V_1(\sigma, \ralways p \rimplies \ralways q) = 0$}, but on the other hand we have that \mbox{$V_1(\sigma, \neg \ralways p \lor \ralways q) = 1 = W(\sigma, \always p \implies \always q)$}. 
\end{remark}

The preceding discussion leads to the following result. 
\begin{lemma}
\label{lem:rLTLdecisionprobs}
Any rLTL formula $\varphi \in \rLTL(\P)$ can be translated to four LTL formulae $\varphi_j \in \LTL(\P)$, $j = 1, 2, 3, 4,$ as in \eqref{eq:eachbit} and such that \eqref{eq:projVj} holds, with $|\varphi_j | \leq c  |\varphi|$, for some $c > 0$. 
Moreover, for any LTL formula $\phi \in \LTL(\P)$ one can construct an rLTL formula $\varphi \in \rLTL(\P)$, such that $V_1(\sigma, \varphi) = W(\sigma, \phi)$ for every $\sigma \in \Sigma^\omega$, with $|\varphi| = |\phi|$. The aforementioned translations imply decidability of any problem for $\rLTL(\P)$, whose corresponding problem for $\LTL(\P)$ is decidable. Moreover, due to their effective translations to each other, $\LTL(\P)$ and $\rLTL(\P)$ are equally expressive. 
\end{lemma}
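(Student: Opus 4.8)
The plan is to prove the four assertions in sequence, putting essentially all the work into the correctness and size claims for the $\ltl$-translation of \eqref{eq:eachbit}; the decidability transfer and equi-expressiveness then fall out. \textbf{Correctness of \eqref{eq:projVj}.} I would prove, by structural induction on $\varphi\in\rLTL(\P)$, the statement that $V_j(\sigma,\varphi)=W(\sigma,\ltl(j,\varphi))$ for every $\sigma\in\Sigma^\omega$ and every $j\in\{1,2,3,4\}$, matching each clause of Definition~\ref{def:rLTLsemantics} against the corresponding row of Table~\ref{table:toLTLop}. For atomic propositions both sides reduce to $W(\sigma,p)$. For $\land$, $\lor$, $\nextdot$, $\reventu$, $\untildot$ the step is immediate once one notes that $\sqcap,\sqcup$ act coordinatewise on $\B_5\subseteq\B^4$, so $\pi_j$ commutes with them, and that the $\reventu$, $\ralways$, $\untildot$ clauses of Definition~\ref{def:rLTLsemantics} are already written bit-by-bit (e.g.\ $V_1(\sigma,\ralways\varphi)=\inf_i V_1(\sigma_\suf{i},\varphi)=\inf_i W(\sigma_\suf{i},\ltl(1,\varphi))=W(\sigma,\always\ltl(1,\varphi))$, and similarly $V_2\mapsto\eventu\always$, $V_3\mapsto\always\eventu$, $V_4\mapsto\eventu$). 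For $\neg\varphi$ one uses that $\overline{\,\cdot\,}$ maps $1111$ to $0000$ and everything else to $1111$, together with monotonicity \eqref{Monotonicity}, which gives $V(\sigma,\varphi)=1111\iff V_1(\sigma,\varphi)=1$; hence $\pi_j(\overline{V(\sigma,\varphi)})=1-V_1(\sigma,\varphi)=W(\sigma,\neg\ltl(1,\varphi))=W(\sigma,\ltl(j,\neg\varphi))$.

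\textbf{The two delicate cases.} For $\rimplies$, unfolding the recursion in Table~\ref{table:toLTLop} yields $\ltl(j,\varphi\rimplies\psi)=\bigwedge_{i=j}^{4}(\ltl(i,\varphi)\implies\ltl(i,\psi))$, so by the induction hypothesis $W(\sigma,\ltl(j,\varphi\rimplies\psi))=\min_{j\le i\le 4}\max\{1-a_i,b_i\}$ with $a=V(\sigma,\varphi)$, $b=V(\sigma,\psi)$. One then checks $\pi_j(a\rightarrow b)=\min_{j\le i\le 4}\max\{1-a_i,b_i\}$ by a case split: if $a\preceq b$ both sides equal $1$; otherwise pick $i_0$ with $a_{i_0}=1>0=b_{i_0}$, and monotonicity of the tuples $a,b$ forces, for $j\le i_0$, that $\max\{1-a_{i_0},b_{i_0}\}=0$ and $b_j=0$, while for $j>i_0$ every $a_i$ with $i\ge j$ is $1$ so the minimum collapses to $\min_{j\le i\le 4}b_i=b_j=\pi_j(b)=\pi_j(a\rightarrow b)$. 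For $\releasedot$: bit $1$ is literally the LTL release clause; bit $4$ collapses via $\sup_{j\ge 0}\sup_{0\le i<j}(\cdot)=\sup_{i\ge 0}(\cdot)$; and for bits $k\in\{2,3\}$, writing $\alpha_j=\sup_{0\le i<j}V_k(\sigma_\suf{i},\varphi)$, if $\sup_{i\ge 0}V_k(\sigma_\suf{i},\varphi)=1$ then $\alpha_j$ is eventually $1$ and the whole $\sup\inf$ (resp.\ $\inf\sup$) expression equals $1$, whereas if that supremum is $0$ then $\alpha_j\equiv 0$ and the expression reduces to $\sup_k\inf_{j\ge k}V_k(\sigma_\suf{j},\psi)$ (resp.\ $\inf_k\sup_{j\ge k}V_k(\sigma_\suf{j},\psi)$); in either case this matches $W(\sigma,\eventu\ltl(k,\varphi)\lor\eventu\always\ltl(k,\psi))$ (resp.\ with $\always\eventu$), which is the table entry. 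I expect this $\rimplies$ computation and this bit-$2$/bit-$3$ collapse for $\releasedot$ to be the only steps requiring genuine care; everything else is bookkeeping.

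\textbf{Size bound, LTL into rLTL, and the corollaries.} For $|\varphi_j|\le c|\varphi|$ I would argue that $T:=\bigcup_{\psi\in\cl(\varphi)}\bigcup_{i=1}^4\cl(\ltl(i,\psi))$ contains $\cl(\ltl(j,\varphi))$ and has cardinality at most $c|\varphi|$, because for each $\psi\in\cl(\varphi)$ the formulae in $\bigcup_i\cl(\ltl(i,\psi))$ that are not already contributed by the proper subformulae of $\psi$ number at most a fixed constant (the worst cases being $\rimplies$, which per bit adds an implication and a conjunction, and $\ralways$/$\releasedot$, which add one extra $\always/\eventu/\eventu\always/\always\eventu$ layer); summing over $\cl(\varphi)$ gives the bound. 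The exponential-blowup remark in the text concerns the explicit recursive substitution and needs nothing here. For the second assertion, given $\phi\in\LTL(\P)$ I would first rewrite it into an equivalent $\phi'$ that is implication-free and in negation normal form---steps that preserve the LTL valuation---and let $\varphi$ be its ``dotted'' version (each LTL connective replaced by its rLTL counterpart); a one-line induction on $\phi'$ using Table~\ref{table:toLTLop} shows $\ltl(1,\varphi)=\phi'$ (here implication-freeness is essential, since $\ltl(1,\cdot)$ commutes with all the remaining connectives), so $V_1(\sigma,\varphi)=W(\sigma,\phi')=W(\sigma,\phi)$ by the first part, and the dotting is a bijection on subformulae, giving $|\varphi|=|\phi'|$. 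Finally, decidability transfer is immediate: any question about $V(\sigma,\varphi)$ is, via $V_j(\sigma,\varphi)=W(\sigma,\varphi_j)$ and the finiteness of $\B_5$, a Boolean combination of questions about $W(\sigma,\varphi_1),\dots,W(\sigma,\varphi_4)$; conversely every LTL question is an rLTL question about the first bit; and since each threshold set $\{\sigma:V(\sigma,\varphi)\succeq t\}$ equals the LTL-definable set $\{\sigma:W(\sigma,\bigwedge_{j:\pi_j(t)=1}\varphi_j)=1\}$ and these finitely many sets determine $V(\cdot,\varphi)$, the two logics are equally expressive via effective translations.
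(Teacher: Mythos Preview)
Your proposal is correct and follows the same approach as the paper, though it is considerably more thorough: the paper's proof merely asserts correctness of~\eqref{eq:projVj} (``It is not hard to verify'') and devotes its effort solely to the size bound, computing $c=12$ in Appendix~\ref{sec:appendixRLTL2LTLcomplexity} by the same subformula-counting induction you outline. Your explicit verification of the $\rimplies$ and $\releasedot$ cases is genuine added content and is carried out correctly. One minor point: you obtain $|\varphi|=|\phi'|$ for the NNF/implication-free rewrite $\phi'$ rather than $|\varphi|=|\phi|$ as the lemma literally states, but the paper's own one-line argument (``only involves replacing the LTL temporal operators'') glosses over exactly the same preprocessing step, so this is not a defect of your treatment relative to theirs.
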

\begin{proof}
	Translating $\phi \in \LTL(\P)$ to $\varphi \in \rLTL(\P)$ only involves replacing the LTL temporal operators with their rLTL counterparts, which implies that $|\varphi| = |\phi|$. 
	The translation of $\varphi \in \rLTL(\P)$ to $\varphi_j \in \LTL(\P)$, $j = 1, 2, 3, 4$, is provided in \eqref{eq:eachbit}. This translation also results in a linear increase in the number of the unique subformulae, i.e., $|\varphi_j | \leq c  |\varphi|$, and the exact coefficient $c > 0$ is computed in Appendix~\ref{sec:appendixRLTL2LTLcomplexity}.
\end{proof}

\section{The \lowercase{r}LTL model checking problem}
\label{sec:rLTLmodelchecking}
Similarly to LTL, rLTL gives rise to various decision problems. One of the most important problems is the \emph{model-checking} problem. In this section we first formalize this problem in the context of rLTL. Then, through the translation from rLTL to LTL provided in Section~\ref{subsec:rltl2ltl}, the decidability of the rLTL model-checking problem is immediately settled. However, the treatment provided by the translation results in the construction of relatively large automata, which is inefficient for the purposes of model-checking. Therefore, we review the tighter known complexity bounds for the reader's convenience. Although the proof of such results is outside of the scope of this paper (the interested reader is referred to~\cite{tabuada2015rLTLarxiv, tabuada2016rLTL} for more details), they provide the setting in which we can appreciate the results in Section~\ref{sec:augmentedfragment} on an rLTL fragment for which the model checking problem can be solved with lower complexity.

\subsection{rLTL model-checking: definitions and background}
\label{subsec:rLTLproblems}

As discussed in Section~\ref{sec:prelims}, given a model of a system that is represented as a GBA, the LTL model-checking problem essentially asks whether or not all possible computations of the model satisfy an LTL specification. 
In a similar manner, the \emph{rLTL model-checking problem} is intuitively understood as the question of ``to what degree does a model satisfy a specification?''. The specification in this section is represented by an rLTL formula. For simplicity, we consider the model of the system to be given directly as a GBA. This leads to the following formulation of the \emph{rLTL model-checking problem}.

\begin{problem}[$\mathrm{r}$LTL model-checking]
\label{prob:rLTLmc}
	Given a set of atomic propositions $\P$, a GBA $\G$ with the corresponding set of words $L(\G) \subseteq \left (2^\P \right)^\omega$ that it recognizes, an rLTL formula $\varphi \in \rLTL(\P)$, and a truth value $b \in \B_5$, we ask if $V(\sigma,\varphi) \geq b$ holds for all $\sigma \in L(\G)$.
\end{problem}
In practice, one would be more interested in finding what is the largest $b \in \B_5$ such that $V(\sigma,\varphi) \geq b$ holds for all $\sigma \in L(\G)$. Section~\ref{subsec:fullrLTLbounds} provides the answer to Problem~\ref{prob:rLTLmc}, and repeatedly solving this problem for decreasing values of $b$ addresses the optimization problem of finding the largest $b \in \B_5$ such that $V(\sigma,\varphi) \geq b$ for all $\sigma \in L(\G)$. Note that this requires at most four invocations of the rLTL model-checking procedure and, hence, does not change the asymptotic complexity of the problem. 


\subsection{Improved bounds for rLTL model-checking}
\label{subsec:fullrLTLbounds}

In the preceding section, we provided a simple means to translate rLTL formulae into LTL formulae via the $\ltl$ operator in \eqref{eq:toLTLop}. Hence, as a consequence of Lemma~\ref{lem:rLTLdecisionprobs} the decidability question for Problem~\ref{prob:rLTLmc} is settled. In practice, however, this translation involves a blow-up, which results in relatively large automata that would then be used to solve the rLTL model-checking problem.

To alleviate this problem, a more efficient approach, via a direct translation from $\rLTL(\P)$ formulae into Generalized B\"uchi Automata (GBAs), was presented in~\cite{tabuada2016rLTL} for the $\rLTL_{\ralways,\reventu}$ fragment and for the full rLTL in~\cite{tabuada2015rLTLarxiv}. This construction, given an $\rLTL(\P)$ formula $\varphi$, results in a GBA with $\bigO \left(5^{|\varphi|} \right)$ states, where ${|\varphi|}$ is the number of subformulae of $\varphi$. This is the same complexity as for the LTL translation, which results in an automaton with size in $\bigO \left(2^{|\varphi|} \right)$, once we replace $2$ with $5$ since rLTL is 5-valued while LTL is 2-valued. 
Recall that in this paper, we reason about the model-checking complexity by focusing on the size of the corresponding GBA constructed from a given formula. 
On these grounds, we recall the following result from~\cite{tabuada2016rLTL, tabuada2015rLTLarxiv}. 

\begin{theorem}
\label{thm:rLTLmc}
Given an $\rLTL(\P)$ formula $\varphi$, the rLTL model-checking problem (Problem~\ref{prob:rLTLmc}) can be decided by using an automaton with at most:
\begin{align}
\label{eq:rLTLmcComplexity}
	\bigO \left( 5^{|\varphi|} \right) 
\end{align}
states, where $|\varphi|$ denotes the length of formula $\varphi$, i.e., the number of its distinct subformulae.
\end{theorem}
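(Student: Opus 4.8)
The plan is to reduce the construction of an automaton for $\varphi$ to the construction of B\"uchi automata for the four LTL formulae $\varphi_1,\dots,\varphi_4$ obtained via the $\ltl$ operator of Table~\ref{table:toLTLop}. First I would recall from Lemma~\ref{lem:rLTLdecisionprobs} and Equation~\eqref{eq:eachbit} that each $\varphi_j := \ltl(j,\varphi)$ satisfies $V_j(\sigma,\varphi) = W(\sigma,\varphi_j)$ for all $\sigma$, and that $|\varphi_j| \le c|\varphi|$ for a constant $c$. Rather than building a separate GBA for each bit and taking a product --- which would already give the desired exponential base but with a worse exponent --- the sharper route, following~\cite{tabuada2016rLTL, tabuada2015rLTLarxiv}, is to construct a single GBA directly over an extended state space that tracks, for each subformula $\psi \in \cl(\varphi)$, an element of $\B_5$ recording the (four-bit, monotone) truth value of $\psi$ along the run. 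Since $\mathrm{card}(\B_5) = 5$, an assignment of a $\B_5$-value to each of the $|\varphi|$ distinct subformulae lives in a set of size $5^{|\varphi|}$, which is the source of the bound.

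The key steps, in order, would be: (1) define the states of the GBA $\G_\varphi$ to be ``consistent'' $\B_5$-labellings of $\cl(\varphi)$, mirroring the classical Vardi--Wolper / tableau construction for LTL but with $\B_5$ in place of $\B$; consistency is enforced locally by the da~Costa algebra operations for the Boolean connectives and by the usual one-step unfolding identities for the temporal operators, read bit-by-bit according to the semantics in Definition~\ref{def:rLTLsemantics}; (2) define the transition relation so that a transition on input symbol $a \in 2^\P$ is allowed exactly when the labelling of the successor state is compatible with the ``next-step'' obligations of the current labelling, and when atomic propositions are evaluated correctly against $a$; (3) install generalized B\"uchi acceptance conditions, one family per ``eventuality-like'' obligation appearing in the bitwise expansion --- here one must be careful that the $\ralways$ and $\releasedot$ operators contribute, through their second, third, and fourth components, obligations of the shapes $\eventu\always(\cdot)$, $\always\eventu(\cdot)$, and $\eventu(\cdot)$, each of which is handled by the standard GBA gadgets for such patterns; (4) prove by a routine induction on formula structure that $L(\G_\varphi)$ accepts exactly the words $\sigma$ with $V(\sigma,\varphi)$ bitwise consistent with the accepting labelling, i.e. that $\G_\varphi$ realizes the valuation $V(\cdot,\varphi)$; (5) count: the state space injects into $(\B_5)^{\cl(\varphi)}$, hence has at most $5^{|\varphi|}$ states, and the number of acceptance sets is linear in $|\varphi|$, so the whole construction is $\bigO(5^{|\varphi|})$, which is what the model-checking procedure then runs on in place of the LTL GBA.

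The main obstacle I expect is step (3)--(4): getting the acceptance conditions exactly right for the $\ralways$ and $\releasedot$ operators. Their bitwise semantics mixes $\inf$, $\sup\inf$, $\inf\sup$, and $\sup$ patterns across the four components, so a single robust temporal operator spawns several qualitatively different LTL-style obligations simultaneously, and one must verify that the generalized B\"uchi acceptance faithfully captures the alternation $\sup_k\inf_{j\ge k}$ (a ``$\eventu\always$''-type condition, which is \emph{not} expressible by a plain GBA acceptance set in the naive way and typically requires the degeneralization/copying trick or an auxiliary flag) as opposed to the simpler $\inf_j\sup_{k\ge j}$ and $\sup_j$ conditions. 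The soundness-and-completeness induction must therefore track these obligations carefully through nesting of robust operators; the Boolean-connective cases and the $\nextdot$, $\reventu$, $\untildot$ cases are comparatively mechanical because their semantics act componentwise with a uniform $\sup$/$\inf$ pattern. Since the excerpt explicitly defers the proof to~\cite{tabuada2016rLTL, tabuada2015rLTLarxiv}, I would present this as a construction sketch and cite those sources for the full verification, emphasizing only that replacing $\B$ by $\B_5$ in the standard LTL-to-GBA pipeline is what changes the base of the exponential from $2$ to $5$.
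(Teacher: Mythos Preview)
The paper does not actually prove Theorem~\ref{thm:rLTLmc}: it explicitly recalls the result from~\cite{tabuada2016rLTL,tabuada2015rLTLarxiv} and states that ``the proof of such results is outside of the scope of this paper.'' You correctly recognized this and framed your proposal as a construction sketch to be backed by those references; that is exactly the right posture here.

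As for the content of your sketch, it matches the approach of the cited works: a direct generalization of the Vardi--Wolper tableau construction in which each state records a $\B_5$-value (rather than a $\B$-value) for every subformula in $\cl(\varphi)$, yielding at most $5^{|\varphi|}$ states. Your identification of the $\sup_k\inf_{j\ge k}$ component of $\ralways$ and $\releasedot$ as the delicate point is apt: a raw co-B\"uchi obligation is not a GBA acceptance set. The resolution, which you gesture at with ``auxiliary flag,'' is that the state already carries the bit for the inner $\always$-type commitment, so the $\eventu\always$ obligation becomes an ordinary eventuality (``eventually the commitment bit is set'') plus a safety constraint on transitions once it is set; this is precisely how the temporal-tester construction for $\eventu\always p$ later in the paper (Figure~\ref{fig:T_FGp} and Remark~\ref{rem:Justices}) handles it with a single justice set. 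So your anticipated obstacle is real but dissolves once the state space is chosen correctly, and no blow-up beyond $5^{|\varphi|}$ is incurred.
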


Interestingly, the many valued semantics of rLTL allows posing useful optimization problems in system verification. For instance, as discussed in the beginning of this section, one might be interested in the largest value that a system guarantees. Therefore, by repeatedly solving Problem~\ref{prob:rLTLmc} for decreasing truth values, one finds the largest value that a system guarantees. Theorem~\ref{thm:rLTLmc} provides a non-trivial upper bound, when compared to the direct translation via the $\ltl$ operator, as the following example suggests. 

\begin{example}
\label{ex:nontrivialbound}
Consider the $\rLTL(\P)$ formula $\ralways p \rimplies \ralways q$, and assume we wish to check $V(\sigma, \varphi) = 0111$ for every $\sigma \in L(\G)$, where $\G$ is the GBA representing the model of a given system. By using the $\ltl$ operator in Table~\ref{table:toLTLop}, this is equivalent to model-checking the formula:
\begin{align*}
	\ltl(2, \ralways p \rimplies \ralways q) & = 
	 (\always \eventu p \implies \always \eventu q) \land (\eventu \always p \implies \eventu \always q) \land ( \eventu p \implies \eventu q).
 \end{align*}
The original rLTL formula is of length $5$, and the LTL formula above has length $15$. Hence, by  na\"ively applying LTL complexity bounds \eqref{eq:LTLComplexity}, we count an upper bound of $2^{15}$ states the corresponding GBA, which is worse than the upper bound of $5^5$ states from \eqref{eq:rLTLmcComplexity}.
\end{example}

\section{A fragment for efficient \lowercase{r}LTL model-checking}
\label{sec:augmentedfragment}
Theorem~\ref{thm:rLTLmc} states that the $\rLTL(\P)$ model-checking problem can be solved by using a GBA of size $\bigO\left(5^{|\varphi|}\right)$. Nonetheless, this bound is still more expensive than the one of the $\LTL(\P)$ model-checking problem, which is $\bigO\left(2^{|\varphi|}\right)$. It is then natural to ask the following question, 
can we find a fragment of $\rLTL(\P)$ for which the model-checking problem can be solved more efficiently?

To motivate an answer to the above question, we show that the high complexity of $\rLTL(\P)$ model-checking stems from the fact that the four bits of an rLTL truth value are \emph{coupled} by the $\rimplies$ and $\neg$ operators. 
\begin{example}
\label{eg:expandbit}
Consider the $\rLTL(\P)$ formula $\varphi$ given by $\lnot(p \rimplies (q \releasedot r))$, where $p$, $q$ and $r$ are atomic propositions. To compute, for example, the $4$-th bit of its valuation, one needs to unfold the corresponding $\LTL(\P)$ formula:
\begin{align*}
	\ltl(4, \lnot(p \rimplies (q \releasedot r)) ) & = \lnot \ltl(1, p \rimplies (q \releasedot r)) = \lnot  \Big( ( \ltl(1,p) \implies \ltl(1, q \releasedot r) ) \land \ltl(2, p \rimplies  (q \releasedot r) \Big) \\ 
	& =  \neg \Big( (p \implies ( q \release r )) \land  \ltl(2, (p \rimplies (q \releasedot r) )) \Big)\\
	&= \big(p \land \lnot ( q \release r ) \big) \lor \lnot  \Big( \big( \ltl(2,p) \implies \ltl(2, q \releasedot r) \big) \land \ltl(3, p \rimplies  (q \releasedot r)) \Big) \\
	&= \big(p \land \lnot ( q \release r ) \big) \lor \lnot  \Big( \big( p \implies (\eventu q \lor \eventu\always r) \big) \land \ltl(3, p \rimplies  (q \releasedot r)) \Big) .
\end{align*}
If we continue unfolding the formula, we see that one needs to check an LTL formula that non-trivially depends on the bits 1, 2, and 3 of the valuation of $\varphi$. 
\end{example}

With this intuition in mind, we aim to identify a fragment of $\rLTL(\P)$ that suffers minimally from the coupling required by, e.g., the robust implication.

\subsection{Main results}
\label{subsec:mainresults}

We begin by defining the following fragments of $\rLTL(\P)$. 

\begin{definition}[Fragments $\thefragment$ and $\thelargerfragment$]
\label{def:rLTLtildaFinal}
Given a set of atomic propositions $\P$, define $\thefragment \subset \rLTL(\P)$ as the set of all rLTL formulae that either do not contain any $\rimplies$ operator, or if they do, the antecedent of the implication does not contain any $\ralways$ or $\releasedot$ operators. Then define the more general fragment $\thelargerfragment$ as:
\begin{align}
\label{eq:thebiggerfragment}
	\thelargerfragment = \thefragment \bigcup \left\{\psi_1 \rimplies \psi_2 \middle| \psi_1, \psi_2 \in \thefragment \right\},
\end{align}
which allows for one $\rimplies$ operator on the outermost level with no restrictions on the implication's antecedent.
\end{definition}

The main result of this section establishes refined complexity bounds for the model-checking problem of the above defined fragments.
\begin{theorem}
\label{thm:refinedboundsFinal}
Consider a set of atomic propositions $\P$.
The rLTL model-checking problem for any formula in $\thelargerfragment$ can be solved by performing at most $4$ LTL model-checking steps, each using an automaton with at most:
\begin{equation}
\label{eq:refinedcomplexityFinal}
	\bigO \left(  2^{|\varphi| - \kappa(\varphi)} 3^{\kappa(\varphi)} \right)
\end{equation}
states, where $\kappa(\varphi) = \mathrm{card}\left( \left\{ \psi \in \cl(\varphi) \mid \psi = \ralways \psi_1 \right\} \right) + \mathrm{card}\left( \left\{ \psi \in \cl(\varphi) \mid \psi = \psi_1 \releasedot \psi_2 \right\} \right)$ 
, i.e., the number of distinct subformulae of $\varphi$ of the form $\ralways \psi$ and $\psi_1 \releasedot \psi_2$, and $|\varphi|$ is the length of $\varphi$. 
\end{theorem}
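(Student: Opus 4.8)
The plan is to reduce the rLTL model-checking problem for a formula $\varphi \in \thelargerfragment$ to at most four LTL model-checking instances, and to show that each instance can be carried out with a GBA whose size is governed by the bound \eqref{eq:refinedcomplexityFinal}. I would proceed in three stages. First, by Problem~\ref{prob:rLTLmc} and the optimization remark following it, deciding $V(\sigma,\varphi) \geq b$ for all $\sigma \in L(\G)$ and a fixed $b \in \bbit$ amounts to checking, for each bit position $j$ with $b_j = 1$, whether $W(\sigma, \ltl(j,\varphi)) = 1$ for all $\sigma \in L(\G)$; since $b$ has at most four nonzero bits this is at most four LTL model-checking steps, as claimed. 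So the entire content of the theorem is the size bound on the GBA recognizing (the complement within $L(\G)$ of) each $\ltl(j,\varphi)$.

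The second and central stage is to show that, for a formula $\varphi$ in the fragment, each LTL formula $\ltl(j,\varphi)$ can be recognized by a GBA of size $\bigO\bigl(2^{|\varphi|-\kappa(\varphi)} 3^{\kappa(\varphi)}\bigr)$, rather than the naive $\bigO(2^{|\varphi_j|})$ that follows from Lemma~\ref{lem:rLTLdecisionprobs} and Proposition~\ref{prop:LTL2BA} (which would be exponential in $c|\varphi|$). The key observation is the one foreshadowed in Examples~\ref{eg:expandbit} and the surrounding text: the bits of the rLTL valuation are coupled only through $\rimplies$ and $\neg$, and the syntactic restriction defining $\thefragment$ is exactly what is needed to decouple them. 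Concretely, for $\varphi \in \thefragment$ I would argue by structural induction that $\ltl(j,\varphi)$ is built from the same subformula skeleton as $\varphi$ — each rLTL subformula $\psi \in \cl(\varphi)$ contributes a bounded family of LTL subformulae, and crucially the temporal testers for these subformulae can be \emph{shared} across the four bits. One builds a single product GBA over a state space indexed by, for each $\psi \in \cl(\varphi)$, a truth value in $\bbit$ that position $\psi$ currently ``promises''. For subformulae $\psi$ that are not of the form $\ralways\psi_1$ or $\psi_1\releasedot\psi_2$, the monotonicity constraint \eqref{eq:toLTLordering} together with the fact that $\ltl(i,\cdot)$ acts identically on all bits (for $p$, $\lnot$, $\land$, $\lor$, $\nextdot$, $\reventu$, $\untildot$ — see Table~\ref{table:toLTLop}) forces the four-bit state at $\psi$ to take only \emph{two} distinct values, $1111$ or $0000$, contributing a factor $2$; whereas at a $\ralways$ or $\releasedot$ node the semantics genuinely spreads across the $A,P,R,E$ hierarchy, and the tester must track which of the three ``active'' shades $\{1111, 0011, 0000\}$ (or the analogous triple) is in force — note $0111$ collapses with $0011$ once we account for which prefixes still lie in the language, cf.\ \eqref{SemanticsViaAPRE} — contributing a factor $3$. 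Multiplying over all $|\varphi|$ subformulae gives $2^{|\varphi|-\kappa(\varphi)}3^{\kappa(\varphi)}$.

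The third stage handles the extra outermost implication allowed in $\thelargerfragment$: if $\varphi = \psi_1 \rimplies \psi_2$ with $\psi_1,\psi_2 \in \thefragment$, then by the $\rimplies$ row of Table~\ref{table:toLTLop}, $\ltl(j, \psi_1\rimplies\psi_2) = \bigwedge_{i\geq j}\bigl(\ltl(i,\psi_1)\implies\ltl(i,\psi_2)\bigr)$, a conjunction of at most four implications between formulae each already covered by stage two; taking the product of the corresponding GBAs preserves the bound up to the constant hidden in $\bigO(\cdot)$, because the shared testers for the subformulae of $\psi_1$ and of $\psi_2$ are reused across the conjuncts. Finally, intersecting with $\G$ and with $\A_{\neg(\cdot)}$ multiplies by $|\G|$ and a polynomial factor, which the statement absorbs into the LTL model-checking step. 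I expect the main obstacle to be the bookkeeping in stage two: one must verify carefully that at a non-$\ralways$/$\releasedot$ node the tester really needs only two states — this relies on the \emph{antecedent restriction} of $\thefragment$ (so that $\lnot$ applied below an $\rimplies$ never forces a genuinely mixed truth value up through a $\ralways$ in the antecedent, which is precisely the pathology exhibited in Example~\ref{eg:expandbit}) — and that the GBA acceptance conditions (one per temporal subformula, per Proposition~\ref{prop:LTL2BA}) can be assembled without blowing up the state count beyond the stated product.
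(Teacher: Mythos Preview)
Your Stage~2 has a genuine gap. The claim that at a subformula $\psi$ which is itself not a $\ralways$ or $\releasedot$ the four-bit valuation collapses to $\{0000,1111\}$ is false: take $\psi = \ralways p \lor q$, a disjunction whose rLTL value can be any element of $\bbit$ because it inherits the full range of its $\ralways$-child. What Table~\ref{table:toLTLop} tells you is only that the \emph{operator} $\lor$ acts the same way on every bit; it does not force the value passed up from below to be binary. (Proposition~\ref{prop:weakformula} gives the correct statement: a formula is two-valued only if it \emph{contains} no $\ralways$ or $\releasedot$, not merely if its top connective is something else.) Your accounting --- factor $2$ per ``ordinary'' node, factor $3$ per $\ralways/\releasedot$ node --- therefore has no foundation as stated, and the justification you offer for the $3$ (``$0111$ collapses with $0011$'') does not correspond to anything in the semantics.

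The paper arrives at the same arithmetic by an entirely different mechanism. It does not build one automaton over $\bbit$-valued states; it fixes a single bit $j$, forms the LTL formula $\ltl(j,\varphi)$, and bounds the size of its temporal tester by structural induction on $\varphi$ (Lemma~\ref{lem:fragmentbounds}, Corollary~\ref{cor:recbounds}). The factor $3$ at $\ralways$ is a pure LTL fact: for $j=2$ one has $\ltl(2,\ralways\psi)=\eventu\always\ltl(2,\psi)$, and the tester $\T_{\eventu\always p}$ has three states (equation~\eqref{eq:TEAp}, Figure~\ref{fig:T_FGp}); for $\releasedot$, specialized three-state testers are built and verified in Appendix~\ref{sec:appendixEfficientFragment}. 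Your Stage~3 also fails: the testers for $\ltl(i,\psi_1)$ across different $i$ cannot be ``shared'' --- $\ltl(2,\ralways p)=\eventu\always p$ and $\ltl(3,\ralways p)=\always\eventu p$ are distinct LTL formulae with distinct testers --- so the product of the four conjuncts does not stay within the bound. The paper instead runs Algorithm~\ref{alg:earlystopping} from bit $4$ downward and observes (end of Section~\ref{subsec:refinedbounds}) that once bit $j{+}1$ has been verified, checking bit $j$ reduces to the \emph{single} implication $\ltl(j,\psi_1)\implies\ltl(j,\psi_2)$, so each of the at most four steps uses one automaton of the size given by Lemma~\ref{lem:fragmentbounds}.
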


\begin{remark}
Similarly to LTL, one can verify that for any $\varphi \in \rLTL(\P),$ it holds that $\reventu \varphi = \true \untildot \varphi$, and $\ralways \varphi = \false \releasedot \varphi$. Hence, the function $\kappa(\varphi)$ can be understood as the number of distinct $\psi_1 \releasedot \psi_2$ subformulae of $\varphi$, accounting for those underlying the $\ralways$ operators. 
\end{remark}

The above result provides a smaller complexity bound for the rLTL model-checking problem, which is closer to the tight bound of the LTL model-checking problem, i.e., $\bigO\left(2^{|\varphi|}\right)$. In fact, in the absence of $\ralways$ and $\releasedot$ operators, the bound in \eqref{eq:refinedcomplexityFinal} reduces to that of LTL. A special case of Theorem~\ref{thm:refinedboundsFinal} is the result proposed in~\cite{anevlavis2018rLTL}, where the same complexity bound is proven for a smaller rLTL fragment, specifically the set of rLTL formulae that do not contain any $\releasedot$ or $\rimplies$ (regardless of their antecedent) operators, or that allow for at most one $\rimplies$ operator only on the outermost level. 

\begin{remark}
Throughout this section we compare the model-checking complexity of the proposed fragment $\thelargerfragment$ to that of full LTL. This is justified by observing that for any LTL formula $\phi \in \LTL(\P)$, after replacing every implication $\psi_1 \implies \psi_2$ therein by  $\neg \psi_1 \lor \psi_2$, the corresponding rLTL formula belongs to the proposed fragment. This observation follows directly by \eqref{eq:thebiggerfragment}. 
\end{remark}

To illustrate the practicality of the proposed fragment $\thelargerfragment$, we examine the LTL formulae found in the B\"uchi Store~\cite{buchistore}, an open repository of LTL formulae. More than $95\%$ of the corresponding rLTL versions of these formulae belong to $\thelargerfragment$. Moreover, as discussed in Section~\ref{subsec:reactivitypatterns} in more detail, all the relevant reactivity patterns~\cite{dwyer1999patterns} fit into the fragment $\thelargerfragment$. Finally, the proposed fragment includes the GR(1) fragment. These considerations establish $\thelargerfragment$ as a practically useful fragment of rLTL. 

\begin{algorithm}[t]
\caption{Model-checking algorithm for the $\thelargerfragment$ fragment.}
\label{alg:earlystopping}
{\bf{Input: }}{A language $L(\G)$ generated by a GBA $\G$, and a formula $\varphi \in \thelargerfragment$.}\\
{\bf{Output: }}{The truth value of $\varphi$ on $L(\G)$, i.e., $b(L(\G), \varphi) \in \B_5$. }\\ 
 \For{$j = 0, 1, 2, 3$}
 {
 	$w := \inf_{\sigma \in L(\G)} W(\sigma, \ltl(4 - j, \varphi)).$\\
 	\If{$w = 0$}
 	{
 		\Return $\bbit[j]$
 	}
 }
 \Return $\bbit[4]$ 
\end{algorithm}

We devote the rest of this section to formally proving Theorem~\ref{thm:refinedboundsFinal}. Towards this, we introduce Algorithm~\ref{alg:earlystopping}. 
At a high level, Algorithm~\ref{alg:earlystopping} translates an rLTL formula into four LTL formulae and then model-checks each of them. The key idea is that when operating within the proposed fragment, the four formulae are independent of each other. It is actually this independence that allows us to construct smaller automata, by the use of temporal testers, and attain the desired complexity bounds. By exploiting the ordering of the five truth values in~\eqref{eq:ordering} the algorithm stops once an LTL formula is falsified. 

Overall in this section we prove the following points: (1) the fragment $\thefragment$ allows independent bit-wise computations for the rLTL model-checking problem; (2) smaller automata can be constructed for the formulae in the $\thefragment$ fragment by utilizing temporal testers; and (3) for any formula in the $\thelargerfragment$ fragment, the claimed bound on the size of the corresponding automaton is satisfied. 

\subsection{Bit-wise independence of $\thefragment$ and insights on the robust implication}
\label{subsec:robustnessinsights}
The goal of this section is to establish that \emph{for any formula $\varphi \in \thefragment$, the truth value of the $j$-th bit of its valuation, $V_j(\sigma,\varphi)$, is independent of any bit $i \neq j$}. Recall from Definition~\ref{def:rLTLtildaFinal} that $\thefragment$ contains all rLTL formulae that either do not contain any $\rimplies$ operators or, if they do, there exist no $\ralways$ or $\releasedot$ operators in the implication's antecedent. 

For the formulae that do not contain any $\rimplies$ operators, the desired result is easily verified by inspection of Table~\ref{table:toLTLop}. For the formulae that contain $\rimplies$, but their assumptions do not contain any $\ralways$ or $\releasedot$ operators we perform the following analysis. 

As Examples~\ref{ex:nontrivialbound} and~\ref{eg:expandbit} show, the $\rimplies$ operator induces coupling of the different LTL formulae corresponding to the bits of an rLTL truth value. Thus, we now draw insight on when it makes sense for an implication to be evaluated robustly. The notion of robustness, namely that ``small violations of the assumption lead to, at most, small violations of the guarantee'', is infused in the semantics of the $\rimplies$ operator, see \eqref{eq:semanticsrimplies}, since by \emph{weakening} the assumption and the guarantee, different rLTL truth values are obtained. Consequently, we investigate what ``weakening a formula'' means in the context of rLTL. 

\begin{definition}[Weakened $\rLTL(\P)$ formula]
Given a formula $\varphi \in \rLTL(\P)$, we say that $\varphi$ admits a \emph{weakened version} if there exists an infinite word $\sigma \in \Sigma^\omega$ such that \mbox{$V(\varphi,\sigma)\in \{0001, 0011, 0111\}$}. 
\end{definition}

Based on the above definition, for any rLTL formula $\varphi$ that does not admit a weakened version and any infinite word $\sigma \in \Sigma^\omega$, we have that $V(\varphi,\sigma)\in \{0000,1111\}$, i.e., the valuation of $\varphi$ admits only a binary truth value. This is equivalent to the statement that the corresponding LTL formulae $\varphi_{j}$, for $j \in \{ 1, 2, 3, 4 \}$, defined in \eqref{eq:eachbit} are semantically equivalent. Given the rLTL semantics, we make the following crucial observation. 

\begin{proposition} 
\label{prop:weakformula}
	Given a formula $\varphi \in \rLTL(\P)$, if $\varphi$ admits a \emph{weakened version}, then $\varphi$ contains at least one $\ralways$ or one $\releasedot$ operator.
\end{proposition}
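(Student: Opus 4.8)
The statement is the contrapositive-friendly claim: if $\varphi$ contains no $\ralways$ and no $\releasedot$ operator, then $\varphi$ does not admit a weakened version, i.e.\ $V(\sigma,\varphi)\in\{0000,1111\}$ for every $\sigma$. By Lemma~\ref{lem:rLTLdecisionprobs} and \eqref{eq:projVj}, this is equivalent to showing that the four LTL formulae $\varphi_j = \ltl(j,\varphi)$ are pairwise semantically equivalent, and by the monotonicity \eqref{eq:toLTLordering} it suffices to show $W(\sigma,\varphi_1) = W(\sigma,\varphi_4)$ for all $\sigma$, or even just $W(\sigma,\varphi_4)\le W(\sigma,\varphi_1)$. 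I would actually prove the stronger and cleaner statement by structural induction on $\varphi$: \emph{if $\varphi$ contains no $\ralways$ and no $\releasedot$, then $\ltl(1,\varphi) = \ltl(2,\varphi) = \ltl(3,\varphi) = \ltl(4,\varphi)$ as syntactic objects} (literally the same LTL formula), from which semantic equivalence is immediate.

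\textbf{Key steps.} The induction walks through the cases of Table~\ref{table:toLTLop}. The base case, atomic propositions, is immediate since $\ltl(i,p)=p$ for all $i$. For negation, $\ltl(i,\neg\varphi)=\neg\ltl(1,\varphi)$ for all $i$, so all four bits coincide regardless of the induction hypothesis; this case is in fact trivially fine. For $\land$, $\lor$, $\nextdot$, $\reventu$, and $\untildot$, the table shows $\ltl(i,\cdot)$ is built componentwise from $\ltl(i,\varphi)$ and $\ltl(i,\psi)$ using a \emph{fixed} LTL operator independent of $i$; by the induction hypothesis the arguments agree across $i$, so the outputs do too. The operators $\ralways$ and $\releasedot$ are excluded by assumption, so they never arise. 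The only genuinely delicate case is $\rimplies$: the table defines $\ltl(i,\varphi\rimplies\psi)$ recursively in terms of $\ltl(i+1,\varphi\rimplies\psi)$, with different LTL content at different bits in general. Here I would invoke the induction hypothesis on the antecedent $\varphi$ and the consequent $\psi$ (both are strict subformulae, both contain no $\ralways$/$\releasedot$): we get $\ltl(i,\varphi)=\ltl(1,\varphi)$ and $\ltl(i,\psi)=\ltl(1,\psi)$ for all $i$. Then unrolling the recursion for $\ltl(1,\varphi\rimplies\psi)$ gives $\bigwedge_{i=1}^{4}\bigl(\ltl(i,\varphi)\implies\ltl(i,\psi)\bigr)$, which collapses to $\ltl(1,\varphi)\implies\ltl(1,\psi)$ since all four conjuncts are identical; and the same unrolling starting at bit $2$, $3$, or $4$ yields the same formula. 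Hence all four bits of $\varphi\rimplies\psi$ coincide, completing the induction.

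\textbf{Main obstacle.} The conceptual content is entirely in the $\rimplies$ case, and the subtle point is that one must feed the induction hypothesis to \emph{both} the antecedent and the consequent before the telescoping conjunction simplifies — it is the bit-wise collapse of $\ltl(i,\varphi)$ and $\ltl(i,\psi)$ that makes the nested recursion degenerate. One should also be careful that this argument genuinely requires the absence of $\ralways$/$\releasedot$ \emph{throughout} $\varphi$ (not merely in antecedents): e.g.\ $\reventu(\ralways p \rimplies q)$ would break the induction at the $\reventu$ step because its argument already has distinct bits. I would state the induction hypothesis accordingly and remark that the proposition as stated in the paper is exactly the contrapositive of this induction claim. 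A minor alternative, avoiding syntactic identity, is to prove semantic equivalence directly by induction using the rLTL semantics of Definition~\ref{def:rLTLsemantics}: the clauses for $\neg,\land,\lor,\nextdot,\reventu,\untildot$ all act componentwise with the same operation in each coordinate, so if the operands take only values in $\{0000,1111\}$ so do the results; the $\rimplies$ clause needs the observation that $a\rightarrow b\in\{0000,1111\}$ whenever $a,b\in\{0000,1111\}$, which is immediate from the residuation formula. Either route works; I would present the syntactic one as it is shortest.
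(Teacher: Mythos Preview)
Your approach is correct and mirrors the paper's: contrapositive plus structural induction, with $\rimplies$ as the only nontrivial case. The paper organizes the induction in three stages (first $\rimplies$-free formulae by inspection of Table~\ref{table:toLTLop}, then a single outermost $\rimplies$ handled via the residuation semantics \eqref{eq:semanticsrimplies}, then nested implications by induction), whereas you fold everything into one uniform structural induction on the $\ltl$ translation; the content is the same.

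One small correction worth making: the ``stronger and cleaner'' claim that $\ltl(1,\varphi)=\ltl(2,\varphi)=\ltl(3,\varphi)=\ltl(4,\varphi)$ \emph{as syntactic objects} is false once $\varphi$ contains a $\rimplies$. For example, $\ltl(4,p\rimplies q)$ is $(p\implies q)$ while $\ltl(1,p\rimplies q)$ is a fourfold conjunction $\bigwedge_{j=1}^{4}(p\implies q)$; these are semantically equivalent but not literally identical formulae. Your own argument in the $\rimplies$ step silently makes this semantic move when you say the conjunction ``collapses''. The clean fix is exactly your alternative at the end: state the induction hypothesis as semantic equivalence (or, equivalently, work directly with the valuation $V$ and the closure of $\{0000,1111\}$ under all non-excluded operators), which is what the paper effectively does via the semantics of $\rimplies$.
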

\begin{proof}
We prove the result by contraposition. 
Consider any $\varphi \in \rLTL(\P)$ that does not contain any $\ralways$ or $\releasedot$ operators. The proof proceeds in three steps. \\
\hspace*{0.5em}1) If, additionally, $\varphi$ does not contain any $\rimplies$ operators, by inspection of Table~\ref{table:toLTLop} we see that the corresponding LTL formulae $\varphi_{j}$, for $j \in \{ 1, 2, 3, 4 \}$, defined in \eqref{eq:eachbit} are identical. Thus, the valuations $W(\sigma, \varphi_j)$, $j \in \{ 1, 2, 3, 4 \}$, over any $\sigma \in \Sigma^\omega$ are equal, and, hence, $V(\sigma,\varphi)\in \{0000,1111\}$. \\
\hspace*{0.5em}2) If $\varphi$ is of the form $\phi \rimplies \psi$, where neither $\phi$ nor $\psi$ contain a $\rimplies$ operator, then from the semantics of robust implication in \eqref{eq:semanticsrimplies} we have that $V(\sigma,\varphi) = 1111$ if $V(\sigma,\phi) \preceq V(\sigma,\psi)$, and that $V(\sigma,\varphi) = V(\sigma,\psi)$ otherwise. Furthermore, from 1) $V(\sigma,\phi), V(\sigma,\psi)\in \{0000,1111\}$, and, hence, $V(\varphi,\sigma)\in \{0000,1111\}$. \\
\hspace*{0.5em}3) Finally, if $\varphi \in \rLTL(\P)$ does not contain any $\ralways$ or $\releasedot$ operators, but may contain an arbitrary number of $\rimplies$ operators, the same result follows simply by induction on the subformulae of $\varphi$. \\
Therefore, if $\varphi$ does not contain a $\ralways$, or a $\releasedot$ operator, then it does not admit a \emph{weakened version}. Equivalently, if $\varphi$ admits a weakened version, then $\varphi$ contains at least one $\ralways$, or one $\releasedot$ operator. This concludes the proof.
\end{proof}

We use this proposition to determine when it is necessary to evaluate an implication robustly. Given an rLTL formula of the form $\varphi \rimplies \psi$, if the assumption $\varphi$ does not admit a weakened version, then the implication does not have to be evaluated robustly, and the LTL equivalence $\neg \varphi \lor \psi$ can be used instead. The following proposition formalizes this idea. 
\begin{proposition}
\label{prop:implicationequivrLTL}
	An $\rLTL(\P)$ formula $\varphi \rimplies \psi$ is \emph{semantically equivalent} to $\neg \varphi \lor \psi$, i.e., for any $\sigma\in \Sigma^\omega$ it holds that $V(\sigma, \varphi \rimplies \psi)=V(\sigma, \neg \varphi \lor \psi)$ if $\varphi$ does not contain any $\ralways$ or $\releasedot$ operators. 
\end{proposition}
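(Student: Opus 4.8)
The plan is to reduce the claim to the two facts about $\varphi$ established just before the proposition: that $\varphi$ contains no $\ralways$ or $\releasedot$ operators, and that, by Proposition~\ref{prop:weakformula}, such a $\varphi$ does not admit a weakened version. The first step is to recall the semantics of the robust implication in \eqref{eq:semanticsrimplies}: $V(\sigma,\varphi \rimplies \psi) = V(\sigma,\varphi) \rightarrow V(\sigma,\psi)$, where $\rightarrow$ is the residuum of $\sqcap$ on $\B_5$, which by the computation in Section~\ref{subsubsec:dacosta} equals $1111$ when $V(\sigma,\varphi) \preceq V(\sigma,\psi)$ and equals $V(\sigma,\psi)$ otherwise. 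In parallel, the semantics of $\neg$ and $\lor$ give $V(\sigma,\neg\varphi \lor \psi) = \overline{V(\sigma,\varphi)} \sqcup V(\sigma,\psi)$, where $\overline{\,\cdot\,}$ is the da Costa negation of Table~\ref{Neg}, i.e. $\overline{1111} = 0000$ and $\overline{b} = 1111$ for every $b \prec 1111$.

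The key step is then to fix an arbitrary $\sigma \in \Sigma^\omega$ and do a small case analysis on $V(\sigma,\varphi)$, using the fact that it lies in $\{0000,1111\}$ because $\varphi$ does not admit a weakened version (Proposition~\ref{prop:weakformula}). If $V(\sigma,\varphi) = 1111$, then on the robust-implication side $V(\sigma,\varphi) \preceq V(\sigma,\psi)$ holds iff $V(\sigma,\psi) = 1111$, so $V(\sigma,\varphi \rimplies \psi)$ equals $1111$ when $V(\sigma,\psi)=1111$ and equals $V(\sigma,\psi)$ otherwise; in both subcases this is exactly $V(\sigma,\psi)$. On the other side $\overline{V(\sigma,\varphi)} = \overline{1111} = 0000 = \bot$, so $\overline{V(\sigma,\varphi)} \sqcup V(\sigma,\psi) = V(\sigma,\psi)$ as well. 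If instead $V(\sigma,\varphi) = 0000 = \bot$, then $\bot \preceq V(\sigma,\psi)$ always holds, so $V(\sigma,\varphi \rimplies \psi) = 1111$; and $\overline{V(\sigma,\varphi)} = \overline{0000} = 1111 = \top$, so $\overline{V(\sigma,\varphi)} \sqcup V(\sigma,\psi) = \top = 1111$. In every case the two valuations agree, which is the claim. Since $\sigma$ was arbitrary, the two formulae are semantically equivalent.

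I would close by noting that the only nontrivial input is Proposition~\ref{prop:weakformula}, which is what pins $V(\sigma,\varphi)$ to the two-valued set $\{0000,1111\}$; everything else is a direct unfolding of the definitions of $\rightarrow$, $\sqcup$, and $\overline{\,\cdot\,}$ on $\B_5$. I do not expect a genuine obstacle here — the main thing to be careful about is to invoke Proposition~\ref{prop:weakformula} via its contrapositive (a $\varphi$ with no $\ralways$ and no $\releasedot$ does not admit a weakened version, hence is two-valued on every word) rather than re-deriving that fact, and to make sure the residuum computation for $\rightarrow$ is quoted in the exact form $a \rightarrow b = 1111$ if $a \preceq b$ and $b$ otherwise. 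One could even phrase the whole argument more slickly by observing that a two-valued $V(\sigma,\varphi)$ means the residuum and the classical $\max\{\overline{\,\cdot\,},\cdot\}$ computations coincide bit-for-bit, but the explicit two-case split above is the cleanest route.
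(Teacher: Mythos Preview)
Your proof is correct and takes a genuinely different (and cleaner) route than the paper's. The paper works at the level of the LTL translations via the $\ltl$ operator of Table~\ref{table:toLTLop}: it writes out $V(\sigma,\varphi\rimplies\psi)$ as the $4$-tuple in~\eqref{eq:implicationLTLeval}, uses that the $\varphi_j$ are syntactically identical when $\varphi$ has no $\ralways$ or $\releasedot$ (and initially no $\rimplies$), does a case split on $W(\sigma,\varphi_1)$ using the ordering~\eqref{eq:toLTLordering}, and then appeals to a separate induction to lift the result to antecedents containing nested $\rimplies$. You instead stay entirely at the da~Costa algebra level: once Proposition~\ref{prop:weakformula} pins $V(\sigma,\varphi)$ to $\{0000,1111\}$, a two-case check that the residuum $\rightarrow$ and the operation $\overline{a}\sqcup b$ agree on $\{0000,1111\}\times\B_5$ finishes the job. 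What your approach buys is that it handles nested $\rimplies$ in $\varphi$ for free (Proposition~\ref{prop:weakformula} already covers that case), and it avoids the bit-level bookkeeping with~\eqref{eq:toLTLordering}. What the paper's approach buys is that it makes the connection to the $\ltl$ translation explicit, which is the form actually used downstream in Section~\ref{subsec:refinedbounds}; your argument is more self-contained but less directly tied to the later automaton construction.
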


\begin{proof}
Consider the rLTL formula $\varphi \rimplies \psi$, where $\varphi$ does not contain a robust implication operator for simplicity. The valuation of $\varphi \rimplies \psi$ over $\sigma \in \Sigma^\omega$, denoted by $V(\sigma, \varphi \rimplies \psi)$, is equal to: 
\begin{align}
\label{eq:implicationLTLeval}
	\left( W\left( \sigma, \bigwedge_{j = 1}^4 (\varphi_j \implies \psi_j ) \right), W\left( \sigma, \bigwedge_{j = 2}^4 (\varphi_j \implies \psi_j ) \right), W\left( \sigma, \bigwedge_{j = 3}^4 (\varphi_j \implies \psi_j ) \right), W\left( \sigma, \varphi_4 \implies \psi_4 \right) \right),
\end{align}
where $\varphi_j, \psi_j$, for $j \in \{ 1, 2, 3, 4 \}$, are defined in \eqref{eq:eachbit} by using the $\ltl$ operator according to Table~\ref{table:toLTLop}. 
If $\varphi$ contains no $\ralways$ and no $\releasedot$, as described by Proposition~\ref{prop:weakformula}, the LTL formulae $\varphi_{j}$ are identical. Let us denote all of them by $\varphi_{1}$. Then $V(\sigma, \varphi \rimplies \psi)$ becomes: 
\begin{align}
\label{eq:implicationLTLnoweak}
	\left( W\left( \sigma, \bigwedge_{j = 1}^4 (\varphi_1 \implies \psi_j ) \right), W\left( \sigma, \bigwedge_{j = 2}^4 (\varphi_1 \implies \psi_j ) \right), W\left( \sigma, \bigwedge_{j = 3}^4 (\varphi_1 \implies \psi_j ) \right), W\left( \sigma, \varphi_1 \implies \psi_4 \right) \right),
\end{align}
where we can see that, contrary to \eqref{eq:implicationLTLeval}, all the antecedents are now identical. 
We now show that \eqref{eq:implicationLTLnoweak} is equal to:
\begin{align}
\label{eq:implicationLTLnoweak2}
	&\left(
	W\left( \sigma, \varphi_1 \implies \psi_1 \right) ,	~
	W\left( \sigma, \varphi_1 \implies \psi_2 \right) ,	~
	W\left( \sigma, \varphi_1 \implies \psi_3 \right) ,	~
	W\left( \sigma, \varphi_1 \implies \psi_4 \right) 
	\right).
\end{align}
For any $\sigma \in \Sigma^\omega$: \\ 
\hspace*{0.5em}1) Assume $W(\sigma,\varphi_1) = 0$. Then $W(\sigma,\varphi_1 \implies \psi_j) = 1, ~\text{for $j \in \{ 1, 2, 3, 4 \}$}$, and both \eqref{eq:implicationLTLnoweak} and \eqref{eq:implicationLTLnoweak2} are equal to $1111$.	\\
\hspace*{0.5em}2) Assume $W(\sigma,\varphi_1) = 1$, and denote each truth value in $\B_5$ as:
	\begin{align*}
		\B_5 & = \left\{0000,0001,0011,0111,1111\right\} = \left\{\bbit[0],\bbit[1], \bbit[2], \bbit[3], \bbit[4]\right\}.
	\end{align*}
	If $W(\sigma,\psi_j)=0$ for all $j \in \{1, 2, 3, 4\}$, then both \eqref{eq:implicationLTLnoweak} and \eqref{eq:implicationLTLnoweak2} are equal to $\bbit[0]$. Else, let $k$ be the smallest element of $\{1,2,3,4\}$ such that $W(\sigma,\psi_k)=1$. Then, by \eqref{eq:toLTLordering}, we have that \mbox{$W(\sigma,\psi_j) \geq W(\sigma,\psi_k)$} for $j \geq k$, yielding $W(\sigma,\psi_j) = 1$ for $j \geq k$, which, in turn, implies that both \eqref{eq:implicationLTLnoweak} and \eqref{eq:implicationLTLnoweak2} are equal to $\bbit[5-k]$. 
	
The above shows that \eqref{eq:implicationLTLnoweak} and \eqref{eq:implicationLTLnoweak2} are equal when evaluated over any $\sigma \in \Sigma^\omega$. By using the LTL equivalence between $\varphi_j \implies \psi_j$ and $\neg \varphi_j \lor \psi_j$, we have that \eqref{eq:implicationLTLnoweak2} is equal to:
\begin{align}
\label{eq:implicationLTLnoweak3}
	&\left(
	W\left( \sigma, \neg \varphi_1 \lor \psi_1 \right) ,	~
	W\left( \sigma, \neg \varphi_1 \lor \psi_2 \right) ,	~
	W\left( \sigma, \neg \varphi_1 \lor \psi_3 \right) ,	~
	W\left( \sigma, \neg \varphi_1 \lor \psi_4 \right) 
	\right)	,
\end{align}
and, thus, equal to \eqref{eq:implicationLTLnoweak}. This yields the desired equality $V(\sigma,\varphi \rimplies \psi) = V(\sigma,\neg\varphi\lor\psi)$. 

The result for any $\varphi \in \rLTL(\P)$ that does not contain any $\ralways$, or $\releasedot$ operators, but may contain an arbitrary number of $\rimplies$ operators, follows simply by induction on the subformulae of $\varphi$ since neither the assumptions, nor the guarantees of these implications contain any $\ralways$ or $\releasedot$ operators. 
\end{proof}

Finally, by Proposition~\ref{prop:implicationequivrLTL} and the rLTL semantics in Table~\ref{table:toLTLop}, we are able to conclude that for any \mbox{$\varphi \in \thefragment$} the value of $V_j(\sigma,\varphi)$ is independent of any bit $i \neq j$. This concludes the objective of this section. 

\subsection{Introducing temporal testers}
\label{subsec:temporaltesters}
In this section, we review the concept of temporal testers~\cite{pnueli2008temporaltesters,kesten1998algverltl}. Temporal testers are discrete transition systems equipped with justice conditions. One of the appeals of temporal testers is that they can be used to obtain automata recognizing infinite words that satisfy an LTL formula by composing testers recognizing its subformulae. For example, from testers for the formulae $p \until q$  and $\always r$, one can construct a tester for $p \until (\always r)$ by composing the testers for $p \until q$  and $\always r$ using the constraint $q=\always r$. We start by providing the necessary definitions. 

\begin{definition}
\label{def:temporaltesters}
	A temporal tester for an LTL formula $\varphi \in \LTL(\P)$ is a tuple $\T_{\varphi} = (S, \Theta, R, \mathcal{J})$ where:
	\begin{itemize}
		\item $S$ is the set of states, $S \subseteq \B^{ \cl(\varphi) }$. Each state $x \in S$ is a function $x : \cl(\varphi)\to \mathbb{B}$ mapping a formula $\psi \in \cl(\varphi)$ to an element of $\B$. We denote the evaluation of $x$ on $\psi$ as $x_\psi$ and interpret it as the truth value of $\psi$ at the state $x$.
		\item $\Theta \subseteq S$ is a set of initial states. 
		\item $R \subseteq S \times S$ is a transition relation. 
		\item $\mathcal{J} = \{ J_1, \dots, J_K \} \subseteq 2^S$ is the set of justice requirements, where $J \subseteq S$ for each $J \in \mathcal{J}$.
	\end{itemize} 
	A \emph{computation} of a tester is an infinite sequence of states $\gamma = x^{(0)} x^{(1)} \dots$ such that $x^{(0)} \in \Theta$, \mbox{$\left(x^{(i)},x^{(i+1)}\right) \in R$} for $i \geq 0$, and for every $J \in \mathcal{J}$, $\gamma$ contains infinitely many states $x^{(i)} \in J$. Given a computation $\gamma$, we let $\sigma(\gamma) \in \left(2^\mathcal{P}\right)^\omega$ be the word $\sigma(\gamma) = \sigma_0(\gamma) \sigma_1(\gamma) \dots$ where $\sigma_i(\gamma)$ is the subset of $\mathcal{P}$ defined by $p\in \sigma_i(\gamma)$ if and only if $x_p^{(t)}=1$.	
\end{definition}

It is easily verified that the above definition of a temporal tester is equivalent to the one in~\cite{pnueli2008temporaltesters,kesten1998algverltl}. 
\begin{example}[A tester for the until operator adapted from~\cite{pnueli2008temporaltesters}]
\label{ex:T_pUq}
A tester for $p \until q$, where $p$ and $q$ are atomic propositions, is as follows:
\begin{equation}
\label{eq:T_pUq}
	\T_{p \until q} : 
	\left \{ 
	\begin{aligned}
	S & = \left\{x^{(1)}, x^{(2)}, x^{(3)}, x^{(4)}, x^{(5)}\right\}, \\
	\Theta & = S, \\
	R &= \left\{ \left(x,x'\right) \in S \times S \mid x_{ p \until q}= x_q \lor \left(x_p \land x_{ p \until q}'\right) \right\}, \\
	\mathcal{J} &= \{J_1\}, ~ J_1 =  \left\{ x \in S \mid \lnot x_{ p \until q} \lor x_q = 1\right\}	.
	\end{aligned}
	\right .
\end{equation}
The above tester can be represented as an automaton with \emph{5 states}, see Figure~\ref{fig:T_pUq}. 
\end{example}

\begin{figure}[t!]

\subfigure[Tester $\T_{p \until q}$.]{\label{fig:T_pUq} \includegraphics[width=0.40\textwidth]{./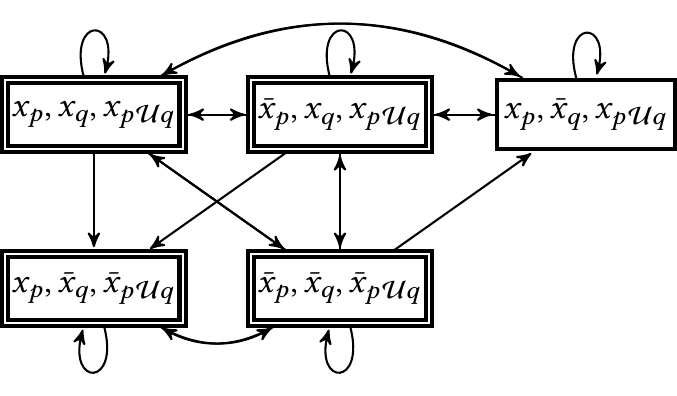}}
\subfigure[Tester $\T_{\eventu p}$.]{\label{fig:T_Fp} \includegraphics[width=0.40\textwidth]{./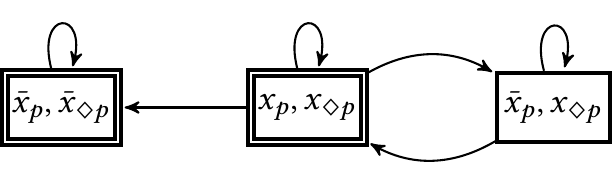}}
\subfigure[Tester $\T_{\always p}$.]{\label{fig:T_Gp} \includegraphics[width=0.40\textwidth]{./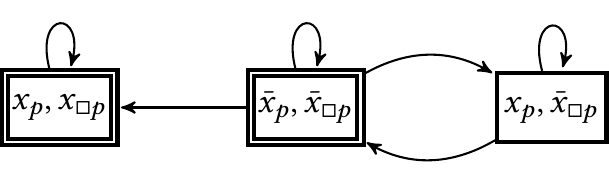}}
\subfigure[Tester $\T_{\eventu\always p}$ with two disjoint components.]{\label{fig:T_FGp} \includegraphics[width=0.40\textwidth]{./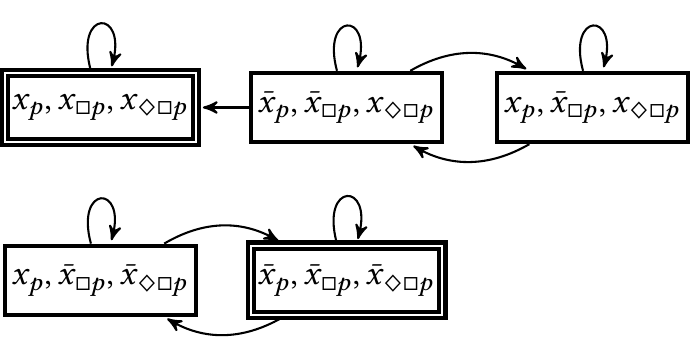}}

\caption{At each state, for a subformula $\psi$, $x_\psi$ denotes $x_\psi=1$, and $\bar{x}_\psi$ denotes $x_\psi = 0$. All states are valid initial states. The double line states are contained in the justice requirement.}
\end{figure}

In the general case, the tester $\T_{\varphi \until \psi}$, where $\varphi$ and $\psi$ are $\LTL(\P)$ formulae, is constructed by \emph{composition} of the testers for their subformulae. The following definition is adapted from~\cite[Section 3.2]{kesten1998algverltl},~\cite[Section 7]{pnueli2008temporaltesters}.

\begin{definition}[Composition of Temporal Testers]
\label{def:compositionJDS}
The synchronous parallel composition of two testers is $(S, \Theta, R, \mathcal{J}) = (S_1, \Theta_1, R_1, \mathcal{J}_1) \interleave (S_2, \Theta_2, R_2, \mathcal{J}_2),$ where $S = S_1 \cup S_2$, $\Theta = \Theta_1 \cap \Theta_2$, $R = R_1 \cap R_2$, and $\mathcal{J}  = \mathcal{J}_1 \cup \mathcal{J}_2$.
\begin{enumerate}
	\item For a unary LTL operator $\op$, a tester $\T_{\op(\varphi)}$ is given by $\T_{\op(p) \mid p \leftarrow \varphi}  \interleave \T_{\varphi},$ where the subscript $p \leftarrow \varphi$ denotes that we replace every instance of $x_p$ and $x_{\op(p)}$ in the first tester, by $x_{\varphi}$ and $x_{\op(\varphi)}$ respectively.
	\item For a binary LTL operator $\op$, a tester $\T_{\op(\varphi_1, \varphi_2)}$ is given by $ \T_{ \op(p,q) \mid p \leftarrow \varphi_1, q \leftarrow \varphi_2}  \interleave \T_{\varphi_1} \interleave \T_{\varphi_2}$, where we replace every instance of $x_p$, $x_q$ and $x_{\op(p,q)}$ in the first tester, by $x_{\varphi_1}$, $x_{\varphi_2}$ and  $x_{\op(\varphi_1, \varphi_2)}$ respectively.
\end{enumerate}
\end{definition}

By using the identities $\eventu p = \true \until p$, $\always p = \lnot \eventu \lnot p$, and $\T_{p \until q}$, we construct $\T_{\eventu p}$ and $\T_{\always p}$, which are shown in Figures~\ref{fig:T_Fp} and~\ref{fig:T_Gp}. By composing them, we obtain $\T_{\eventu \always p}$, shown in Figure~\ref{fig:T_FGp}. Such testers play an important role in proving smaller upper bounds for the rLTL model-checking problem. Towards this, the following results from~\cite{anevlavis2018rLTL} provide recursive bounds on the size of a tester $\T_{\varphi}$ for an $\LTL(\P)$ formula $\varphi$. 

\begin{definition}[Size of a tester]
Given a tester $\T_{\varphi}$ for $\varphi \in \LTL(\P)$, let $\lvert \T_{\varphi} \rvert $ denote its size, i.e., the number of its states, and let $\lvert \T_{\varphi} \rvert_{i}$ be the number of states where $x_\varphi = i$. Then, for any formulae $\varphi, \psi \in \LTL(\P)$, and $i,j,k \in \B$: 
\begin{itemize}
	\item for any unary operator $\op$, $\lvert \T_{\op(\varphi)} \rvert_{i,j}$ is the number of states where $x_{\varphi} = i, x_{\op(\varphi)} = j$,
	\item for any binary operator $\op$, $\lvert \T_{\op(\varphi_1, \varphi_2)} \rvert_{i,j,k}$ is the number of states where $x_{\varphi} = i$, $x_{\psi} = j$, $x_{\op(\varphi,\psi)} = k$.
\end{itemize}
\end{definition}
The number of states in a tester can be decomposed as follows for any $ \varphi, \psi \in \LTL(\P)$: 
\begin{align*}
\lvert \T_{\op(\varphi)} \rvert = \sum_{i,j} \lvert \T_{\op(\varphi)} \rvert_{i,j}, \quad \lvert \T_{\op(\varphi_1, \varphi_2)} \rvert = \sum_{i,j,k} \lvert \T_{\op(\varphi_1, \varphi_2)} \rvert_{i,j,k}.
\end{align*}

\begin{proposition}
\label{prop:testerStatesBounds}
Let $p,q$ be two atomic propositions in $\P$, $\psi_1,\psi_2 \in \LTL(\P)$ be two LTL formulae, and $\op$ denote an LTL operator. The following holds:
	\begin{align}
		\label{eq:numCompose1}
		\lvert \T_{\op(\psi_1)} \rvert_{i, j} &\leq \lvert \T_{\psi_1} \rvert_{i} \cdot \lvert \T_{\op(p)}\rvert_{i, j}, \\
		\label{eq:numCompose2}
		\lvert \T_{\op(\psi_1, \psi_2)} \rvert_{i,j,k} &\leq \lvert \T_{\psi_1}\rvert_i \cdot \lvert \T_{\psi_2} \rvert_j \cdot \lvert \T_{\op(p,q)} \rvert_{i,j,k}.
	\end{align}
\end{proposition}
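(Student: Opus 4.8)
The plan is to argue directly from the composition definition (Definition~\ref{def:compositionJDS}) and the structure of the state space of a composed tester. Recall that for a unary operator $\op$, the tester $\T_{\op(\psi_1)}$ is $\T_{\op(p) \mid p \leftarrow \psi_1} \interleave \T_{\psi_1}$, and the parallel composition takes $S = S_1 \cup S_2$ with the identification of the shared variable $x_p$ with $x_{\psi_1}$ (and $x_{\op(p)}$ with $x_{\op(\psi_1)}$). The key observation is that a state of the composed tester is a function on $\cl(\op(\psi_1)) = \{\op(\psi_1)\} \cup \cl(\psi_1)$, and such a state is determined by its restriction to $\cl(\psi_1)$ (a state of $\T_{\psi_1}$) together with the value $x_{\op(\psi_1)} \in \B$; moreover, not every such pair is realized, because the transition relation $R = R_1 \cap R_2$ and the initial condition force a consistency constraint between $x_{\psi_1}$, $x_{\op(\psi_1)}$, and (in the transition) the next-state values. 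I would formalize this by defining the projection map sending a state $x$ of $\T_{\op(\psi_1)}$ to the pair $\big(x|_{\cl(\psi_1)},\, x_{\op(\psi_1)}\big)$ and showing it is injective.

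First I would fix $i, j \in \B$ and count the states $x$ of $\T_{\op(\psi_1)}$ with $x_{\psi_1} = i$ and $x_{\op(\psi_1)} = j$. Under the injective projection above, each such state maps to a pair consisting of (a) a state of $\T_{\psi_1}$ with $x_{\psi_1} = i$, of which there are $\lvert \T_{\psi_1}\rvert_i$, and (b) a "local" configuration recording $x_p = i$, $x_{\op(p)} = j$ that is consistent with the gadget tester $\T_{\op(p)}$; the number of admissible local configurations is exactly $\lvert \T_{\op(p)}\rvert_{i,j}$. Hence the count is at most the product $\lvert \T_{\psi_1}\rvert_i \cdot \lvert \T_{\op(p)}\rvert_{i,j}$, which is \eqref{eq:numCompose1}. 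The inequality (rather than equality) accommodates the fact that the composed transition relation may rule out some combinations that are individually consistent. The binary case \eqref{eq:numCompose2} is entirely analogous: a state of $\T_{\op(\psi_1,\psi_2)} = \T_{\op(p,q)\mid p \leftarrow \psi_1, q \leftarrow \psi_2} \interleave \T_{\psi_1} \interleave \T_{\psi_2}$ restricted to $\cl(\psi_1)$, $\cl(\psi_2)$, and the fresh symbol $\op(\psi_1,\psi_2)$ determines it (the closures of the two operands overlap only in shared atomic propositions, which are tracked consistently), so fixing $x_{\psi_1} = i$, $x_{\psi_2} = j$, $x_{\op(\psi_1,\psi_2)} = k$ gives at most $\lvert \T_{\psi_1}\rvert_i \cdot \lvert \T_{\psi_2}\rvert_j \cdot \lvert \T_{\op(p,q)}\rvert_{i,j,k}$ states.

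The step I expect to require the most care is verifying that the projection-to-pair map is genuinely injective, i.e.\ that a state of the composed tester carries no information beyond the values it assigns to the subformulae in $\cl(\psi_1)$ (resp.\ $\cl(\psi_1) \cup \cl(\psi_2)$) and to the single new top-level formula. This is immediate from $S \subseteq \B^{\cl(\varphi)}$ in Definition~\ref{def:temporaltesters} together with $\cl(\op(\psi_1)) = \{\op(\psi_1)\} \cup \cl(\psi_1)$, but one must be slightly careful in the binary case that the two operand testers share exactly their common atomic propositions and that the composition identifies these consistently, so that no double-counting or under-counting occurs; this is precisely what the renaming convention $p \leftarrow \psi_1$, $q \leftarrow \psi_2$ in Definition~\ref{def:compositionJDS} ensures. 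Once injectivity is in place, the bounds follow by the product counting described above, and summing over $i,j$ (resp.\ $i,j,k$) recovers the stated decompositions of $\lvert \T_{\op(\psi_1)}\rvert$ and $\lvert \T_{\op(\psi_1,\psi_2)}\rvert$.
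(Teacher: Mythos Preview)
The paper does not actually prove this proposition: it is stated as a result imported from~\cite{anevlavis2018rLTL} (see the sentence immediately preceding Definition~6.5, ``the following results from~\cite{anevlavis2018rLTL} provide recursive bounds\ldots''), and no proof environment appears after the proposition. So there is nothing in the present paper to compare your argument against line by line.

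That said, your proposal is the natural argument and is essentially correct. The decisive point is exactly the one you isolate: by Definition~\ref{def:temporaltesters} a state of $\T_{\op(\psi_1)}$ is a map in $\B^{\cl(\op(\psi_1))}$, and since $\cl(\op(\psi_1)) = \{\op(\psi_1)\}\cup\cl(\psi_1)$, such a state is determined by its restriction to $\cl(\psi_1)$ together with the single bit $x_{\op(\psi_1)}$. Under the renaming $p\leftarrow\psi_1$, the gadget $\T_{\op(p)}$ contributes at most $\lvert\T_{\op(p)}\rvert_{i,j}$ admissible choices of that extra bit once $x_{\psi_1}=i$ is fixed (in fact this count is $0$ or $1$, since $\cl(\op(p))=\{p,\op(p)\}$), and the composition can only discard combinations, never create new ones; this yields~\eqref{eq:numCompose1}. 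The binary case is the same product argument with one more factor.

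One small wording issue: you attribute the inequality to ``the composed transition relation'' ruling out combinations. The state set $S$ in Definition~\ref{def:temporaltesters} is declared independently of $R$, so the slack really comes from the intersection of state constraints in the parallel composition (the informal ``$S=S_1\cup S_2$'' in Definition~\ref{def:compositionJDS} should be read as: states are assignments on the union of variables that restrict to valid states of both components). This does not affect the correctness of your bound, only the explanation of where the $\leq$ originates.
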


\begin{corollary}[Recursive Bounds]
\label{cor:recbounds}
Consider a tester $\T_{\varphi}$ for $\varphi \in \LTL(\P)$. The following recursive bounds hold on its number of states $\vert \T_{\varphi} \rvert$: 
\begin{align}
	\label{eq:atompropT}
	&\text{if $\varphi$ is $p \in \P$}: |\T_{\varphi}| = 2,\\
	\label{eq:negT}
	&\text{if $\varphi$ is $\lnot \psi$}: \forall i,j: \lvert \T_{\neg \psi} \rvert_{i, j} = 	\begin{cases} 
									\lvert \T_{\psi} \rvert_{i}, \text{ if $i \neq j$}, \\ 
									0 \text{ otherwise},
									\end{cases},\\
	\label{eq:allT}
	&\text{if $\varphi$ is $\op(\psi), \ \op \in \{\eventu, \always, \Next\}$}:		\lvert \T_{\varphi} \rvert  \leq 2 \cdot \lvert \T_{\psi} \rvert, \\
	\label{eq:TEAp}	
	&\text{if $\varphi$ is $\eventu \always \psi$}: 	|\T_{\varphi}| = \lvert \T_{\eventu \always \psi} \rvert \leq 3 \cdot \lvert \T_{\psi} \rvert,	\\
	\label{eq:binT}
	&\text{if $\varphi$ is $\op(\psi_1,\psi_2), \ \op \in \{\lor, \land, \implies\}$}:	\lvert \T_{\varphi} \rvert \leq \lvert \T_{\psi_1} \rvert \cdot \lvert \T_{\psi_2} \rvert, \text{ and}	\\
	\label{eq:exotBinT}
	&\text{if $\varphi$ is $ \psi_1 \until \psi_2$ or $\varphi$ is $\psi_1 \release \psi_2$}:	\lvert \T_{\varphi} \rvert \leq 2\cdot \lvert \T_{\psi_1} \rvert \cdot \lvert \T_{\psi_2} \rvert.
\end{align}
\end{corollary}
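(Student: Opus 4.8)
The plan is to prove each of the six bounds by a short case analysis on the outermost operator of $\varphi$, in every case reducing the count to Proposition~\ref{prop:testerStatesBounds} applied to the corresponding \emph{base} tester $\T_{\op(p)}$ (resp.\ $\T_{\op(p,q)}$) and then summing over the finitely many truth values of the arguments. Concretely, summing \eqref{eq:numCompose1}--\eqref{eq:numCompose2} over all indices gives
\[
\lvert \T_{\op(\psi)} \rvert \le \sum_{i} \lvert \T_{\psi} \rvert_{i} \sum_{j} \lvert \T_{\op(p)} \rvert_{i,j},
\qquad
\lvert \T_{\op(\psi_1,\psi_2)} \rvert \le \sum_{i,j} \lvert \T_{\psi_1} \rvert_{i}\lvert \T_{\psi_2} \rvert_{j} \sum_{k} \lvert \T_{\op(p,q)} \rvert_{i,j,k},
\]
so every bound follows once we know, for the relevant base tester, an upper bound on $\sum_{j}\lvert\T_{\op(p)}\rvert_{i,j}$ (resp.\ $\sum_k\lvert\T_{\op(p,q)}\rvert_{i,j,k}$) that is uniform in $i$ (resp.\ in $(i,j)$). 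Since all these base testers are finite and explicitly available — several of them drawn in Example~\ref{ex:T_pUq} and Figures~\ref{fig:T_Fp}--\ref{fig:T_FGp} — this is a finite, mechanical check.

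For the routine cases I would argue as follows. An atomic proposition has a two-state tester directly from Definition~\ref{def:temporaltesters}, which is \eqref{eq:atompropT}. For $\neg\psi$, the base tester $\T_{\neg p}$ is the functional tester carrying exactly the two states with $x_{\neg p}=1-x_p$; under the composition of Definition~\ref{def:compositionJDS} a state of $\T_\psi$ with $x_\psi=i$ has a unique consistent extension, the one with $x_{\neg\psi}=1-i$, and no extension with $x_{\neg\psi}=i$ — this is precisely \eqref{eq:negT}, and it is an equality, not merely $\le$, because the constraint on $x_{\neg\psi}$ is a total function of $x_\psi$; in particular $\lvert\T_{\neg\psi}\rvert=\lvert\T_\psi\rvert$. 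For $\op\in\{\eventu,\always,\Next\}$ the base testers carry at most one state per pair $(x_p,x_{\op(p)})$, so $\sum_j\lvert\T_{\op(p)}\rvert_{i,j}\le 2$ for each $i$, yielding \eqref{eq:allT} (alternatively $\always p=\neg\eventu\neg p$ combined with the negation case works). For $\op\in\{\lor,\land,\implies\}$ the base tester is stateless: $x_{\op(p,q)}$ is a function of $(x_p,x_q)$, so $\sum_k\lvert\T_{\op(p,q)}\rvert_{i,j,k}=1$ and \eqref{eq:binT} follows. For $\until$ (and, dually via $\neg$, for $\release$) the five-state tester of Example~\ref{ex:T_pUq} has at most two states for each $(x_p,x_q)$ — the only pair admitting two is $x_p=1,x_q=0$, where $x_{p\until q}$ may be $0$ or $1$ depending on the future — so $\sum_k\lvert\T_{p\until q}\rvert_{i,j,k}\le 2$, giving \eqref{eq:exotBinT}.

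The one case that needs an idea is \eqref{eq:TEAp} for $\eventu\always\psi$: naively chaining \eqref{eq:allT} twice only yields $\lvert\T_{\eventu\always\psi}\rvert\le 2\lvert\T_{\always\psi}\rvert\le 4\lvert\T_\psi\rvert$. Instead I would treat $\eventu\always$ as a single compound unary operator whose base tester $\T_{\eventu\always p}$ is the composition of $\T_{\eventu q}$ under $q\leftarrow\always p$ with $\T_{\always p}$, i.e.\ the automaton of Figure~\ref{fig:T_FGp} over the variables $\{p,\always p,\eventu\always p\}$. A direct inspection of its reachable states shows the two ``disjoint components'' seen in the figure: states with $x_{\always p}=1$ force $x_{\eventu\always p}=1$ and occur only when $x_p=1$, whereas states with $x_{\always p}=0$ admit both values of $x_{\eventu\always p}$; hence $\T_{\eventu\always p}$ has at most $3$ states for each value of $x_p$ (two when $x_p=0$, three when $x_p=1$). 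Running the same counting argument as for \eqref{eq:numCompose1} on this compound operator — equivalently, grouping the first two factors of the threefold composition $\T_{\eventu q\mid q\leftarrow\always\psi}\parallel\T_{\always r\mid r\leftarrow\psi}\parallel\T_\psi$ — gives $\lvert\T_{\eventu\always\psi}\rvert\le\sum_i\lvert\T_\psi\rvert_i\cdot\#\{x\in\T_{\eventu\always p}:x_p=i\}\le 3\lvert\T_\psi\rvert$. The main, and essentially the only, obstacle is recognizing that $\T_{\eventu\always p}$ must be analyzed as a whole rather than as two nested single-operator compositions; everything else is bookkeeping over the finitely many base testers.
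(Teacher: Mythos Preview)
Your argument is correct and is essentially the only natural way to derive these bounds from Proposition~\ref{prop:testerStatesBounds}: sum \eqref{eq:numCompose1}--\eqref{eq:numCompose2} over the free indices and bound the inner sum by inspecting the explicit base testers $\T_{\op(p)}$ and $\T_{\op(p,q)}$. The case analyses you give (in particular the state counts for $\T_{p\until q}$ and the observation that $\T_{\eventu\always p}$ must be treated as a single compound unary operator to get the factor $3$ rather than $4$) are exactly right.

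Note, however, that the paper does not actually prove Corollary~\ref{cor:recbounds}; it is imported verbatim, together with Proposition~\ref{prop:testerStatesBounds}, from the authors' earlier work~\cite{anevlavis2018rLTL}. So there is no ``paper's own proof'' to compare against here. Your write-up is a faithful reconstruction of what such a proof must look like, and it matches how the paper itself uses these bounds downstream (e.g., the $\eventu\always$ count reappearing in the treatment of $\ralways$ and of bits $2$--$3$ of $\releasedot$ in Appendix~\ref{sec:appendixEfficientFragment}).
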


Similarly to LTL, see Corollary~\ref{cor:LTLmc}, the complexity of the rLTL model-checking problem is proportional to the size of the GBAs constructed, see Theorem~\ref{thm:rLTLmc}. Hence, we are motivated to construct GBAs with the smallest possible number of states and accepting conditions. We focus on elementary temporal testers arising in the study of rLTL formulae and use them to construct smaller GBAs using the following remark. 

\begin{remark}[Link with Generalized B\"uchi Automata]
\label{rem:TesterToGBBA}
For any tester $\T_{\varphi} = (S, \Theta, R, \mathcal{J})$, one can construct a GBA $\G_\varphi = (Q, \Sigma, Q_0, \Delta, \mathcal{F})$ whose runs correspond to the computations of the tester as follows:
\begin{itemize}
	\item $Q = S$, $Q_0 = \left\{ x \in \Theta \mid x_\varphi = 1 \right\}$, 
	and $\mathcal{F} = \mathcal{J}$. 
	\item $\left(q,\sigma,q'\right)\in \Delta$ if and only if $\left(q,q'\right)\in R$ and $\sigma=\left\{p\in\mathcal{P}\,\,\vert\,\, q'_p=1\right\}$.
\end{itemize}
Notice the relation between $Q_0$ and $\Theta$. Since $\T_{\varphi}$ detects whether a computation satisfies $\varphi$ or $\lnot \varphi$, in order to obtain a GBA $\G_\varphi$ whose runs satisfy $\varphi$, it suffices to remove from $\Theta$ any state $x$ satisfying $x_\varphi = 0$.
\end{remark}

\begin{remark}[Justice requirements]
\label{rem:Justices}
In $\T_{p \until q}$, and therefore in $\T_{\eventu p}$, $\T_{\always p}$, the number of justice requirements $J \in \mathcal{J}$ is always 1. Following Definition~\ref{def:compositionJDS}, the composition $\T_{\eventu \always p} = \T_{\eventu p \mid p \leftarrow \always p} \interleave \T_{\always p}$ has two justice requirements:
\begin{align*}
	 \mathcal{J} = \{x \in S \mid \lnot x_{\eventu \always p} \lor x_{\always p} = 1\}  \cup \{x \in S \mid x_{\always p} \lor \lnot x_{p} = 1\}.
\end{align*}
The two justice requirements are met simultaneously at the states $(x_p, x_{\always p}, x_{\eventu \always p})$ and $(\bar{x}_p, \bar{x}_{\always p }, \bar{x}_{\eventu \always p})$. In light of this, we use $\mathcal{J} = \{x \in S \mid (x_p \land x_{\always p} \land x_{\eventu \always p}) \lor \lnot(x_p \lor x_{\always p} \lor x_{\eventu \always p})= 1\}$
 for the tester in Figure~\ref{fig:T_FGp}, while preserving the computations of $\T_{\eventu \always p} = \T_{\eventu p \mid p \leftarrow \always p} \interleave \T_{\always p}$. In this regard, the tester $\T_{\eventu \always p}$ in Figure~\ref{fig:T_FGp} is optimized. 
\end{remark}

We are now ready to proceed onto constructing smaller, specialized GBAs for the fragment $\thefragment$ and prove the refined complexity upper bounds. 
\subsection{Refined complexity bounds}
\label{subsec:refinedbounds}

The next lemma is integral to providing the promised complexity bounds of Theorem~\ref{thm:refinedboundsFinal}.

\begin{lemma}
\label{lem:fragmentbounds}
	Given a set of atomic propositions $\P$, for any $\varphi \in \thefragment$ and any \mbox{$j \in \{1, 2, 3, 4 \}$}:
	\begin{equation}
	\label{eq:testerBound}
		\lvert \T_{\ltl(j,\varphi)} \rvert \leq 2^{\vert \varphi \rvert - \kappa(\varphi)}3^{\kappa(\varphi)},
	\end{equation}
	where $\kappa(\varphi) = \mathrm{card}\left( \left\{ \psi \in \cl(\varphi) \mid \psi = \ralways \psi_1 \right\} \right) + \mathrm{card}\left( \left\{ \psi \in \cl(\varphi) \mid \psi = \psi_1 \releasedot \psi_2 \right\} \right)$ 
	, i.e., the number of distinct subformulae of $\varphi$ of the form $\ralways \psi$ and $\psi_1 \releasedot \psi_2$, and $|\varphi|$ is the length of $\varphi$. 
\end{lemma}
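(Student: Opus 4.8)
The plan is to prove \eqref{eq:testerBound} by structural induction on $\varphi \in \thefragment$, using the recursive bounds of Corollary~\ref{cor:recbounds} together with the $\ltl$ translation of Table~\ref{table:toLTLop}. The key point, already established in Section~\ref{subsec:robustnessinsights}, is that for $\varphi \in \thefragment$ the four LTL formulae $\ltl(j,\varphi)$ are bit-wise independent, so each $\ltl(j,\varphi)$ inherits the subformula structure of $\varphi$ essentially verbatim, with each rLTL operator replaced by a fixed LTL gadget whose tester-size overhead is controlled by Corollary~\ref{cor:recbounds}. Throughout, I will track simultaneously the quantity $|\varphi| - \kappa(\varphi)$ (the number of ``cheap'' subformulae, contributing a factor $2$) and $\kappa(\varphi)$ (the number of $\ralways$/$\releasedot$ subformulae, contributing a factor $3$), and show the tester size never exceeds $2^{|\varphi|-\kappa(\varphi)} 3^{\kappa(\varphi)}$.

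First I would handle the base case: if $\varphi = p \in \P$, then $\ltl(j,p) = p$, $|\T_p| = 2$ by \eqref{eq:atompropT}, and $|\varphi| = 1$, $\kappa(\varphi) = 0$, so the bound holds with equality. For the inductive step I would treat each operator in turn. For $\neg$, $\land$, $\lor$, and $\nextdot$: by Table~\ref{table:toLTLop} these translate to $\neg$, $\land$, $\lor$, $\Next$ applied bit-wise to the translated subformulae, so \eqref{eq:negT}, \eqref{eq:binT}, \eqref{eq:allT} give $|\T_{\ltl(j,\varphi)}| \le \prod |\T_{\ltl(j,\psi_\ell)}|$ or $\le 2|\T_{\ltl(j,\psi)}|$, and since the new subformula is cheap (not of the form $\ralways$ or $\releasedot$), the exponents of the bound add correctly — e.g.\ for a binary operator $|\varphi| - \kappa(\varphi) = (|\psi_1|-\kappa(\psi_1)) + (|\psi_2| - \kappa(\psi_2)) + 1$ when the closures are disjoint, and when they overlap the bound only improves. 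For $\reventu$, $\ltl(j,\reventu\varphi) = \eventu \ltl(j,\varphi)$, handled by \eqref{eq:allT} with the extra factor $2$ matching the extra cheap subformula. For $\rimplies$: since $\varphi \in \thefragment$, the antecedent contains no $\ralways$ or $\releasedot$, so by Proposition~\ref{prop:implicationequivrLTL} the implication is semantically equivalent to $\neg\varphi \lor \psi$, and I can use the tester for the latter, reducing to the already-handled $\neg$ and $\lor$ cases. For $\untildot$, $\ltl(j,\varphi\untildot\psi) = \ltl(j,\varphi)\until\ltl(j,\psi)$: here \eqref{eq:exotBinT} gives the factor $2\cdot|\T_{\psi_1}|\cdot|\T_{\psi_2}|$, and again the factor $2$ is absorbed by the one new cheap subformula $\varphi\untildot\psi$.

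The genuinely delicate cases are $\ralways$ and $\releasedot$, which is where the $3$'s enter. For $\ralways\varphi$, Table~\ref{table:toLTLop} gives $\ltl(1,\ralways\varphi) = \always\ltl(1,\varphi)$, $\ltl(2,\ralways\varphi) = \eventu\always\ltl(2,\varphi)$, $\ltl(3,\ralways\varphi) = \always\eventu\ltl(3,\varphi)$, $\ltl(4,\ralways\varphi) = \eventu\ltl(4,\varphi)$. For $j\in\{1,4\}$ the overhead is a single $\always$ or $\eventu$, costing a factor $2$ by \eqref{eq:allT}, which is within the allotted factor $3$. For $j=3$, $\always\eventu$ naively costs $2\cdot 2 = 4$; the argument here is that $\always\eventu$ can be realized by a tester of size at most $3\cdot|\T_{\ltl(3,\varphi)}|$ — dually to the $\eventu\always$ bound \eqref{eq:TEAp}, since $\always\eventu\psi = \neg\eventu\always\neg\psi$ and negation does not increase the number of states per \eqref{eq:negT}. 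For $j=2$, the $\eventu\always$ bound is exactly \eqref{eq:TEAp}, a factor $3$. Taking the maximum over $j$, the overhead is a factor at most $3$, and since $\ralways\varphi$ contributes $+1$ to $\kappa$, the bound multiplies by $3$ as required: $|\varphi| - \kappa(\varphi) = |\varphi'| - \kappa(\varphi')$ and $\kappa(\varphi) = \kappa(\varphi')+1$ where $\varphi' $ is the underlying subformula (modulo closure-overlap, which again only helps). The $\releasedot$ case is analogous but requires more bookkeeping: from Table~\ref{table:toLTLop}, $\ltl(1,\varphi\releasedot\psi) = \ltl(1,\varphi)\release\ltl(1,\psi)$ costs a factor $2$ by \eqref{eq:exotBinT}; $\ltl(2,\varphi\releasedot\psi) = \eventu\ltl(2,\varphi)\lor\eventu\always\ltl(2,\psi)$ — here I must bound the tester for $\eventu A \lor \eventu\always B$, which by \eqref{eq:binT}, \eqref{eq:allT}, \eqref{eq:TEAp} is at most $(2|\T_A|)\cdot(3|\T_B|) = 6|\T_A||\T_B|$, but this must be compared against the target, where $\varphi\releasedot\psi$ contributes $+1$ to $\kappa$ — so relative to $2^{|\varphi|+|\psi|-\kappa(\varphi)-\kappa(\psi)}3^{\kappa(\varphi)+\kappa(\psi)}$ I have the budget of an extra factor $3$ for one new $\kappa$-subformula but the two top-level operators ($\eventu$ and $\eventu\always$, plus the $\lor$) together cost $6 > 3$; resolving this mismatch is the main obstacle. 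The way out is that in $\eventu A\lor\eventu\always B$, the disjuncts share no common subformula that is double-counted, and — more importantly — one must argue (as in the figures for $\T_{\eventu\always p}$ and Remark~\ref{rem:Justices}) that the composed tester $\T_{\eventu A\lor\eventu\always B}$ has only $3$ reachable ``modes'' rather than $6$, because the $\eventu\always$-component's three-state structure already subsumes the two-state $\eventu$-component. I expect the cleanest formulation is to prove directly, by the composition rules of Definition~\ref{def:compositionJDS} and the explicit small testers for $\eventu p$, $\always p$, $\eventu\always p$, $\release$, that $|\T_{\ltl(j,\varphi\releasedot\psi)}| \le 3\cdot|\T_{\ltl(j,\varphi)}|\cdot|\T_{\ltl(j,\psi)}|$ for every $j$, mirroring \eqref{eq:TEAp}, and likewise $|\T_{\ltl(j,\ralways\varphi)}| \le 3\cdot|\T_{\ltl(j,\varphi)}|$; granting these two auxiliary inequalities, the induction closes immediately with the exponent arithmetic above.
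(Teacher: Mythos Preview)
Your proposal is correct and follows essentially the same approach as the paper: structural induction using the recursive tester bounds of Corollary~\ref{cor:recbounds}, with the $\releasedot$ operator singled out as the case requiring specialized testers achieving the factor-$3$ bound $|\T_{\ltl(j,\varphi\releasedot\psi)}| \le 3\cdot|\T_{\ltl(j,\varphi)}|\cdot|\T_{\ltl(j,\psi)}|$ for each $j$. The paper's proof fills in precisely the gap you identify at the end: for bit~2 it uses the semantic equivalence $\eventu\always q \lor \eventu p \equiv \eventu(\always q \lor p)$ and exhibits an explicit tester; for bit~3 it builds a bespoke tester for $\always\eventu q \lor \eventu p$ and verifies soundness and completeness directly; bits~1 and~4 are handled by \eqref{eq:exotBinT} and the identity $\eventu p \lor \eventu q = \eventu(p\lor q)$ respectively.
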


\begin{proof}
	To maintain a streamlined presentation we provide a sketch of the proof here. The full proof of Lemma~\ref{lem:fragmentbounds} is found in Appendix~\ref{sec:appendixEfficientFragment}.
	
	Given any $\varphi \in \thefragment$ we prove the claimed bound on $\lvert \T_{\ltl(j,\varphi)} \rvert$, $j \in \{1, 2, 3, 4 \}$,  by induction on the length of the formula for all operators except for the $\releasedot$ operator. In the base case where $\varphi$ is of length 1, i.e., an atomic proposition, \eqref{eq:testerBound} is satisfied as an equality given \eqref{eq:atompropT}. For the induction step, if $\varphi$ is of the form $\op(\psi)$, with $\op$ any unary rLTL operator, or of the form $\op(\psi_1, \psi_2)$, with $\op$ any binary operator except for $\releasedot$, then the claim in \eqref{eq:testerBound} is proved using \eqref{eq:negT} through \eqref{eq:exotBinT}.
	
	Finally, in the case of the $\releasedot$ operator we construct specialized testers for each bit, which we prove to be correct and to satisfy the claimed bound \eqref{eq:testerBound} in the Appendix.
\end{proof}


So far, Lemma~\ref{lem:fragmentbounds} shows that we can construct specialized temporal testers of smaller size. These, in turn, can be used to construct smaller GBAs as per Remark~\ref{rem:TesterToGBBA}. To conclude the proof of Theorem~\ref{thm:refinedboundsFinal}, we consider Algorithm~\ref{alg:earlystopping} for rLTL model-checking. This algorithm model-checks the LTL formulae corresponding to each bit of the truth value of an rLTL formula, and by exploiting the order of the truth values in $\B_5$, \mbox{$0000\prec 0001\prec 0011\prec 0111\prec 1111$}, benefits from an early stopping criterion if the value of a bit is zero. 
In particular, consider the language $L(\G)$ generated by a GBA $\G$, and assume we wish to model-check a formula $\varphi$ of the form $\psi_1 \rimplies \psi_2 \in \thelargerfragment$, as defined in \eqref{eq:thebiggerfragment}. Denote, again, each truth value in $\B_5$ as:
	\begin{align*}
		\B_5 & = \left\{0000,0001,0011,0111,1111\right\} = \left\{\bbit[0],\bbit[1], \bbit[2], \bbit[3], \bbit[4]\right\}.
	\end{align*}
Let $b(L(\G),\varphi) \in \B_5$ be the computed truth value. Algorithm~\ref{alg:earlystopping} first model-checks the corresponding LTL formula $\varphi_4$, which is of the form $\ltl(4,\psi_1) \implies \ltl(4,\psi_2)$. 
From Lemma~\ref{lem:fragmentbounds} and Remark~\ref{rem:TesterToGBBA}, this first model-checking step makes use of an automaton with the number of states as in \eqref{eq:refinedcomplexityFinal}. There are two possible outcomes: \\
\hspace*{0.5em}1) the formula is violated, i.e., $b(L(\G),\varphi) = \bbit[0]$, Algorithm~\ref{alg:earlystopping} terminates and Theorem~\ref{thm:refinedboundsFinal} holds; \\
\hspace*{0.5em}2) the formula is satisfied, i.e., $b(L(\G),\varphi) \succeq \bbit[1]$, and we need to check bit $3$. \\
However, having checked bit $4$, and given that neither $\psi_1$, nor $\psi_2$ contain a $\rimplies$ operator since they belong in $\thefragment$, it follows from the semantics in Table~\ref{table:toLTLop} that $V_3(\sigma, \varphi) = W(\sigma, \ltl(3,\psi_1) \implies \ltl(3,\psi_2))$. We can, hence, perform a second model-checking step, using an automaton with size, again, as in \eqref{eq:refinedcomplexityFinal}, and decide if $b(L(\G),\varphi) = \bbit[1]$ or $b(L(\G),\varphi) \succeq \bbit[2]$. The two other bits are computed similarly if needed. 

Overall, the number of model-checking steps that the algorithm goes through depends on $b(L(\G),\varphi)$. If we have exactly $\ell < 4$ bits set to $1$ in the valuation, then we need $\ell+1$ model-checking steps, i.e., to check that the bit $\ell$ is valued 1 and that the next bit is valued $0$. The fourth verification disambiguates between the values $\bbit[3]$ and $\bbit[4]$. Hence, if $b(L(\G),\varphi) = \bbit[\ell],$ we do $\min(\ell+1,4)$ model-checking steps. Consequently, by using Algorithm~\ref{alg:earlystopping}, the rLTL model-checking problem for any $\varphi \in \thelargerfragment$ is solved by performing at most $4$ LTL model-checking steps, each using an automaton with at most $\bigO \left(  2^{|\varphi| - \kappa(\varphi)} 3^{\kappa(\varphi)} \right)$ states. 
This concludes the proof of Theorem~\ref{thm:refinedboundsFinal} and this section.

\section{Case studies}
\label{sec:casestudies}
In the introduction we argued that system correctness is not sufficient for a good design as the system needs to also be robust. Towards this goal, we provide a series of case studies that exemplify the usefulness of rLTL when compared to standard LTL. 
For rLTL model-checking we use Evrostos{\footnote{Evrostos is available at: {\color{blue}https://github.com/janis10/evrostos}.}}\cite{anevlavis2019evrostos}. Evrostos is a tool for verifying rLTL specifications in the $\thelargerfragment$ fragment, and it is built on top of an LTL model-checker. Currently, it supports two modes: (1) {\tt{smv-mode}}, which uses NuSMV{\footnote{NuSMV is available at: {\color{blue}http://nusmv.fbk.eu}.}}~\cite{NuSMV} as the underlying model-checker; and (2) {\tt{pml-mode}}, which uses SPIN{\footnote{SPIN is available at: {\color{blue}http://spinroot.com}.}}~\cite{Spin}. 
In this section, we compare rLTL and LTL verification in terms of the ability to guarantee robustness, the information provided via verification, and the computational costs. 
Our case studies{\footnote{The files to replicate our case studies are available at: {\color{blue}https://github.com/janis10/evrostos/tree/master/case\_studies/}.}} 
show that model-checking in the proposed fragment $\thelargerfragment$:	\\
\hspace*{0.5em}1) identifies a non-robust system, which cannot be directly done with standard LTL; \\
\hspace*{0.5em}2) provides access to fine-grained information about the degree of specification violations, which can be useful towards improving the design;	\\
\hspace*{0.5em}3) incurs a relatively small computational overhead with respect to LTL model-checking; \\
\hspace*{0.5em}4) scales similarly to the LTL model-checking with respect to the size of the given formula, although slightly more expensive. 

\subsection{Case study 1: Aircraft Wheel Brake System (WBS)}
For our first case study, we revisit the aircraft WBS described in the Aerospace Information Report (AIR) 6110~\cite{AIR6110}. As aerospace systems have become more complex with the passing of time, it is essential that their development proceeds in a systematic way that minimizes errors. Towards this, the Federal Aviation Administration (FAA) specifies methods and guidelines~\cite{ARP4761,ARP4754A} for manufacturers, e.g., Boeing and Airbus, to guarantee that the development of products meets the necessary performance and safety requirements. 
AIR6110 provides an application of the specified processes~\cite{ARP4761,ARP4754A} to the 
example of a WBS. The WBS comprises a complex hydraulic plant managing two landing gears, each with four wheels, and controlled by an independent computer system. 

An extended formal verification of the WBS is found in~\cite{bozanno2015wbs} and the {\tt{.smv}} models used can be found in the references therein. The formal modeling and analysis are based on the integration of the contract-based design tool OCRA~\cite{OCRA}, the model-checker nuXmv~\cite{nuXmv}, and the xSAP platform for model-based safety analysis~\cite{XSAP}. 

\subsubsection{Formal specifications:}
A number of safety requirements from the AIR6110 document are formalized in LTL and are expressed as reactive specifications, where the assumption is on the environment of the WBS and the guarantee is the exact safety specification~\cite{bozanno2015wbs}. Here we focus on the requirement \emph{``S18-WBS-R-0325-wheelX: never inadvertent braking of wheel X without locking''}. The environment assumption for the safety specifications is that \emph{``at all times the power of the system is on, the power of the hydraulic pumps is on, and the hydraulic supplies maintain their nominal values''}. The above assumption and guarantee are formalized as a reactive LTL specification: 

\vspace{-2.5mm}
\begin{align}
\label{spec:wbs_reac_LTL}
	&\always \left( 
				\begin{aligned}	
					\bigwedge_{i=1}^{2} \tt{power_i} \bigwedge_{i=1}^{2} \tt{pump\_power_i} 
					\bigwedge_{i=1}^{2} \tt{hydraulic\_supply_i}=10
				\end{aligned} \right) \implies \\
	&\qquad\qquad\qquad\qquad\qquad\qquad\qquad 
	\always \neg \left(
				 \begin{aligned} 
				 		\neg&\tt{mechanical\_pedal_L} \, \land \, \tt{wheel\_status}=\tt{rolling} \\
						 &\land \, \tt{wheel\_braking\_force}>0 \, \land \, \tt{ground\_speed>0} 
				\end{aligned} \right) 	\nonumber
	,
\end{align}
where 
$\tt{power_i}$, $\tt{pump\_power_i}$, $\tt{hydraulic\_supply_i}$ are the $i$-th system's power, pump power (both boolean), and hydraulic supply (integer) respectively, $\tt{mechanical\_pedal_L}$ (boolean) is true if the left pedal is pressed, {\tt{ground\_speed}} (integer) is the aircraft's current speed relative to the ground, {\tt{wheel\_status}} is either rolling or stopped, and {\tt{wheel\_braking\_force}} (integer) is the force applied by the brakes to the wheel. 

The development of the WBS in the AIR6110 document is described through four evolutionary architectures:\\
\hspace*{0.5em}1) Arch1: 
comprises one Braking System Control Unit (BSCU) and one Hydraulic Circuit (HC) backed by an accumulator. \\
\hspace*{0.5em}2) Arch2: 
includes additional backup components, i.e., two BSCUs, a green HC, and a blue HC. \\
\hspace*{0.5em}3) Arch3: the two BSCUs of the control system are replaced by one dual channel BSCU. \\
\hspace*{0.5em}4) Arch4: accumulator placement is modified, a link from the control system validity to the selector valve in the physical system is added.
\\
For more details see~\cite{bozanno2015wbs}.

\subsubsection{Example scenario:}
We consider the Arch4 architecture and investigate how rLTL identifies a non-robust system, which cannot be directly done in LTL. While the nominal Arch4 model is correct and robust, we introduce a modification that makes it non-robust. In particular, we inject a bug in the sensor of the left pedal that makes it periodically miss the pressing of the pedal. This results in the BSCU not always receiving an electrical signal when the left pedal is pressed. 
To demonstrate our case, we consider a scenario where the environment assumption is violated finitely many times. This is reasonable as during the course of a flight, there can be perturbations to the environment assumption, but we expect the assumption to stabilize and not fail catastrophically. For example, the power input of the system might be interrupted, but eventually becomes stable. 

Model-checking separately the guarantee in \eqref{spec:wbs_reac_LTL} under the model with the sensor bug returns \false. Model-checking separately the assumption in \eqref{spec:wbs_reac_LTL} under the environment scenario above returns \false. However, model-checking the LTL specification in \eqref{spec:wbs_reac_LTL} under the discussed scenario returns \true. This is a consequence of the fact that in LTL, violation of the assumption leads to vacuous satisfaction of the specification. 

In contrast to LTL model-checking, we evaluate the corresponding rLTL reactive specification:
\begin{align}
\label{spec:wbs_reac_rLTL}
	&\ralways \left( 
				\begin{aligned}	
					\bigwedge_{i=1}^{2} \tt{power_i} \bigwedge_{i=1}^{2} \tt{pump\_power_i} 
					\bigwedge_{i=1}^{2} \tt{hydraulic\_supply_i}=10
				\end{aligned} \right) \rimplies \\
	&\qquad\qquad\qquad\qquad\qquad\qquad\qquad 
	\ralways \neg \left(
				 \begin{aligned} 
				 		\neg&\tt{mechanical\_pedal_L} \, \land \, \tt{wheel\_status}=\tt{rolling} \\
						 &\land \, \tt{wheel\_braking\_force}>0 \, \land \, \tt{ground\_speed>0} 
				\end{aligned} \right) 	\nonumber
	.
\end{align}
Using Evrostos, the resulting rLTL truth value is $0011$. This is interpreted as the guarantee being both satisfied and violated infinitely often under the environment of this scenario. To be more precise, the assumption is eventually always satisfied, meaning that its truth value is $0111$. However, the sensor pedal does not always pick up the pressing of the left pedal, meaning that it misses infinitely often during a system execution and, hence, the truth value of the guarantee is $0011$. By the semantics of robust implication in \eqref{eq:semanticsrimplies} we expect the truth value of the specification to be $0011$, which is what Evrostos returns. 

The above case study demonstrates the fact that the LTL implication cannot provide any actual information about the guarantee of a reactive specification whenever the assumption fails. Contrary, the rLTL implication does really verify whether a guarantee is satisfied, is violated, and to what degree.

\subsection{Case study 2: Meaningful reactivity rLTL patterns}
\label{subsec:reactivitypatterns}
For the second case study, we exhibit the practicality of the proposed fragment $\thefragment$ in Definition~\ref{def:rLTLtildaFinal}, and illustrate how rLTL provides more insight when a specification is violated compared to LTL. We first consider reactivity patterns of practical importance that occur commonly in the specification of concurrent and reactive systems~\cite{dwyer1999patterns}. Typical behaviors include the occurrence of a given event during system execution, such as absence, existence, and universality, or the relative order in which multiple events occur during system execution, such as precedence and response. The following corollary stems from studying all the relevant LTL patterns~\cite{dwyer1999patterns}.

\begin{corollary} 
	The relevant reactivity patterns~\cite{dwyer1999patterns} fall under the $\thefragment$ fragment, as described in Definition~\ref{def:rLTLtildaFinal}, when written in rLTL. 
\end{corollary}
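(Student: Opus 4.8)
The plan is to verify the claim by an exhaustive but routine inspection of the Dwyer--Avrunin--Corbett specification pattern catalogue~\cite{dwyer1999patterns}, which consists of finitely many patterns (absence, existence, bounded existence, universality, precedence, response, precedence chain, response chain), each instantiated in finitely many scopes (globally, before $R$, after $Q$, between $Q$ and $R$, after $Q$ until $R$). For each pattern/scope pair I would (i) take its LTL template, first rewriting any weak-until $\varphi\,\mathcal{W}\,\psi$ into the paper's primitive operators via the identity $\varphi\,\mathcal{W}\,\psi \equiv \psi \release (\varphi \lor \psi)$ so that the formula lies in $\LTL(\P)$ as defined in Definition~\ref{def:LTLsyntax}, and (ii) form its rLTL counterpart by replacing every LTL temporal operator with its dotted version and every $\implies$ with $\rimplies$, which is exactly the way rLTL versions of patterns are built throughout Section~\ref{subsec:semanticsexamples} (e.g.\ the request--response pattern $\always(p\implies\eventu q)$ becomes $\ralways(p\rimplies\reventu q)$).

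The verification then reduces to a single syntactic observation about these dotted formulae: each occurrence of $\rimplies$ has, as its antecedent, either a state formula --- a Boolean combination of atomic propositions containing no temporal operator at all --- or a formula of the shape $\reventu\chi$, possibly conjoined with state formulas, which is precisely the ``$R$ eventually occurs'' guard produced by the before-$R$ and between-$Q$-and-$R$ scopes. Concretely, the response patterns contribute antecedents that are atomic (as in $\ralways(P \rimplies \reventu S)$), the scoped patterns contribute antecedents such as $\reventu R$ or $Q \land \neg R \land \reventu R$, and every remaining $\rimplies$ sits at the outermost level of a scope guard; in no case does an $\ralways$ or a $\releasedot$ occur inside an antecedent, because in the Dwyer catalogue the ``always'' and ``release / weak-until'' constructs appear only in consequents and pattern bodies, never in a scope guard. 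By Definition~\ref{def:rLTLtildaFinal} this is exactly the membership condition for $\thefragment$, so each dotted pattern lies in $\thefragment$, which proves the corollary.

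I expect the main obstacle to be pure bookkeeping: making sure the enumeration of pattern/scope pairs is genuinely complete and that the chosen LTL representatives are brought into the primitive syntax of Definition~\ref{def:LTLsyntax} without inadvertently pushing a $\release$ (hence a $\releasedot$) or an $\always$ (hence an $\ralways$) into the antecedent of some implication --- the precedence patterns and the after-$Q$ scopes, classically stated with weak-until, require the most care here. A secondary point is pinning down which patterns count as ``relevant'': the counting-flavoured bounded-existence patterns are not meaningfully expressible in rLTL, whose semantics only distinguishes zero, finite, and infinite, but even for those the purely syntactic membership in $\thefragment$ still holds, so the statement is insensitive to this choice. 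Beyond Definition~\ref{def:rLTLtildaFinal} and the rLTL-to-LTL correspondence in Table~\ref{table:toLTLop} no real argument is needed; the whole content is the claim that the catalogue, once dotted, never violates the antecedent restriction.
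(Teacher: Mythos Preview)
Your proposal is correct and follows essentially the same approach as the paper: both arguments rest on the single syntactic observation---verified by exhaustive inspection of the Dwyer catalogue after rewriting weak-until as $\psi \release (\varphi\lor\psi)$---that no antecedent of an implication contains $\always$ or $\release$, hence no antecedent of $\rimplies$ contains $\ralways$ or $\releasedot$. The only cosmetic difference is that the paper additionally invokes Proposition~\ref{prop:implicationequivrLTL} to rewrite each such $\varphi\rimplies\psi$ as $\neg\varphi\lor\psi$, thereby eliminating the implications altogether, whereas you appeal directly to the membership clause of Definition~\ref{def:rLTLtildaFinal}; both routes land in $\thefragment$ for the same reason.
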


It is the case for all these patters that the antecedent of any nested implication does not contain a $\always$ or a $\release$ operator. Hence, the same holds for their rLTL counterparts. By Proposition~\ref{prop:implicationequivrLTL}, we can equivalently write any robust implication formula $\varphi \rimplies \psi$ in these patterns, as $\neg \varphi \lor \psi$, which makes them immediately part of the efficient fragment $\thefragment$. We verified this for the 97 LTL formulae in~\cite{dwyer1999patterns}.

\begin{remark}
	A number of the patterns~\cite{dwyer1999patterns} are expressed using the ``weak until'' operator $\mathcal{W}$, which is related to the $\until$ operator by the LTL semantic equivalence between $p \mathcal{W} q$ and $p \until (q \lor \always p)$, and to the $\release$ operator by the LTL semantic equivalence between $p \mathcal{W} q$ and $q \release (q \lor p)$. It can be verified that in the context of rLTL, only the second equivalence captures the desired meaning of a robust version of the $\mathcal{W}$ operator. Therefore, to obtain the rLTL versions of these patterns, we first replace every subformula of the form $p \mathcal{W} q$ with $q \release (q \lor p)$.
\end{remark}

\subsubsection{Benchmark: Rigorous Examination of Reactive Systems (RERS)}
The second goal of this case study is to showcase how rLTL provides more fine-grained information when a specification is violated, as opposed to LTL, by mapping an LTL \false boolean value to different shades of false. To show this, we utilize the benchmarks found in the RERS Challenge~\cite{RERS}. The RERS Challenge contains a rich repository of problems of increasing complexity. 

Using available benchmarks, we analyze meaningful reactive specifications that fall under the patterns~\cite{dwyer1999patterns}, that is $160$ formulae spanning from RERS 2016 to RERS 2019. We use the provided {\tt{promela}} models from the RERS LTL parallel track and evaluate the rLTL counterparts of the specifications therein using Evrostos~\cite{anevlavis2019evrostos}. Our findings are summarized in Table~\ref{tab:rers}. 

\begin{table}[t!]
\caption{Occurrence frequency for each different rLTL truth value for $160$ properties from the RERS benchmark.}
\label{tab:rers}
\bgroup
\vspace{-2mm}
\begin{tabular}{cccccc}
\toprule
& \multicolumn{5}{c}{rLTL Truth Value} \\
\toprule
& $1111$ & $0111$ & $0011$ & $0001$ & $0000$ \\
\toprule
Frequency ($160$ formulae) & $53.75\%$ & $13.75\%$ & $25.625\%$ & $4.375\%$ & $2.5\%$	 \\
\midrule
\specialcell{Frequency ($70$ falsified formulae \\ that admit a weakened version)} & $-$ & $31.43\%$ & $58.57\%$ & $10\%$ & $0\%$	 \\
\bottomrule
\end{tabular}
\egroup
\end{table}

The considered set is balanced between falsified and satisfied specifications. Focusing on the falsified formulae, one appreciates the variation of the different shades of false that rLTL provides. More specifically, Table~\ref{tab:rers} indicates that, empirically, it is rarely the case that a falsified reactive specification fails catastrophically. Instead, a weaker version holds most of the times. In particular, when looking only at the formulae that do admit a weaker version according to Proposition~\ref{prop:weakformula}, the value $0000$ is not observed at all. To better interpret the above analysis, consider as an example the truth value $0111$. This value can be actually understood as ``the safety specification holds with a delay'', i.e., it is violated only finite times over any infinite trace of the system. This information can guide the designer towards more efficiently fixing the faulty model, so as to trace the root of the problem, or can provide insight about modifying a possibly inaccurate specification. 


\subsection{Case study 3: Studying the complexity blowup between LTL and rLTL}
We now aim to meaningfully compare the runtimes between LTL model-checking, and rLTL model-checking in the fragment $\thelargerfragment$. Towards this, we study the \emph{time complexity blowup}, $\zeta$, between LTL and rLTL, which is defined below. 

The complexity of the rLTL model-checking problem for any $\varphi \in \thelargerfragment$ , with respect to the GBA constructed for $\varphi$, is between $\bigO \left( 2^{| \varphi |} \right)$ and $\bigO \left( 3^{| \varphi |} \right)$ as Theorem~\ref{thm:refinedboundsFinal} establishes. 
Similarly, the complexity of the LTL model-checking problem for the corresponding LTL formula $\varphi_1$ is $\bigO \left( 2^{| \varphi_1 |} \right)$. Recall $\varphi_1$ is obtained as the LTL version of $\varphi$ simply by substituting the rLTL operators with their LTL counterparts.
Let the times required to solve the LTL and the rLTL model-checking problems be $t_{LTL}$ and $t_{rLTL}$ respectively. We know that $t_{LTL}$ is proportional to $2^{| \varphi_1 |}$, and notice that $| \varphi | = | \varphi_1 |$. Furthermore, $t_{rLTL} \geq t_{LTL}$, and, hence, we can write $t_{rLTL}=2^{\zeta | \varphi |}$, $\zeta \geq 1$. Then, we ask what is the exponent $\zeta$ that describes the overhead, i.e., what is the time complexity blowup. From the expressions above we obtain:
\vspace{-1.5mm}
\begin{align*}
	\zeta=1+\frac{\log_2 \Big( \frac{t_{rLTL}}{t_{LTL}} \Big)}{| \varphi |}	.
\end{align*}
Since the time complexity of rLTL model-checking for the proposed fragment $\thelargerfragment$ is proportional to at most $3^{| \varphi |}$, we have an upper bound for $\zeta$ of \mbox{$\log_{2}(3) = 1.58$}. 

\begin{figure}[t!]
\begin{minipage}{0.45\linewidth}
	\begin{tabular}{ccc}
	\toprule
	\multicolumn{3}{c}{rLTL Times (sec.)} \\
	\midrule
	$\min(t_{\rLTL})$ & $\max(t_{\rLTL})$ & $\text{mean}(t_{\rLTL})$ \\
	0.26 & 726.8 & 66.3 \\
	\bottomrule
	\multicolumn{3}{c}{LTL Times (sec.)} \\
	\midrule
	$\min(t_{\LTL})$ & $\max(t_{\LTL})$ & $\text{mean}(t_{\LTL})$ \\
	0.04 & 291.0 & 26.3 \\
	\bottomrule
	\multicolumn{3}{c}{Time Complexity Blowup} \\
	\midrule
	$\min(\zeta)$ & $\max(\zeta)$ & $\text{mean}(\zeta)$ \\
	1.016 & 1.122 & 1.058 \\
	\bottomrule
	\end{tabular}
\end{minipage}
\begin{minipage}{0.45\linewidth}
	\subfigure{\includegraphics{./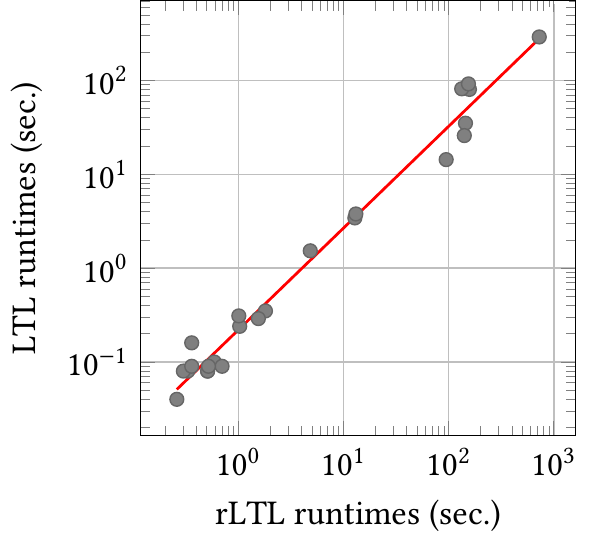}}
\end{minipage}
\vspace{-4mm}
\caption{Automated Air Traffic Control System. Left: minimum, maximum, and mean runtimes for rLTL and LTL model-checking, and time complexity blowup between rLTL and LTL. Right: Comparison between runtimes for rLTL and LTL (logarithmic scale). Computed over 24 experiments.} 
\label{fig:AAC}
\end{figure}

We use as a benchmartk the model of an Automated Air Traffic Control System~\cite{zhao2014acc}, designed for the Automated Airspace Concept (AAC), and the specifications therein. ACC is a high-level generic framework proposed as a candidate for the Next Generation Air Traffic Control System, which was under development at NASA. The goal of ACC is to always ensure the safe separation of commercial aircrafts within a given airspace sector, 
in order to prevent potential collisions. Figure~\ref{fig:AAC} shows on the left the minimum, maximum, and mean runtimes to model-check 12 safety and reactivity specifications, over 2 system models (original and abstract model) for a total of 24 experiments, and on the right how these runtimes are distributed. 
At a first glance, rLTL verification is somewhat more expensive computationally, which is expected as its runtime is proportional to at most $3^{| \varphi |}$ for a formula $\varphi \in \thelargerfragment$. However, observe that the time complexity blowup, $\zeta$, is well below this upper bound, even at its maximum value in this benchmark, meaning that the empirical time complexity of rLTL for the fragment we consider is close to that of LTL. 

\subsection{Case study 4:  Scalability}

\begin{figure}[t!]
\centering
\includegraphics[width=0.375\linewidth]{./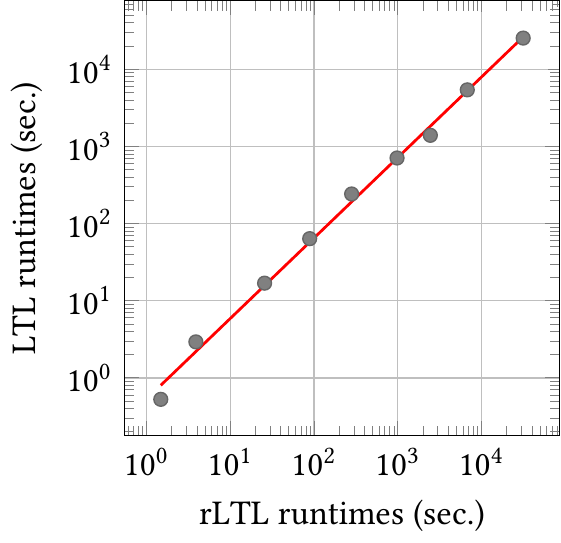}
\caption{Scalability comparison: model-checking runtimes for \eqref{eq:LTLeat1} and \eqref{eq:rLTLeat1} for $n = 1, \dots, 9$ (logarithmic scale).}
\label{fig:scalability}
\end{figure}

For our last set of experiments, we present a scalability case study for rLTL within the proposed fragment. To this end, we select the well-known model of the \emph{dining philosophers} and consider the following specification, $\phi$, saying that if whenever a philosopher is ready they eventually eat, then the first philosopher will eventually eat:
\begin{align*}
	\left( \bigwedge_{i=1}^n \always \eventu ready_i \implies \always \eventu eat_i \right) \implies eat_0.
\end{align*}
The above LTL formula contains nested implications with an $\always$ operator, which are not allowed in the proposed fragment $\thelargerfragment$. However, we can equivalently rewrite the LTL formula as:
\begin{align}
\label{eq:LTLeat1}
	\left( \bigwedge_{i=1}^n \neg (\always \eventu ready_i) \lor \always \eventu eat_i \right) \implies eat_0,
\end{align}
and then the corresponding rLTL specification $\varphi$ is:
\begin{align}
\label{eq:rLTLeat1}
	\left( \bigwedge_{i=1}^n \neg (\ralways \reventu ready_i) \lor \ralways \reventu eat_i \right) \rimplies eat_0. 
\end{align}
This specification presents a complex temporal structure as it contains multiple $\ralways$ and $\reventu$ operators, as well as an $\rimplies$ operator on the outermost level and, hence, is an appropriate candidate for our case study. We fix the model to have ten philosophers and then record the runtimes for evaluating \eqref{eq:LTLeat1} and \eqref{eq:rLTLeat1} for $n = 1, \dots, 9$. Our findings are summarized in Fig.~\ref{fig:scalability}. The specification $\varphi$ belongs to $\thelargerfragment$ and by Theorem~\ref{thm:refinedboundsFinal} can be model-checked using automata of size $\bigO \left( 2^{ |\varphi| - \kappa(\varphi)} 3^{\kappa(\varphi)} \right)$. We can appreciate that the runtimes for model-checking the rLTL specification scale in the same manner as those for the corresponding LTL model-checking, although slightly higher as expected by the aforementioned automaton size. This validates our theoretical results. This case study is also in agreement with the results presented in Fig.~\ref{fig:AAC} when studying the complexity blow-up between LTL and rLTL.

\section{Conclusion}
\label{sec:conclusion}
The logic rLTL provides a means to formally reason about both correctness and robustness in system design. While its syntax closely resembles that of LTL to ease its adoption, its semantics embeds a notion of robustness expressing that small deviations from the assumptions made at design time should lead to, at most, small violations of the design specifications. In this paper we presented a large fragment of rLTL, for which the verification problem can be efficiently solved by using an automaton of size $\bigO \left( 2^{ |\varphi| - \kappa(\varphi)} 3^{\kappa(\varphi)} \right)$, where $\kappa(\varphi)$ measures the number of unique subformulae of $\varphi$ that contain always and release operators. This bound is closer to the LTL bound of $\bigO \left( 2^{|\varphi|} \right)$ and an improvement to the previously known bound of $\bigO \left(5^{|\varphi|} \right)$. Moreover, at the time of publication there is no known non-trivial lower bound and finding such bound is an open problem. 
The usefulness of this fragment has been demonstrated by a number of case studies showing its expressiveness, the ability to capture robustness, and the benefits of the information the designer gains from the $5$-valued semantics towards refining the system and/or the specifications. Moreover, these advantages come at low computational overhead with respect to LTL model-checking, and a small learning curve from the designer as the syntax of rLTL closely mirrors that of LTL.

\bibliographystyle{alpha}
\bibliography{jMasterBib}

\appendix
\section{Proof of Lemma~\ref{lem:rLTLdecisionprobs}}
\label{sec:appendixRLTL2LTLcomplexity}
\begin{proof}
By using the $\ltl$ operator defined in Table~\ref{table:toLTLop} one translates an rLTL formula $\varphi \in \rLTL(\P)$ to four LTL formualae $\ltl(j,\varphi) \in \LTL(\P)$, $j=1, 2, 3, 4$, as defined in \eqref{eq:eachbit}. 
We show that the number of the unique subformulae resulting by the translation is linear to the size of $\varphi$, i.e.,:
\begin{align}
\label{eq:indgoal}
	\mathrm{card} \left( \bigcup_{j=1}^4 \cl \big( \ltl(j,\varphi) \big) \right) \leq c | \varphi |, 
\end{align}
for some $c \in \N$. We use an induction argument and show that \eqref{eq:indgoal} holds for $c=12$. 

\textbf{Base Case}. Take $\varphi \in \rLTL(\P)$ of length $1$, i.e., $\varphi$ is $p \in \P$. Then the claim holds straightforwardly as:
\mbox{$\mathrm{card} \left( \bigcup_{j=1}^4 \cl(\ltl(j,p)) \right) = \mathrm{card} \left( \bigcup_{j=1}^4 \cl(p) \right) = | p | \leq 12 | p |$}.

\textbf{Induction}. First, consider $\varphi \in \rLTL(\P)$ of the form $\psi_1 \releasedot \psi_2$ and note that $|\varphi| = \left\lvert \psi_1 \right\rvert + \left\lvert \psi_2 \right\rvert + 1$. Then:
\begin{align*}
	&\bigcup_{j=1}^4 \cl \big( \ltl(j,\varphi) \big) = \bigcup_{j=1}^4 \cl(\ltl(j,\psi_1 \releasedot \psi_2))  =
	\cl \big( \ltl(1, \psi_1) \release \ltl(1, \psi_2) \big) \cup \cl \big( \eventu\ltl(2, \psi_2) \lor \eventu\always\ltl(2, \psi_1) \big) \\
	&\hspace*{45mm}
	\cup \cl \big( \eventu\ltl(3, \psi_2) \lor \always\eventu\ltl(3, \psi_1) \big)  \cup \cl \big( \eventu\ltl(4, \psi_2) \lor \eventu\ltl(4, \psi_1) \big) \\
	&=
	\bigcup_{j=1}^4 \cl \big( \ltl(j,\psi_1) \big) \bigcup_{j=1}^4 \cl \big( \ltl(j,\psi_2) \big) 
	\cup \big\{ \ltl(1, \psi_1) \release \ltl(1, \psi_2) \big\} \\
	&\hspace*{40mm}
	\cup \big\{ \eventu\ltl(2, \psi_2), \always\ltl(2, \psi_1), \eventu\always\ltl(2, \psi_1), \eventu\ltl(2, \psi_2) \lor \eventu\always\ltl(2, \psi_1) \big\} \\
	&\hspace*{40mm}
	\cup \big\{ \eventu\ltl(3, \psi_2), \eventu\ltl(3, \psi_1), \always\eventu\ltl(3, \psi_1), \eventu\ltl(3, \psi_2) \lor \always\eventu\ltl(3, \psi_1) \big\} \\
	&\hspace*{40mm}
	\cup \big\{ \eventu\ltl(4, \psi_2), \eventu\ltl(4, \psi_1), \eventu\ltl(4, \psi_2) \lor \eventu\ltl(4, \psi_1) \big\} .	
\end{align*}
In turn the above implies that:
\begin{align*}
	\mathrm{card} \left( \bigcup_{j=1}^4 \cl \big( \ltl(j,\varphi) \big) \right) &\leq \mathrm{card} \left( \bigcup_{j=1}^4 \cl \big( \ltl(j,\psi_1) \big) \right) + \mathrm{card} \left( \bigcup_{j=1}^4 \cl \big( \ltl(j,\psi_2) \big) \right) + 12	\\
	&\leq 12 |\psi_1| + 12 |\psi_2| + 12 = 12 ( \left\lvert \psi_1 \right\rvert + \left\lvert \psi_2 \right\rvert + 1) = 12 |\varphi|, 
\end{align*}
where in the last inequality we used the induction hypothesis from \eqref{eq:indgoal} for $c = 12$. The other operators, with the exception of implication and negation, are similar and, thus, omitted for the sake of conciseness. 

We next consider $\varphi \in \rLTL(\P)$ of the form $\psi_1 \rimplies \psi_2$ and observe, given Table~\ref{table:toLTLop}, that:
\begin{align*}
	\cl \big( \ltl(j,\varphi) \big) \supseteq \cl(\ltl(i,\varphi)), ~ i \geq j, ~ i, j \in \{ 1, 2, 3, 4\}. 
\end{align*}
Then we have that:
\begin{align*}
	&\bigcup_{j=1}^4 \cl \big( \ltl(j,\varphi) \big) = \cl \big( \ltl(1,\varphi) \big) = \cl(\ltl(1,\psi_1 \rimplies \psi_2)) = \cl \left( \bigwedge_{j=1}^4 \ltl(j,\psi_1) \implies \ltl(j,\psi_2) \right) \\
	&=
	\bigcup_{j=1}^4 \cl \big( \ltl(j,\psi_1) \big) \bigcup_{j=1}^4 \cl \big( \ltl(j,\psi_2) \big) 
	\bigcup_{j=1}^4 \big\{ \ltl(j, \psi_1) \implies \ltl(j, \psi_2) \big\} \bigcup_{k=1}^3 \left\{ \bigwedge_{j=k}^4 \ltl(j, \psi_1) \implies \ltl(j, \psi_2) \right\} \\
	\Rightarrow &
	\mathrm{card} \left( \bigcup_{j=1}^4 \cl \big( \ltl(j,\varphi) \big) \right) \leq \mathrm{card} \left( \bigcup_{j=1}^4 \cl \big( \ltl(j,\psi_1) \big) \right) + \mathrm{card} \left( \bigcup_{j=1}^4 \cl \big( \ltl(j,\psi_2) \big) \right) + 7	\\
	&\leq 12 |\psi_1| + 12 |\psi_2| + 7 = 12 ( \left\lvert \psi_1 \right\rvert + \left\lvert \psi_2 \right\rvert + 1) = 12 |\varphi|, 
\end{align*}
The case of negation is similar as $\cl \big( \ltl(j,\varphi) \big) = \cl(\ltl(i,\varphi)), ~ i \neq j, ~ i, j \in \{ 1, 2, 3, 4\}$. This concludes the proof by induction. 

Finally, $| \ltl(j, \varphi) | \leq \mathrm{card} \left( \bigcup_{j=1}^4 \cl \big( \ltl(j,\varphi) \big) \right) \leq 12 | \varphi |$, $j \in \{1, 2, 3, 4\}$, which proves that the translation complexity is linear in the size of the formula, i.e., linear in the number of its unique subformulae, and concludes the proof. 
\end{proof}

\section{Proof of Lemma~\ref{lem:fragmentbounds}}
\label{sec:appendixEfficientFragment}
\begin{proof}
The proof follows from the rLTL semantics as defined in Table~\ref{table:toLTLop}, Proposition~\ref{prop:testerStatesBounds}, and Corollary~\ref{cor:recbounds}. We proceed by induction for all operators except for the $\releasedot$ operator, for which we construct a specialized tester.

\textbf{Base Case}. Take a formula $\varphi \in \thefragment$ of length $1$, i.e., $\varphi$ is $p \in \P$. Then we get \mbox{$\left\lvert \T_p \right\rvert = 2^{\left\lvert\varphi \right\rvert} = 2$}, which is the higher possible number of states here, and the claim holds.

\textbf{Induction}. First, consider formulae $\varphi \in \rLTL(\P)$ of the form $\op(\psi)$, where $\op$ is any unary rLTL operator. Note that $|\varphi| = |\psi| + 1$.	\\
\hspace*{0.5em}1) If $\op$ is $\ralways$, from \eqref{eq:TEAp} we obtain for $j \in \{1, 2, 3, 4 \}$: 
	\begin{align*}
                \left\lvert \T_{\ltl(j,\ralways \psi)} \right\rvert \leq 3\left\lvert \T_{\ltl(j,\psi)} \right\rvert \leq 3 \cdot 2^{\left\lvert \psi \right\rvert - \kappa(\psi)}3^{\kappa(\psi)} \leq  2^{\left\lvert \psi \right\rvert - \kappa(\psi) +1-1}3^{\kappa(\psi)+1}  = 2^{\left\lvert \varphi \right\rvert - \kappa(\varphi)}3^{\kappa(\varphi)} 
                ,
	\end{align*}
	which holds for the second and third bit.	\\
\hspace*{0.5em}2) If $\op$ is any other unary rLTL operator, from \eqref{eq:negT}, \eqref{eq:allT} we obtain for $j \in \{1, 2, 3, 4 \}$: 
	\begin{align*}
		\left\lvert \T_{\ltl(j,\op(\psi))} \right\rvert  \leq 2\left\lvert \T_{\ltl(i,\psi)} \right\rvert \leq  2^{\left\lvert \psi \right\rvert +1 - \kappa(\psi)}3^{\kappa(\psi)} = 2^{\left\lvert \varphi \right\rvert - \kappa(\varphi)}3^{\kappa(\varphi)}	
		,
	\end{align*}
	where $i = 1$ if $\op$ is $\lnot$, and otherwise $i=j$.

Consider now $\varphi \in \rLTL(\P)$ of the form $\op(\psi_1, \psi_2)$, where $\op$ is a binary operator. Its length is $|\varphi| = \left\lvert \psi_1 \right\rvert + \left\lvert \psi_2 \right\rvert + 1$.	\\
\hspace*{0.5em}1) If $\op$ is either $\land$ or $\lor$, from \eqref{eq:binT} we obtain for $j \in \{1, 2, 3, 4 \}$: 
	\begin{align*}
		\left\lvert \T_{\ltl(j,\op(\psi_1, \psi_2))} \right\rvert \leq \left\lvert \T_{\ltl(j,\psi_1)} \right\rvert \cdot \left\lvert \T_{\ltl(j,\psi_2)} \right\rvert \leq 2^{\left\lvert \psi_1 \right\rvert + \left\lvert \psi_2\right\rvert  - \kappa(\psi_1) - \kappa(\psi_2)} 3^{\kappa(\psi_1) + \kappa(\psi_2)} \leq 2^{\left\lvert \varphi \right\rvert - \kappa(\varphi)}3^{\kappa(\varphi)} 
		.
	\end{align*}
\hspace*{0.5em}2) If $\op$ is $\untildot$,  from \eqref{eq:exotBinT} we obtain for $j \in \{1, 2, 3, 4 \}$: 
	\begin{align*}
		\left\lvert \T_{\ltl (i,\psi_1 \untildot \psi_2)} \right\rvert \leq 2 \left\lvert \T_{\ltl(j,\psi_1)} \right\rvert \cdot \left\lvert \T_{\ltl(j,\psi_2)} \right\rvert \leq 2^{\left\lvert \psi_1 \right\rvert  + \left\lvert \psi_2 \right\rvert +1 - \kappa(\psi_1) - \kappa(\psi_2)} 3^{\kappa(\psi_1) + \kappa(\psi_2)} \leq 2^{\left\lvert \varphi \right\rvert - \kappa(\varphi)}3^{\kappa(\varphi)} 
		.
	\end{align*}
This concludes the proof by induction. 

Finally, we need to prove the same bounds for the $\releasedot$ operator. This case is slightly more tricky, but it suffices to show that for any \mbox{$\varphi, \psi \in \LTL(\P)$}, there exist appropriate testers such that:
\begin{align}
|\T_{\varphi \release \psi}| & \leq 3 \cdot |\T_{\varphi}| \cdot |\T_{\psi}|, \label{eq:bit1}\\
|\T_{\eventu \always \psi \lor \eventu \varphi}| & \leq 3 \cdot |\T_{\varphi}| \cdot |\T_{\psi}|, \label{eq:bit2} \\
|\T_{\always \eventu \psi \lor \eventu \varphi}| & \leq 3 \cdot |\T_{\varphi}| \cdot |\T_{\psi}|, \label{eq:bit3} \\
|\T_{\eventu \psi \lor \eventu \varphi}|  & \leq 3 \cdot |\T_{\varphi}| \cdot |\T_{\psi}|, \label{eq:bit4}
\end{align}
which by Table~\ref{table:toLTLop} are the testers for the corresponding $4$ LTL formulae due to the $\releasedot$ operator.	\\
\hspace*{0.5em}1) Proof of \eqref{eq:bit1}:  The inequality follows from \eqref{eq:exotBinT}. Notice that we can replace the constant $3$ by a $2$ here. \\
\hspace*{0.5em}2) Proof of \eqref{eq:bit2}: 
The LTL formulae $\eventu \always q \lor \eventu p$ and $\eventu ( \always q \lor p )$ are semantically equivalent for any $p, q \in \P$. Thus, we construct a tester for \mbox{$\eventu(\always q \lor p)$} in Figure~\ref{fig:bit2} by following the composition rules. This tester has at most 3 nodes that assign the same value to the atomic propositions $p$ and $q$: two states assign $(x_p, x_q, x_\varphi) = (0,1,1)$, and one state assigns $(x_p, x_q, x_\varphi) = (0,1,0)$. From this and Proposition~\ref{prop:testerStatesBounds} we conclude \eqref{eq:bit2}.	\\
\hspace*{0.5em}3) Proof of \eqref{eq:bit3}: We provide a \emph{specialized tester} for the formula $\varphi$ of the form $\always\eventu q \lor \eventu p$ in Figure~\ref{fig:bit3} and prove its correctness, i.e., show that the tester is both \emph{sound} and \emph{complete}~\cite[Section 5]{pnueli2008temporaltesters}. Recall from Definition~\ref{def:temporaltesters} that given a computation $\gamma$, we let $\sigma(\gamma) \in \left(2^\mathcal{P}\right)^\omega$ be the word $\sigma(\gamma) = \sigma_0(\gamma) \sigma_1(\gamma) \dots$ 
where $\sigma_t(\gamma)$ is the subset of $\mathcal{P}$ defined by $p \in \sigma_t(\gamma)$, if and only if, $x_p^{(t)}=1$. 
Soundness is defined as follows.

\begin{definition}[Soundness]
Given a formula $\varphi \in \LTL(\P)$, a tester $\T_{\varphi}$ is \emph{sound} if for all computations 
\mbox{$\gamma = x^{(0)} x^{(1)} \dots $ of $\T_{\varphi}$}, we have that $x^{(t)}_\varphi = 1$, if and only if, $W(\sigma(\gamma)_{t\dots}, \varphi) = 1$, where $\sigma(\gamma)_{t\dots}$ is the suffix of $\sigma(\gamma)$ starting at the $t$-th position. 
\end{definition} 

We prove soundness by considering all possible initial states for computations $\gamma = x^{(0)} x^{(1)} \dots$ of $\T_{\varphi}$:
\begin{itemize}
	\item If $x^{(0)}$ is any of the three states on the left, then at time $t=0$, we have \mbox{$W\left(\sigma(\gamma), \varphi \right) = 0$}.
	\item If $x^{(0)}$ is any of the two middle states, then $W(\sigma(\gamma), p) = 1$, hence $W(\sigma(\gamma), \varphi) = 1$.
	\item If $x^{(0)}$ is any of the rightmost states, then either the computations visit one of the middle states, or never do so. In the first case, we have $W(\sigma(\gamma), \eventu p) = 1$. In the second case, they had to visit the bottom right node infinitely often due to the justice requirements. Therefore, $W(\sigma(\gamma), \always \eventu q) = 1$. Combining the two cases results in $W(\sigma(\gamma), \varphi) = 1$.
 \end{itemize}
We conclude that the tester is sound and move onto proving its completeness.

\begin{figure}[t!]
	\subfigure[Tester $\T_{\eventu \always q \lor \eventu p}$.]{\label{fig:bit2} \includegraphics[width=0.45\textwidth]{./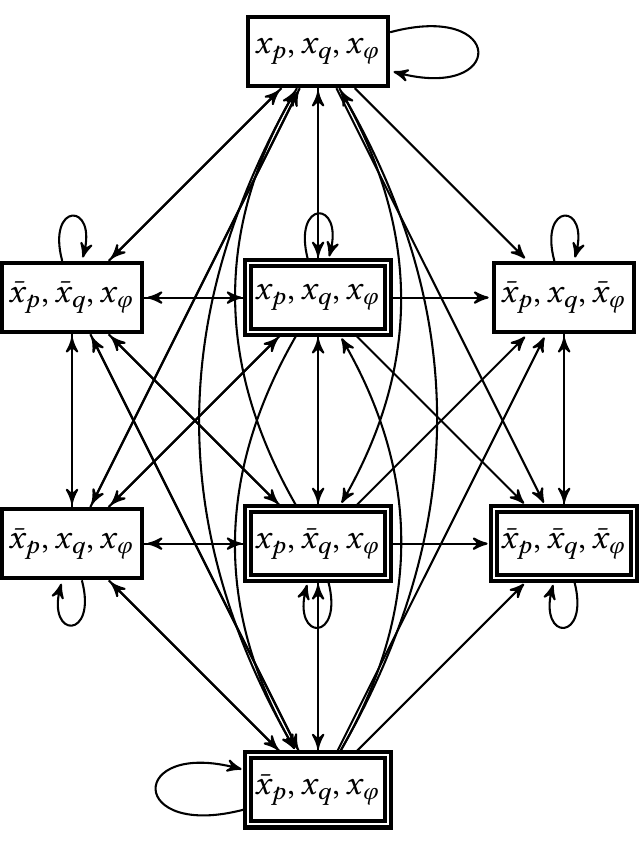}}
	\subfigure[Tester $\T_{\always \eventu q \lor \eventu p}$.]{\label{fig:bit3} \includegraphics[width=0.45\textwidth]{./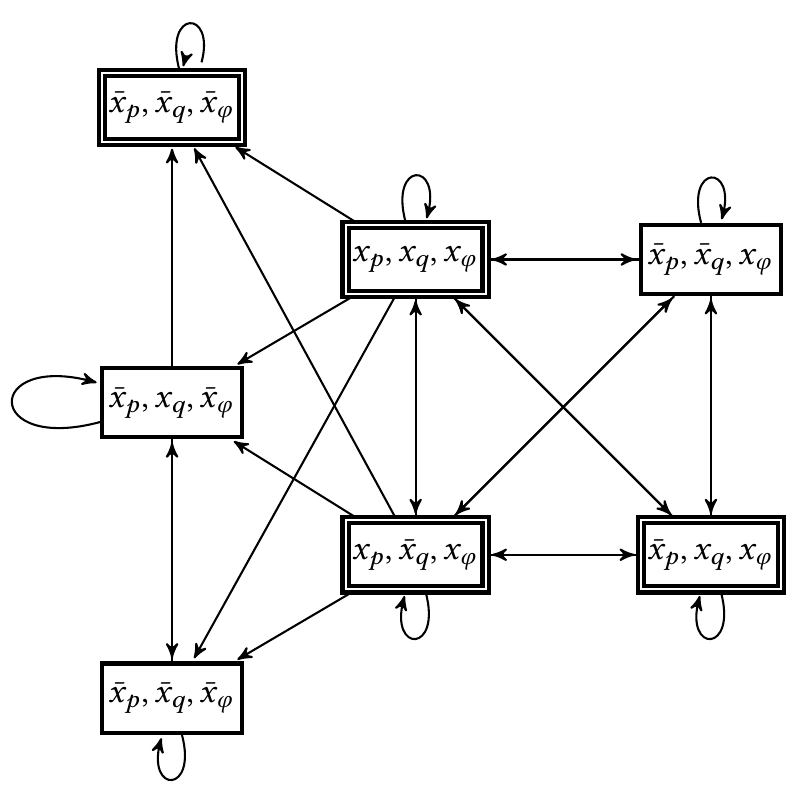}}
	\caption{All states are initial, double line states are contained in the set of justice requirements.}
\end{figure}

\begin{definition}[Completeness]
Given a formula $\varphi \in \LTL(\P)$, a tester $\T_{\varphi}$ is \emph{complete} if for any word $\sigma \in \left(2^\P\right)^\omega$, there exists a computation $\gamma$ of $\T$, such that 
$\forall t \geq 0$, $x^{(t)}_\varphi = 1$, if and only if, $W\left(\sigma_{t\dots}, \varphi\right) = 1$, where $\sigma_{t\dots}$ is the suffix of $\sigma$ starting at the $t$-th position. 
\end{definition} 

To prove completeness, we consider words  $\sigma \in \{p, q\}^\omega$ and different cases based on which subformulae of $\varphi$ they satisfy:
\begin{itemize}
	\item If $\sigma$ satisfies $\eventu p$, pick a computation with initial state either in the middle or in the right parts of the tester. The computation remains in these parts and visits middle states as many times as $p$ occurs in $\sigma$.  If it occurs infinitely often, then the computation satisfies the justice requirements. Otherwise, there is a $t > 0$ after which $\sigma_{t\dots}$ satisfies $\always \lnot p$, which is discussed next.
	\item If $\sigma$ satisfies $\always \lnot p$, we are only interested in the evolution of the $x_q$ variables. In particular, the corresponding computations are to remain either in the left or in the right parts of the tester. Hence, we can ignore the two middle nodes here. To conclude, observe that by doing so, we obtain a tester for $\always \eventu q$ (see  Figure~\ref{fig:T_FGp} and the fact that $\eventu \always \psi $ is equivalent to $\lnot \always \eventu \lnot \psi$).
\end{itemize}

The tester is therefore sound and complete. The constant $3$ in \eqref{eq:bit3} is obtained by counting the corresponding number of nodes in $\T_{\varphi}$ for all combinations of $x_p$ and $x_q$, and then using Proposition~\ref{prop:testerStatesBounds}. There are 3 nodes such that $x_p = x_q = 0$. \\
\hspace*{0.5em}4) Proof of \eqref{eq:bit4}: The proof here is direct when considering $\eventu p \lor \eventu q = \eventu(p \lor q) = \true \until(p \lor q)$ and equations \eqref{eq:binT} and \eqref{eq:exotBinT}.

This proves the bound for the $\releasedot$ operator and concludes the proof. 

\end{proof}

\end{document}